\newtheorem{theorem}{Theorem}
\newtheorem{lemma}[theorem]{Lemma}
\newtheorem{proposition}[theorem]{Proposition}
\newtheorem{corollary}[theorem]{Corollary}
\theoremstyle{definition}
\newtheorem{definition}[theorem]{Definition}
\newtheorem{example}[theorem]{Example}
\newtheorem*{conjecture}{Conjecture}
\newtheorem*{dfn}{Definition}
\theoremstyle{remark}
\newtheorem*{remark}{Remark}
\newcommand{\bmaps}{\ensuremath{\mathrm{B}}}
\newcommand{\complexes}{\ensuremath{\mathbbm{C}}}
\newcommand{\reals}{\ensuremath{\mathbbm{R}}}
\newcommand{\malg}[1]{\ensuremath{M_{#1}(\complexes)}}
\newcommand{\bbone}{\ensuremath{\mathbbm{1}}}
\newcommand{\id}{\ensuremath{\mathrm{id}}}
\newcommand{\transpose}{\ensuremath{\tau}}
\newcommand{\trpse}{\ensuremath{^{\scriptscriptstyle\mathrm{T}}}}
\newcommand{\ptranspose}{\ensuremath{\tau_{\scriptscriptstyle\mathrm{P}}}}
\newcommand{\norm}[1]{\ensuremath{\Vert#1\Vert}}
\newcommand{\abs}[1]{\ensuremath{\vert#1\vert}}
\renewcommand{\set}[1]{\ensuremath{\{#1\}}}
\renewcommand{\Set}[1]{\ensuremath{\bigg\{#1\bigg\}}}
\newcommand{\setdef}{\ensuremath{\,\,\vert\,\,}}
\newcommand{\Setdef}{\ensuremath{\,\,\big\vert\,\,}}
\newcommand{\Cstar}{\ensuremath{C^*}}
\newcommand{\conj}{\ensuremath{^*}}
\newcommand{\cconj}[1]{\ensuremath{\overline{#1}}}
\newcommand{\defeq}{\ensuremath{\operatorname{:\!=}}}
\newcommand{\maps}{\ensuremath{\mathcal{L}}}
\newcommand{\pmaps}{\ensuremath{\maps^+}}
\newcommand{\cpmaps}{\ensuremath{\maps^{\scriptscriptstyle\mathrm{CP}}}}
\newcommand{\diagonal}[1]{\ensuremath{\operatorname{diag}_{#1}(\complexes)}}
\let\diag=\diagonal
\newcommand{\restr}[1]{\ensuremath{\vert_{#1}}}
\newcommand{\hilbert}[1]{\ensuremath{\mathfrak{#1}}}
\newcommand{\algebra}[1]{\ensuremath{\mathfrak{#1}}}
\newcommand{\gebp}{\ensuremath{\ge_{\mathrm{bp}}}}
\newcommand{\symmetries}{\ensuremath{\mathcal S}}
\DeclareMathOperator{\Tr}{Tr}
\DeclareMathOperator{\ad}{Ad}
\DeclareMathOperator{\lspan}{span}
\def\scaled{\let\onleft=\left\let\onright=\right}
\def\unscale{\let\onleft=\relax\let\onright=\relax}
\newcommand{\setdzero}{\ensuremath{\mathfrak D_0}}
\newcommand{\setd}{\ensuremath{\mathfrak D}}
\begin{document}

\title{On the structure of positive maps II: low dimensional matrix algebras}
\author{Władysław A. Majewski\thanks{\small\href{mailto:fizwam@ug.edu.pl}{\nolinkurl{fizwam@ug.edu.pl}}} \qquad
Tomasz I. Tylec\thanks{\small\href{mailto:ttylec@gmail.com}{\nolinkurl{ttylec@gmail.com}}}
\\
{\small Institute of Theoretical Physics and Astrophysics, University of Gdańsk}}
\maketitle

\begin{abstract}
 We use a new idea that emerged in the examination of exposed positive maps
 between matrix algebras to investigate in more detail the difference between
 positive maps on $\malg 2$ and $\malg 3$. Our main tool stems from classical
 Grothendieck theorem on tensor product of Banach spaces and is an older and
 more general version of Choi-Jamiołkowski isomorphism between positive maps
 and block positive Choi matrices. It takes into account the correct topology
 on the latter set that is induced by the uniform topology on positive maps.
 In this setting we show that in $\malg 2$ case a large class of nice positive
 maps can be generated from the small set of maps represented by self-adjoint
 unitaries, $2 P_x$ with $x$ maximally entangled vector and $p\otimes\bbone$
 with $p$ rank 1 projector. We show why this construction fails in $\malg 3$
 case. There are also similarities. In both $\malg 2$ and $\malg 3$ cases any
 unital positive map represented by self-adjoint unitary is unitarily
 equivalent to the transposition map. Consequently we obtain a large family of
 exposed maps. We also investigate a convex structure of the Choi map, the
 first example of non-decomposable map. As a result the nature of the Choi map
 will be explained. This gives an information on the origin of appearance of
 non-decomposable maps on $\malg 3$.
\end{abstract}

\section{Introduction}
\label{sec:introduction}

Positive maps between $n\times n$ matrix algebras play an important role in the
entanglement theory as they can be used do classify entangled states on two
$n$-level quantum systems. Furthermore, it seems that positive (not only
completely positive) maps play an important role in description of some
special dynamical systems (see e.g. \cite{majewski2007non} and
\cite{shaji2005s}). The problem of characterizing all positive maps was
unsolved for over 50 years even for matrix of dimension $n=3$ or higher. Very
recently, a general characterization of unital positive maps, for finite
dimensional case, was given in \cite{Majewski:2010fk}. To complete an
analysis of the structure of positive maps, it is natural to ask a question
what is an essential difference between simple case of maps between $n=2$
matrix algebras and maps between $n=3$ matrix algebras. In that way one hopes
to fully understand the origin of appearance of non-decomposable maps for the
$n=3$ case.

In this paper we will shed some light on this problem using new idea. It
originated from the attempt to characterize exposed points of the set of
positive maps between matrix algebras~\cite{Majewski:2010fk}. In that work the
following set of Choi matrices naturally emerges
\begin{equation*}
  \tilde{\mathfrak D} = \set{ \text{symmetries}, n P_x, p\otimes\bbone},
\end{equation*}
where symmetries are selfadjoint unitaries, $x$ is fully entangled vector on
$\complexes^n\otimes\complexes^n$ and $p$ is some rank one projector. We will
show that this set is rich enough to describe all regular extreme positive
maps in the $n=2$ case (i.e.\ maps with the property that their restriction to
diagonal subalgebra is still extreme, cf. Def.~\ref{dfn:regpos}). We will also
show at which point it fails in the $n=3$ case. To further examine $n=3$ case
we explore relation of Choi matrices given by symmetries to Choi matrix of
transposition map. Finally analysis of the convex structure of Choi map will
be presented. This gives now information on the nature of the first example of
non-decomposable map. 

The article is organized as follows. It the Section~\ref{sec:preliminaries} we
recall some basic notions and introduce useful tools. In the
Section~\ref{sec:extr-points-pmapsd} we consider the case of extremal positive
maps from abelian algebra to $\malg n$. This is our main tool in the
Section~\ref{sec:extr-posit-maps}, where we discuss the role of the set
$\tilde{\mathfrak D}$ in the set of regular extreme maps. In the
Section~\ref{sec:choi_map} we briefly discuss the convex structure of Choi
map. Finally, the Section~\ref{sec:properties-of-symm} is devoted to the study
of Choi matrices given by self-adjoint unitaries in $n=3$ dimensional case.
Some final remarks are given in Section \ref{sec:fr}.

\section{Preliminaries}
\label{sec:preliminaries}

\subsection{Basic definitions and notation}
\label{sec:basic-notation}

By the $a\trpse$ we will denote as usual the transpose of the matrix $a$.
Occasionally when the function-of-argument notation will be more convenient we
will use $\transpose(a)$ to denote transposition map. By $\id$ we will denote
identity map. The $\circ$ will represent ordinary composition of maps. We
implicitly assume that all discussed maps between matrix algebras are linear.

Recall that the linear map $\phi\colon\malg{n}\to\malg{m}$ between the algebra
of $n\times n$ matrices and $m\times m$ matrices is called \emph{positive}
when it maps positive semidefinite matrices\footnote{We consider a matrix to
be positive semi-definite when its spectrum lies on positive half-line.}
(denoted by $\malg{n}^+$) into positive semidefinite matrices. We will denote
the set of all positive maps from $\malg{n}$ to $\malg{m}$ by
$\pmaps(\malg{n},\malg{m})$. We will write $\pmaps(\malg{n})$ instead of
$\pmaps(\malg{n},\malg{n})$. A map is called \emph{completely positive} when
maps $\phi\otimes \id\colon\malg{n}\otimes \malg{k}\to\malg{m}\otimes\malg{k}$
are positive for any $k$. We will denote by $ \cpmaps(\malg{n},\malg{m})$ the
set of all completely positive maps from $\malg n$ to $\malg m$. A map is
called \emph{completely copositive,} if $\transpose\circ\phi$ is a completely
positive map. A positive map is called \emph{decomposable} if it can be
written as a sum of completely positive and completely copositive map. It is
well known that in the case $\pmaps(\malg 2), \pmaps(\malg 2, \malg 3),
\pmaps(\malg 3, \malg 2)$ all maps are decomposable
\cite{woronowicz1976positive,choi1975completely}.

More generally, we can consider a positive and completely positive maps from a
\Cstar-algebra $\algebra A$ into algebra of bounded operators $\bmaps(\hilbert
H)$ acting on some Hilbert space $\hilbert H$. In that case we will denote by
$ \pmaps(\algebra A, \bmaps(\hilbert H))$ and $\cpmaps(\algebra
A,\bmaps(\hilbert H))$ sets of positive and completely positive maps
respectively. The case of $\malg{n}$ is a special case of this general
approach as matrix algebra is a finite dimensional \Cstar-algebra.
Another example, important in this article, is the set of positive maps from
the algebra of continuous complex-valued functions on a locally compact
Hausdorff space (this is canonical example of a commutative \Cstar-algebra) to some
matrix algebra. It is well known that for a commutative $\algebra A$ the sets
$\pmaps(\algebra A,\bmaps(\hilbert H))$ and $\cpmaps(\algebra
A,\bmaps(\hilbert H))$ are equal.

A linear map $\phi\colon\malg{n}\to\malg{m}$ is called \emph{unital} if
$\phi(\bbone_n) = \bbone_m$ ($\bbone_n$ denotes identity matrix in $\malg{n}$;
if the dimension will be clear from the context we will drop index $n$). Norm
of a map $\phi\colon\malg{n}\to\malg{m}$ is defined as usual, i.e.
$\norm{\phi} = \sup\set{\norm{\phi(a)}\setdef a\in\malg{n}, \norm{a} = 1}$,
where $\norm{a}$ for $a\in\malg{n}$ denotes operator norm. 

The set of all positive maps is a convex cone in the set of all linear
and continuous maps. The subset of normalized, unital positive maps is
a convex subset of the set of all positive maps. The subset of
normalized and unital completely positive maps is a convex subset of 
the set of normalized unital positive maps.

\subsection{Isomorphism between functionals and states}

The relation between mapping spaces and continuous bilinear forms on a
tensor products follows from the works of Grothendieck
\cite{grothendieck1955produits}. In the general setting it was already
known in 1960s (cf. \cite{wickstead1973linear}), and later was
reformulated in the linear algebra terms for finite dimensional case
by Choi and Jamiołkowski (\cite{choi1975completely} and
\cite{jamiokowski1972linear}) and now is widely known as
Choi-Jamiołkowski isomorphism. However, as the underlying geometry
will play the crucial role in the sequel we will use following
consequence of the Grothendieck construction.

\begin{lemma}[cf. \cite{stormer1986extension}]
  \label{thm:basic-lemma}
  There is an isometric isomorphism between
  $\maps(\malg{n},\malg{n})$ and bilinear forms in
  $(\malg{n}\otimes_\pi\malg{n})\conj$
  given by
  \begin{equation*}
    \tilde\phi \big(\sum_i^k a_i\otimes b_i\big) 
    = \sum_i^k \Tr \left(\phi(a_i)b_i\trpse\right).
  \end{equation*}
  Moreover, the map $\phi\in\pmaps(\malg{n},\malg{n})$ if and only if
  $\tilde\phi$ is positive on $\malg{n}^+\otimes_\pi\malg{n}^+$.
\end{lemma}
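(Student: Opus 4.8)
The plan is to treat the two assertions separately. The isometric isomorphism is, at bottom, just the universal property of the projective tensor product, whereas the positivity correspondence is an elementary computation on cones resting on the explicit formula for $\tilde\phi$. Throughout I would identify a bounded bilinear form $B$ on $\malg{n}\times\malg{n}$ with the functional it induces on $\malg{n}\otimes_\pi\malg{n}$: by the defining property of the projective norm these coincide isometrically, the norm of $B$ being $\sup\set{\abs{B(a,b)}\setdef\norm a\le1,\ \norm b\le1}$. It is precisely the choice of the \emph{projective} (rather than injective or some intermediate) tensor norm that makes the dual equal to \emph{all} bounded bilinear forms with this supremum norm, and this is what lets the uniform topology be transported correctly onto the Choi matrices.

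Setting $B_\phi(a,b)\defeq\Tr(\phi(a)b\trpse)$, I would first check that $\phi\mapsto B_\phi$ is a linear bijection onto all bilinear forms. Linearity is immediate. The trace pairing $(y,b)\mapsto\Tr(y b\trpse)=\sum_{ij}y_{ij}b_{ij}$ is nondegenerate on $\malg{n}$ and hence identifies $\malg{n}$ with its dual $(\malg{n})\conj$ (the latter carrying the trace norm); consequently, for each fixed $a$ the functional $b\mapsto B_\phi(a,b)$ is an arbitrary element of $(\malg{n})\conj$ and pins down $\phi(a)$ uniquely, while conversely every bilinear form arises from exactly one linear $\phi$. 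This exhibits the correspondence as the canonical chain $\maps(\malg{n},\malg{n})\cong\maps(\malg{n},(\malg{n})\conj)\cong(\malg{n}\otimes_\pi\malg{n})\conj$.

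For the isometry I would compute the induced norm explicitly. For fixed $a$, since $b\mapsto b\trpse$ preserves the operator norm and the trace-class norm is dual to the operator norm under the pairing $(Y,c)\mapsto\Tr(Yc)$, one gets $\sup\set{\abs{\Tr(\phi(a)b\trpse)}\setdef\norm b\le1}=\norm{\phi(a)}_1$. Taking the supremum over $\norm a\le1$ shows that the norm of $\tilde\phi$ in $(\malg{n}\otimes_\pi\malg{n})\conj$ equals the norm of $\phi$ regarded as a map $(\malg{n},\norm{\cdot})\to(\malg{n},\norm{\cdot}_1)$; this is the norm on $\maps(\malg{n},\malg{n})$ for which the correspondence is isometric, and in finite dimensions it induces the same uniform topology the paper wishes to use. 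I expect this norm bookkeeping — tracking the transpose and the operator-norm/trace-norm duality, and insisting that the projective norm is the one in play so the dual is genuinely all bounded bilinear forms — to be the only delicate step; everything else is formal.

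Finally, the positivity equivalence uses $\tilde\phi(a\otimes b)=\Tr(\phi(a)b\trpse)$. If $\phi$ is positive and $a,b\in\malg{n}^+$, then $\phi(a)\ge0$ and $b\trpse=\cconj b\ge0$ (transposition preserves positivity), so $\Tr(\phi(a)b\trpse)\ge0$, giving $\tilde\phi\ge0$ on $\malg{n}^+\otimes_\pi\malg{n}^+$. Conversely, suppose $\Tr(\phi(a)b\trpse)\ge0$ for all $a,b\in\malg{n}^+$. Fixing $a\ge0$ and letting $c=b\trpse$ run over all of $\malg{n}^+$ yields $\Tr(\phi(a)c)\ge0$ for every $c\ge0$; testing against rank-one $c=vv\conj$ gives $v\conj\phi(a)v\ge0$ for all vectors $v$, i.e.\ $\phi(a)\ge0$. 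Hence $\phi$ is positive, which closes the argument.
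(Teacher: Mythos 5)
Your overall architecture is fine, and note at the outset that the paper offers no proof of this lemma at all --- it is quoted from St{\o}rmer's extension paper and the Grothendieck construction --- so the comparison is against the statement and the paper's conventions rather than an in-paper argument. Within your proposal, the bijection step (finite-dimensional duality via the nondegenerate pairing $(y,b)\mapsto\Tr(yb\trpse)$) and the positivity equivalence are correct and complete: transposition preserves positive semidefiniteness, $\Tr(yc)\ge 0$ for $y,c\ge0$, and conversely testing against rank-one $c=vv\conj$ recovers $\phi(a)\ge0$.

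The genuine gap is in the isometry step, and it is exactly the ``norm bookkeeping'' you yourself flagged as delicate. The paper's projective norm is \emph{mixed}: $\pi(x)=\inf\set{\sum_i\norm{a_i}\,\norm{b_i}_1}$, with the operator norm on the first factor and the \emph{trace} norm on the second. You instead dualize against the ball $\norm{b}\le1$ (operator norm on both factors), obtain $\sup\set{\abs{\Tr(\phi(a)b\trpse)}\setdef\norm{b}\le1}=\norm{\phi(a)}_1$, and conclude that the isometry holds for $\phi$ viewed as a map $(\malg{n},\norm{\cdot})\to(\malg{n},\norm{\cdot}_1)$. That is the correct dual-norm computation for the all-operator-norm projective tensor product, but it is not the paper's $\pi$, and it produces an isometry with respect to a norm on $\maps(\malg{n},\malg{n})$ that the paper never uses. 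With the paper's $\pi$, the dual norm of $\tilde\phi$ is $\sup\set{\abs{\Tr(\phi(a)b\trpse)}\setdef\norm{a}\le1,\ \norm{b}_1\le1}$; since transposition preserves the trace norm and $\sup\set{\abs{\Tr(yc)}\setdef\norm{c}_1\le1}=\norm{y}$, this equals $\sup_{\norm{a}\le1}\norm{\phi(a)}=\norm{\phi}$, i.e.\ the standard map norm defined in Section~\ref{sec:basic-notation}, which is what the lemma asserts. The discrepancy is not cosmetic: the $\alpha$-norm of Lemma~\ref{thm:alpha-norm-lem}, $\alpha(\rho)=\max\abs{\Tr\rho\, s\otimes p}$ over symmetries $s$ (operator norm $1$) and rank-one projectors $p$ (trace norm $1$), and the normalization $\alpha(\rho_\phi)=1$ used throughout the paper, encode precisely the mixed convention and would be false under yours. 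The fix is local: replace the test ball $\norm{b}\le1$ by $\norm{b}_1\le1$ and invoke the other direction of the trace-norm/operator-norm duality; the rest of your argument then goes through unchanged.
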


The $\malg{n}\otimes_\pi\malg{n}$ that appeared in the Lemma is by
definition a Banach space completion of an algebraic tensor product
in the projective norm given by
\begin{equation*}
  \pi(x) = \inf \Set{\sum_i^k\norm{a}\norm{b}_1\Setdef x=\sum_i^k
    a_i\otimes b_i, a_i\in\malg{n},b_i\in\malg{n}}
\end{equation*}
The norm $\norm{\cdot}_1$ denotes the trace norm,
i.e. $\norm{a}_1 = \Tr\abs{a}=\Tr(a\conj a)^{1/2}$.

As we work on finite dimensional spaces, we can represent the bilinear
form $\tilde\phi$ corresponding to a positive map $\phi$ by a density
matrix $\rho_\phi$ given by a well-known formula
\begin{equation}
  \label{eq:choi-isom}
  \rho_\phi = \sum_{ij} E_{ij} \otimes \phi(E_{ij}),
\end{equation}
where $E_{ij}$ are matrix units. The positivity condition from the
Lemma~\ref{thm:basic-lemma} can now be restated: a map $\phi$ is
positive if and only if corresponding $\rho_\phi$ is
\emph{block-positive}, what we denote by
\begin{equation*}
\rho_\phi\gebp0\quad \text{ iff }\quad  (x\otimes y, \rho_\phi x\otimes y)\ge 0, 
\quad \forall 
x,y\in\complexes^n.
\end{equation*}

Very important feature of the cited Lemma~\ref{thm:basic-lemma} is the
fact that it establishes isometric isomorphism, thus normed maps are
mapped into normed functionals. But as these functionals are defined
on projective tensor product, the corresponding functional norm
must be dual to the projective norm. We will denote this norm by
$\alpha$ and the duality tells us that
\begin{equation}
  \label{eq:alpha-norm}
  \alpha(\rho_\phi) = \sup\Set{\frac{\abs{\Tr \rho_\phi
        a}}{\pi(a)}\Setdef a\in\malg{n}\otimes_\pi\malg{n}, a\neq 0}.
\end{equation}
Using this we can specify the set of Choi matrices corresponding to
normalized positive maps
\begin{equation*}
  \setdzero \defeq \set{\rho\in\malg{n}\otimes_\pi\malg{n} \setdef
    \rho=\rho\conj, \alpha(\rho)=1, \rho\gebp0},
\end{equation*}
and the set Choi matrices corresponding to normalized, unital maps
(for a detailed justification see~\cite{Majewski:2010fk})
\begin{equation*}
  \setd \defeq \set{\rho\in\malg{n}\otimes\malg{n} \setdef
    \rho=\rho\conj, \alpha(\rho)=1, \rho\gebp0, \Tr \rho = n}.
\end{equation*}
As was mentioned in the Introduction, in the study of exposed points
of the set $\setd$ a distinguished role is played by selfadjoint
unitaries (see \cite{Majewski:2010fk}), thus we recall a definition.

\begin{definition}
 An operator $s$ is called a \emph{symmetry} if it is a selfadjoint unitary,
 i.e. $s=s\conj$ and $s^2=\bbone$. The set of all symmetries in the set
 $\bmaps(\hilbert H)$ of all bounded operators acting on Hilbert space
 $\hilbert H$ will be denoted by $\symmetries(\hilbert H)$. 

 An operator $s$ is called a \emph{partial symmetry} or $e$-symmetry if $s$ is
 selfadjoint and $s^2 = e$, where $e$ is some orthogonal projector on
 $\hilbert H$. 
\end{definition}

Note that any symmetry admits a canonical decomposition $s=p-q$, where
$p,q$ are orthogonal projectors such that $p+q=\bbone$. Namely
$p=1/2(\bbone+s)$ and $q=1/2(\bbone-s)$. In particular we can write
$s=\bbone - 2q$. Symmetries are also useful in computing the
$\alpha$-norm, as we see in following lemma.

\begin{lemma}[\cite{Majewski:2010fk}, Lemma 16]
  \label{thm:alpha-norm-lem}
  Let $\sigma\in\malg n\otimes_\alpha\malg n$, then
  \begin{equation*}
    \alpha(\rho) = \max\scaled\set{\abs{\Tr \rho s\otimes
        p}\setdef s\in\symmetries(\complexes^n),
      p\in\operatorname{Proj}^1(\complexes^n)},
  \end{equation*}
 where $\operatorname{Proj}^1(\complexes^n)$ stands for the set of rank one
 orthogonal projectors on Hilbert space $\complexes^n$. 
\end{lemma}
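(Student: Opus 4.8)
The plan is to route the computation of $\alpha(\rho)$ through the Choi--Jamiołkowski picture and reduce it to the norm of the underlying map. Write $\rho=\rho_\phi$ with $\phi$ a positive map. The duality between the projective norm $\pi$ and $\alpha$ recorded in~\eqref{eq:alpha-norm} identifies $\alpha$ with the injective (dual) norm, so testing against elementary tensors gives
\begin{equation*}
  \alpha(\rho)=\sup\set{\abs{\Tr\big(\rho\,(a\otimes b)\big)}\setdef a,b\in\malg n,\ \norm{a}\le 1,\ \norm{b}_1\le 1}.
\end{equation*}
Since on matrix units one computes $\Tr(\rho_\phi\,(a\otimes b))=\Tr(\phi(a\trpse)b)$, performing the supremum over $b$ first (the operator norm being dual to the trace norm) collapses the inner sup to $\norm{\phi(a\trpse)}$, and then $\alpha(\rho)=\sup_{\norm{a}\le 1}\norm{\phi(a)}=\norm{\phi}$. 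I would take this reformulation $\alpha(\rho)=\norm{\phi}$ as the backbone of the argument.

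One inequality is immediate: $\pi(s\otimes p)=\norm{s}\,\norm{p}_1=1$ for any symmetry $s$ and rank one projector $p$, so each $s\otimes p$ lies in the $\pi$-unit ball and hence $\max_{s,p}\abs{\Tr(\rho\,s\otimes p)}\le\alpha(\rho)$. For the reverse inequality the idea is to exhibit a single admissible pair that already realises $\alpha(\rho)$, and the natural candidate is the symmetry $s=\bbone$. Since $\phi$ is positive, $\phi(\bbone)\ge 0$, whence $\Tr(\rho\,(\bbone\otimes\ket{\zeta}\bra{\zeta}))=\braket{\zeta|\phi(\bbone)|\zeta}\ge 0$, and maximising over unit vectors $\zeta$ gives $\max_{\zeta}\Tr(\rho\,(\bbone\otimes\ket{\zeta}\bra{\zeta}))=\norm{\phi(\bbone)}$ (the top eigenvalue of the positive operator $\phi(\bbone)$, whose numerical radius equals its norm). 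Thus everything reduces to the single identity $\alpha(\rho)=\norm{\phi}=\norm{\phi(\bbone)}$: granting it, $\max_{s,p}\abs{\Tr(\rho\,s\otimes p)}\ge\norm{\phi(\bbone)}=\alpha(\rho)$, closing the sandwich.

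The heart of the matter, and the step I expect to be the main obstacle, is therefore the classical fact that a positive map attains its norm at the unit. I would prove it by Russo--Dye, reducing the supremum defining $\norm{\phi}$ to unitaries, and then, for a unitary $u=\sum_k e^{i\theta_k}P_k$ with spectral projectors $P_k$, writing $\phi(u)=\sum_k e^{i\theta_k}\phi(P_k)$ with each $\phi(P_k)\ge 0$ and $\sum_k\phi(P_k)=\phi(\bbone)$. A double Cauchy--Schwarz estimate,
\begin{equation*}
  \abs{\braket{\zeta|\phi(u)|\omega}}\le\sum_k\abs{\braket{\zeta|\phi(P_k)|\omega}}\le\braket{\zeta|\phi(\bbone)|\zeta}^{1/2}\braket{\omega|\phi(\bbone)|\omega}^{1/2}\le\norm{\phi(\bbone)},
\end{equation*}
then yields $\norm{\phi(u)}\le\norm{\phi(\bbone)}$ for every unitary $u$, hence $\norm{\phi}\le\norm{\phi(\bbone)}$, the opposite inequality being trivial. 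It is exactly here that positivity (equivalently $\rho\gebp 0$) is indispensable: without it the sandwich fails, the identity $\norm{\phi}=\norm{\phi(\bbone)}$ can break down, and with it the whole statement — so the lemma must be read for block positive $\rho$, the Choi matrices of positive maps to which it is applied.
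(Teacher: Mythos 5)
First, a structural point: the paper never proves this lemma --- it is imported verbatim from \cite{Majewski:2010fk} (Lemma 16) --- so there is no internal proof to measure you against; what follows judges your argument on its own terms and against the way the paper actually \emph{uses} the lemma. On its own terms your argument is sound, with one crucial restriction. The duality step $\alpha(\rho_\phi)=\norm{\phi}$, the easy inequality from $\pi(s\otimes p)\le 1$, and the Russo--Dye/Cauchy--Schwarz proof that a \emph{positive} map attains its norm at $\bbone$ are all correct, and together they prove the displayed formula for every block positive $\rho$, i.e.\ for Choi matrices of positive maps.

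The gap is one of scope, and it is not a technicality. The lemma is stated for arbitrary $\sigma\in\malg n\otimes_\alpha\malg n$, and the paper needs it beyond the block positive cone: in Section~\ref{sec:choi_map} the value $\alpha(w^-)=5/3$ is computed from precisely this lemma for a symmetry $w^-$ which the paper stresses is \emph{not} block positive. Your proof says nothing about such $\sigma$, because every step after the easy inequality runs through positivity of $\phi$. Moreover, no repair is possible: for merely hermitian $\sigma$ the formula is false, even for symmetries. Indeed, take $n=2$ and
\begin{equation*}
  \sigma \;=\; \tfrac{1}{\sqrt2}\left(\bbone\otimes\sigma_z+\sigma_x\otimes\sigma_x\right),
\end{equation*}
which is a symmetry ($\sigma=\sigma\conj$ and $\sigma^2=\bbone\otimes\bbone$, since $\sigma_z$ and $\sigma_x$ anticommute). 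Pairing $\sigma$ with $u\otimes b$, where $u=\tfrac{1}{\sqrt2}(\bbone+i\sigma_x)$ is unitary, gives $\Tr \sigma (u\otimes b)=\Tr\left((\sigma_z+i\sigma_x)b\right)$, so by trace-norm/operator-norm duality $\alpha(\sigma)\ge\norm{\sigma_z+i\sigma_x}=2$; on the other hand, checking $s=\pm\bbone$ and traceless symmetries separately shows $\max_{s,p}\abs{\Tr \sigma (s\otimes p)}=\sqrt2<2$. The mechanism is exactly the one you flagged: a hermitian Choi matrix corresponds only to a hermiticity-preserving map, and such a map need not attain its norm on hermitian contractions, so restricting the test operators to symmetries genuinely loses part of the norm. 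Thus your closing remark is mathematically vindicated --- block positivity is a necessity, not a convenience --- but this also means your proof establishes a strictly weaker statement than the lemma as stated and as applied: in particular the computation of $\alpha(w^-)$ in Section~\ref{sec:choi_map} is not covered by what you proved (your argument yields only the one-sided bound $\alpha(w^-)\ge 5/3$, since the easy inequality is the only part that survives without positivity).
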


\subsection{Notions of extremality}
\label{sec:notions-extremality}

In the setting of positive maps different notions of extremality
arise. The most obvious is a notion of extreme point of a convex set. The map
$\phi\in\pmaps(\algebra A,\bmaps(\hilbert H))$ is called \emph{extreme} when it
cannot be written as a convex combination of other positive maps. 

Now consider a completely positive map $\phi\in\cpmaps(\algebra A,\bmaps(\hilbert H))$. One can write it in following way: $\phi = \sum_i t_i\conj \phi_i t_i$, where $t_i\in\bmaps(\hilbert H)$ such that $\sum t_i\conj t_i = \bbone$, all $t_i$ are invertible and all $\phi_i$ are completely positive (there is always a trivial decomposition). Such map is called \emph{\Cstar-extreme} in the set of completely positive maps whenever for all such decompositions all $\phi_i$ are unitarily equivalent to $\phi$ (for details see \cite{Gregg:2008lr}).

Finally, we can define an order structure in $\cpmaps(\algebra A,\bmaps(\hilbert H))$
in the following way \cite{arveson1969subalgebras}: $\psi\le\phi$ when
$\phi-\psi$ is completely positive. Then we call a completely positive map
$\phi$ \emph{pure} if $\psi\le\phi$ implies $\psi=\lambda \phi$ (thus it is
natural generalization of the notion of a pure state).

In general, for completely positive maps following inclusions are valid
\begin{equation*}
  \text{pure maps} \subseteq \text{\Cstar-extreme maps} \subseteq
  \text{extreme maps}
\end{equation*}
In many cases it is known that some of these inclusions are proper. For our
case it will be important to note that in general for maps $\pmaps(C(X), \malg
n)$ there are extreme maps that are not \Cstar-extreme. In the
section~\ref{sec:extr-points-pmapsd} we will reexamine this problem in the
special case of $\pmaps(C(X), \malg 2)$ using the Arveson
characterization~\cite{arveson1969subalgebras} of extreme maps in
$\cpmaps(C(X),\malg n)$.

\begin{dfn}[\cite{arveson1969subalgebras}]
 A family of subspaces $\set{\hilbert M_1,\dots \hilbert M_n}$ of Hilbert space
 $\hilbert H$ is \emph{weakly independent} if whenever there are given
 $\set{T_i\in\bmaps(\hilbert H)}_1^n$, such that the range of $T_i$ and
 $T_i\conj$ lies in $\hilbert M_i$, equality $T_1+\dots+T_n=0$ implies that
 $T_1=\dots=T_n=0$.
\end{dfn}

\begin{remark}[see \cite{arveson1969subalgebras}]
This condition is equivalent to a linear
independence of the family of subspaces $\set{\hilbert N_1,\dots
  \hilbert N_n}$ of $\hilbert H\otimes\hilbert H$, where $\hilbert
N_i\defeq [\xi\otimes\eta\setdef \xi,\eta\in\hilbert M_i]$.
\end{remark}

\begin{theorem}[\cite{arveson1969subalgebras}, Thm 1.4.10 
and also cf. \cite{Stormer:1963lr}]
 \label{thm:arv} 
 Let $X$ be a compact Hausdorff space and let $\hilbert H$ be a
 finite dimensional Hilbert space. Then extreme points of the set of
 unital completely positive maps $C(X)\to\malg{n}$ are maps of the
 form
  \begin{equation*}
    \phi(f) = f(x_1) K_1+\dots+f(x_k) K_k,\quad f\in C(X),
  \end{equation*}
  where $k\ge 1$, $x_1,\dots,x_k$ are distinct points of $X$ and
  $K_1,\dots,K_k$ are positive operators satisfying
  \begin{enumerate}
  \item[(i)] $K_1+\dots+K_k = \bbone$,
  \item[(ii)] $\set{[K_1\complexes^n], \dots, [K_k\complexes^n]}$ is weakly independent family of subspaces, where $[\hilbert h]$ denotes smallest subspace containing subset $\hilbert h$.
  \end{enumerate}
\end{theorem}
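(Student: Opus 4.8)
The plan is to use that $C(X)$ is commutative, so by the fact recalled in Section~\ref{sec:preliminaries} every completely positive map into $\malg n$ is already positive, and a \emph{unital positive} map $\phi\colon C(X)\to\malg n$ is represented by a regular positive $\malg n$-valued Borel measure $E$ on $X$ with $E(X)=\bbone$ via $\phi(f)=\int_X f\,dE$ (the operator-valued Riesz representation). Extremality will be tested through perturbations: $\phi$ is extreme in the unital positive maps exactly when, writing $\phi=\tfrac12(\phi_1+\phi_2)$ with $\phi_1,\phi_2$ unital positive forces $\phi_1=\phi_2$. Throughout I will exploit that a dominated map, i.e. $\psi$ with $\phi-\psi$ positive, is represented by a measure $F\le E$, hence absolutely continuous with respect to $E$.

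For the \emph{sufficiency} direction, assume $\phi(f)=\sum_i f(x_i)K_i$ with (i) and (ii) and suppose $\phi=\tfrac12(\phi_1+\phi_2)$. Since $\tfrac12\phi_1\le\phi$ and $E$ is supported on $\set{x_1,\dots,x_k}$, the representing measure of $\phi_1$ is carried by the same points, so $\phi_1(f)=\sum_i f(x_i)A_i$ with $A_i\ge0$, $\sum_i A_i=\bbone$, and $\tfrac12 A_i\le K_i$. The last inequality forces $[A_i\complexes^n]\subseteq[K_i\complexes^n]$, so the Hermitian operators $T_i\defeq A_i-K_i$ have range (and, being self-adjoint, co-range) inside $[K_i\complexes^n]$ and satisfy $\sum_i T_i=\bbone-\bbone=0$. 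Weak independence (ii) then yields $T_i=0$, i.e. $A_i=K_i$ and $\phi_1=\phi$; hence $\phi$ is extreme.

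For the \emph{necessity} direction the first, and decisive, step is to show that an extreme $\phi$ has finitely supported $E$. Let $\mu\defeq\Tr\circ E$ be the scalar control measure and write $dE=R\,d\mu$ with a measurable density $R(x)\ge0$, $\Tr R=1$ $\mu$-a.e. For any bounded real Borel $g$ with $\int_X g\,dE=0$, the scalar functions $1\pm\epsilon g$ are nonnegative for small $\epsilon$, so $\phi_\pm(f)\defeq\int_X f(1\pm\epsilon g)\,dE$ are unital positive maps averaging to $\phi$. Extremality forces $\phi_+=\phi_-$, i.e. $\int_X fg\,dE=0$ for all $f\in C(X)$; by uniqueness in the Riesz representation the $\malg n$-valued measure $g\,dE$ vanishes, whence $gR=0$ and $g=0$ $\mu$-a.e. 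Thus the linear map $L^\infty_{\reals}(\mu)\to\malg n^{\mathrm{sa}}$, $g\mapsto\int_X g\,dE$, is injective, which is possible only if $\dim_{\reals}L^\infty(\mu)\le n^2$; hence $\mu$, and with it $E$, is purely atomic with at most $n^2$ atoms $x_1,\dots,x_k$. Setting $K_i\defeq E(\set{x_i})$ gives $\phi(f)=\sum_i f(x_i)K_i$ and $\sum_i K_i=\bbone$, which is (i). Finally, to obtain (ii) I argue by contraposition: if the family $\set{[K_i\complexes^n]}$ is not weakly independent, there are operators $T_i$, not all zero, with range and co-range in $[K_i\complexes^n]$ and $\sum_i T_i=0$; taking self-adjoint and anti-self-adjoint parts one may assume the $T_i$ Hermitian. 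Because each $T_i$ lives on $[K_i\complexes^n]$, where $K_i$ is strictly positive, $K_i\pm\epsilon T_i\ge0$ for small $\epsilon$, so $\phi_\pm(f)\defeq\sum_i f(x_i)(K_i\pm\epsilon T_i)$ are distinct unital positive maps with $\phi=\tfrac12(\phi_++\phi_-)$, contradicting extremality.

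I expect the main obstacle to be the finite-support step, where extremality must be converted into atomicity of $E$: the argument hinges on coupling the perturbation criterion with the finite dimension $n^2$ of $\malg n^{\mathrm{sa}}$, and it requires care with the operator-valued Riesz representation, its regularity, and the uniqueness used to conclude $g\,dE=0$. The two weak-independence steps are comparatively routine linear algebra once the support is known to be finite.
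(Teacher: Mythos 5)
Your proof is correct, but there is nothing in the paper to compare it with: Theorem~\ref{thm:arv} is imported verbatim from Arveson \cite{arveson1969subalgebras} (Thm.~1.4.10) and is stated there without proof. The meaningful comparison is therefore with Arveson's original argument, and your route is genuinely different. Arveson proceeds through dilation theory: his Radon--Nikodym theorem for completely positive maps identifies the maps dominated by $\phi$ with operators in the commutant of the minimal Stinespring representation, which yields the extremality criterion that $\phi=V\conj\pi(\cdot)V$ is extreme iff $T\mapsto V\conj TV$ is injective on $\pi(C(X))'$; Theorem 1.4.10 then follows by unwinding this commutant condition for a representation of the commutative algebra $C(X)$ on a space generated by $n$ vectors. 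You replace the dilation machinery by the operator-valued Riesz representation $\phi(f)=\int_X f\,dE$ and two direct perturbation arguments: extremality forces injectivity of $g\mapsto\int_X g\,dE$ on real $L^\infty(\mu)$, $\mu=\Tr\circ E$, so $\dim L^\infty(\mu)$ is at most $n^2$ (the real dimension of the Hermitian part of $\malg n$) and $E$ must be atomic with at most $n^2$ point masses; and weak independence is then exactly the obstruction to the finitely supported perturbations $K_i\mapsto K_i\pm\epsilon T_i$. Both proofs convert extremality into an injectivity statement, and both end with the same linear algebra (your sufficiency step, where $\tfrac12 A_i\le K_i$ gives the range inclusion $[A_i\complexes^n]\subseteq[K_i\complexes^n]$ and weak independence kills $T_i=A_i-K_i$, is essentially Arveson's too). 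What yours buys is elementarity and self-containedness --- scalar measure theory plus finite-dimensional linear algebra, no Stinespring dilation --- with the explicit bound $k\le n^2$ falling out for free; what Arveson's buys is generality, since his criterion works verbatim for noncommutative domains and infinite-dimensional range, which is why his paper sets it up that way. Two details you compress but should record: that each atom of the regular measure $\mu$ is a genuine point mass requires outer regularity (an atom charging no singleton could be covered by finitely many open sets of negligible measure); and in the contraposition step, positivity of $K_i\pm\epsilon T_i$ uses that a Hermitian $T_i$ with range in $[K_i\complexes^n]$ vanishes on the orthogonal complement of that subspace, while $\phi_+\neq\phi_-$ needs a continuous function separating the distinct points $x_i$ (Urysohn). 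Neither is a gap, only bookkeeping.
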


Note that any \Cstar-extreme map $\phi$ in
$\cpmaps(C(X),\complexes^n)$ is also extreme, thus can be
represented in the way showed in the previous theorem. Farenick and
Morenz \cite{farenick1997c} showed that the extreme
$\phi\in\cpmaps(C(X),\complexes^n)$ is \Cstar-extreme if and
only if $K_i$ are orthogonal projectors. This equivalently means that
$\phi$ is multiplicative.

We will relate the commutative case to the noncommutative case of
$\maps(\malg{n})$ by considering the restriction of a positive map
$\phi\in\pmaps(\malg{n})$ to the abelian subalgebra $\diagonal{n}\defeq \set{
a\in\malg{n}\setdef a \text{ is diagonal matrix}}$ of diagonal matrices. It is
well known fact that $a\in \diag n$ can be identified with $a_f\in C(X)$,
$X=\set{1,\ldots,n}$, i.e. the complex valued (trivially) continuous function
on the set $X$. Thus $\cpmaps(\diagonal{n},\malg{n})$ can be identified with
$\cpmaps(C(X),\malg n)$. We introduce following notion.

\begin{definition}
  \label{dfn:regpos} Let $\phi$ be a linear, extreme map in the set
  $\pmaps(\malg{n})$ or $\cpmaps(\malg{n})$. If the map $\phi_0\defeq
  \phi\restr{\diagonal{n}}$ is extreme in the set of
  $\cpmaps(\diagonal{n},\malg{n})$ then we call $\phi$ a \emph{regular extreme
  positive map.}
\end{definition}

\section{Extremality vs. \Cstar-extremality in abelian case}
\label{sec:extr-points-pmapsd}

Firstly, let us consider a special case of
$\cpmaps(C(X),\malg 2)$, where $X=\set{1,2}$. For $\phi$ extreme
in unital $\cpmaps(C(X),\malg 2)$ we conclude from Theorem
\ref{thm:arv} that
\begin{equation*}
  \phi(f) = f(x_0) K_0 \text{ or } \phi(f) = f(1) K_1 + f(2) K_2.
\end{equation*}
The first case implies that $K_0=\bbone$, so corresponding $\hilbert
M_0\defeq [K_0\hilbert H] = \hilbert H = \complexes^2$. Thus any
map of this form is also \Cstar-extreme.

Now take a closer look into the second case.  Let $e_1$ and $e_2$ are
unit vectors corresponding to projections onto $\hilbert M_1 =
[K_1\hilbert H]$ and $\hilbert M_2 = [K_2\hilbert H]$ respectively. As
$\hilbert M_i$ are weakly independent subspaces $\hilbert N_i =
[e_i\otimes e_i]$, $i=1,2$ are linearly independent. Moreover we know
that $K_i$ are positive and rank one operators. But any rank one
operator can be written in the form $\ket x\bra y$, and such operator
is hermitian if and only if $x=y$. So $K_i = \ket{x_i}\bra{x_i}$. But
\begin{equation*}
  \bbone = K_1+K_2 = \ket{x_1}\bra{x_1} + \ket{x_2}\bra{x_2}.
\end{equation*}
Now acting on $x_1$ on the right and taking scalar multiplication from the
left by $x_1$ we get that $(x_1,x_2)=0$ so $K_i$ are orthogonal
projectors. Thus by the Farenick and Morenz result any such map is
also \Cstar-extreme and therefore multiplicative (for details see
\cite{farenick1997c} and \cite{Gregg:2008lr}). As a result we proved
the following.

\begin{lemma}
  Any extreme map in $\cpmaps(C(\set{1,2}), \complexes^2)$ is \Cstar-extreme.
\end{lemma}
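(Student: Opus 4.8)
The plan is to combine the Arveson classification of extreme unital completely positive maps (Theorem~\ref{thm:arv}) with the Farenick--Morenz criterion \cite{farenick1997c}, according to which such an extreme map is \Cstar-extreme precisely when the positive operators $K_i$ appearing in its representation are orthogonal projectors. The whole problem therefore reduces to a single assertion: when $X=\set{1,2}$ and the target is $\malg 2$, extremality already forces every $K_i$ to be an orthogonal projector. I treat the maps as unital throughout, in line with the preceding discussion.

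First I would invoke Theorem~\ref{thm:arv}. Since $X$ has only two points, the number of distinct support points is $k\le 2$, so an extreme unital $\phi$ is either $\phi(f)=f(x_0)K_0$ or $\phi(f)=f(1)K_1+f(2)K_2$, with $\sum_i K_i=\bbone$, each $K_i\ge 0$, and $\set{[K_i\complexes^2]}$ weakly independent. The case $k=1$ is immediate: condition (i) gives $K_0=\bbone$, which is a projector, so the Farenick--Morenz criterion yields \Cstar-extremality at once.

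The substance is the case $k=2$, and the key step is to show that weak independence forces both $K_i$ to have rank one; this is the point I expect to be the main obstacle, as it is the only place where extremality is used in an essential, non-computational way. Here I would use the reformulation from the Remark after the weak-independence definition: the subspaces $\hilbert N_i=[\xi\otimes\eta\setdef \xi,\eta\in\hilbert M_i]$, with $\hilbert M_i=[K_i\complexes^2]$, must be linearly independent in $\complexes^2\otimes\complexes^2$. If some $K_i$ had rank $2$ then $\hilbert M_i=\complexes^2$ and $\hilbert N_i=\complexes^2\otimes\complexes^2$ would exhaust the whole four-dimensional space; linear independence of $\set{\hilbert N_1,\hilbert N_2}$ would then force $\hilbert N_j=\set 0$, i.e.\ $K_j=0$, contradicting the presence of two distinct support points. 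Hence both $K_i$ are positive and of rank one.

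Once rank one is established, the remainder is a short $2\times 2$ computation. Writing each hermitian positive rank-one operator as $K_i=\ket{x_i}\bra{x_i}$ and using $\ket{x_1}\bra{x_1}+\ket{x_2}\bra{x_2}=\bbone$, I would note that $\bbone-\ket{x_1}\bra{x_1}=\ket{x_2}\bra{x_2}$ must again be positive of rank one; since $\ket{x_1}\bra{x_1}$ has eigenvalues $\norm{x_1}^2$ and $0$, this forces $\norm{x_1}=1$ and $x_2\perp x_1$. Thus the $K_i$ are mutually orthogonal rank-one projectors, and a final application of the Farenick--Morenz criterion \cite{farenick1997c} shows that $\phi$ is \Cstar-extreme (indeed multiplicative), settling both cases.
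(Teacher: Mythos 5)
Your proposal is correct and follows essentially the same route as the paper: Arveson's classification (Theorem~\ref{thm:arv}) splits the argument into the cases $k=1$ and $k=2$, the $K_i$ are forced to be rank one, and then $K_1+K_2=\bbone$ forces them to be mutually orthogonal rank-one projectors, so the Farenick--Morenz criterion gives \Cstar-extremality. If anything, you are more explicit than the paper at two points it glosses over: the use of the linear-independence reformulation of weak independence to rule out a rank-2 summand, and the eigenvalue argument replacing the paper's terse computation that $(x_1,x_2)=0$.
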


\bigskip

We will now discuss a more complicated case of $\pmaps(\diag 3,
\malg 3)$. Here, using the Arveson
Theorem~\cite{arveson1969subalgebras} we conclude that the dimensions
of $\hilbert M_i$ can be equal to $1,2,3$. The case of dimension 3 is
trivial, as before. Let us then consider the case when one of
$\hilbert M_i$'s have the dimension equal to 2.

\begin{example}
  \label{thm:3dcex}
  Take
  \begin{align*}
    K_1 &= \ket{e_1}\bra{e_1},\\
    K_2 &= \ket{e_2}\bra{e_2},\\
    K_3 &= \frac12\ket{e_1+e_3}\bra{e_1+e_3}+\ket{e_3}\bra{e_3}.
  \end{align*}
  Then $K_1+K_2+K_3=\bbone + P\equiv S$, where
  $P=\frac12\ket{e_1+e_3}\bra{e_1+e_3}$. Note that $S$ is invertible
  thus we can define
  \begin{equation*}
    \tilde K_i = S^{-\frac12} K_i S^{-\frac12}.
  \end{equation*}
  This does not change the rank of $K_i$ as $S^{-\frac12}$ is a
  non-singular matrix. It is also self-adjoint, thus this operation
  preserves positivity. Moreover $\tilde K_1+\tilde K_2+\tilde
  K_3=\bbone$. Thus we can define an extreme map
  $\cpmaps(C(X),\malg 3)$ by
  \begin{equation*}
    \tilde \phi (f) =  f(x_1)\tilde K_1+f(x_2) \tilde K_2 + f(x_3) \tilde K_3.
  \end{equation*}
  But using the matrix representation we compute that
  \begin{equation*}
    \tilde K_1 \tilde K_3 = \left(
      \begin{array}{ccc}
        \frac{1}{72} \left(5+2 \sqrt{6}\right) & 0 & -\frac{1}{72} \\
        0 & 0 & 0 \\
        -\frac{1}{72} & 0 & \frac{1}{72} \left(5-2 \sqrt{6}\right) \\
      \end{array}
    \right)
  \end{equation*}
  thus $K_1$ and $K_3$ are not orthogonal and by remark following the
  Theorem \ref{thm:arv} we conclude that the map $\tilde \phi$ is not
  multiplicative, so it is not \Cstar-extreme.
\end{example}
 
This example indeed shows that the set of \Cstar-extreme maps in $\cpmaps(\diag
3, \malg 3)$ is indeed a proper subset of the set of all extreme maps. It is
not surprising as even in $\cpmaps(C(\set{1,2,3}),\malg 2)$ there are examples
of extreme maps that are not \Cstar-multiplicative \cite{Gregg:2008lr}.

\section{Extreme positive maps on $2\times2$ vs. $3\times3$ matrices}
\label{sec:extr-posit-maps}

The results from the previous section allows us to get deeper insight
into the structure of the well known case of $\pmaps(\malg 2)$, as
well as understand a bit more the nature of qualitative change when we
increase the dimension by 1.

Fix a normalized unital $\phi\in \pmaps(\malg 2)$. Using the formula
\eqref{eq:choi-isom} we introduce following notation:
\begin{equation*}
  \rho_{\phi} = \sum_{ij} E_{ij}\otimes \phi(E_{ij}) = \sum_{ij}
  E_{ij}\otimes \rho_{ij},\quad\text{where } E_{ij} = \ket{e_i}\bra{e_j}
\end{equation*}
From the definition of $\rho_{ij}$ we immediately get that $\rho_{11}\ge0$,
$\rho_{22}\ge0$, $\rho_{11}+\rho_{22} = \bbone$ and $\rho_{ij} =
\rho_{ji}\conj$. In two dimensional case the structure of $\rho$ can be
explicitly given, namely:

\begin{proposition}
\label{thm:form-of-rho}
  The Choi matrix $\rho_\phi$ corresponding to the regular extreme normalized
  unital map
  $\phi\in\pmaps(\malg 2)$ can be written in one of following block
  forms in some matrix representation
  \begin{equation*}
    \rho_\phi = \left(
    \begin{matrix}
      \ket{y_1}\bra{y_1} & 
      c_0\ket{y_1}\bra{y_2}+c\ket{y_2}\bra{y_1} \\
      c_0\ket{y_2}\bra{y_1}+\cconj c\ket{y_1}\bra{y_2} &
      \ket{y_2}\bra{y_2}
    \end{matrix}\right)
    \text{ or }
    \rho_\phi = \left(
      \begin{matrix}
        \bbone & 0 \\
        0 & 0
      \end{matrix}\right),
  \end{equation*}
  where $c_0\ge0$, $c\in\complexes$ and $\set{y_1,y_2}$ is some basis
  in $\complexes^2$.
\end{proposition}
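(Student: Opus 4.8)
The plan is to exploit the hypothesis in two stages. First I would use the \emph{regular} part of the assumption — extremality of the diagonal restriction — to pin down the two diagonal blocks $\rho_{11},\rho_{22}$, and then use positivity of $\phi$, equivalently $\rho_\phi\gebp 0$, to determine the off-diagonal block $\rho_{12}$ (recall that $\rho_{21}=\rho_{12}\conj$ and that unitality already gives $\rho_{11}+\rho_{22}=\bbone$). Passing to $\phi_0\defeq\phi\restr{\diag 2}$, this is a unital element of $\cpmaps(\diag 2,\malg 2)\cong\cpmaps(C(\set{1,2}),\malg 2)$, since on a commutative domain positivity equals complete positivity and $\phi_0(\bbone)=\phi(\bbone)=\bbone$. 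By regularity $\phi_0$ is extreme, hence by the Lemma preceding this statement it is \Cstar-extreme. Applying the Arveson classification (Theorem~\ref{thm:arv}) together with the Farenick--Morenz criterion, $\phi_0$ is supported either on a single point, forcing one of $\rho_{11},\rho_{22}$ to be $\bbone$ and the other $0$, or on both points with $\rho_{11},\rho_{22}$ \emph{orthogonal} rank-one projectors summing to $\bbone$. This dichotomy is exactly what yields the two candidate block forms.

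In the single-point case, say $\rho_{11}=\bbone$, $\rho_{22}=0$ (the reversed labelling is recovered by swapping basis vectors), I would expand $(x\otimes y,\rho_\phi\,x\otimes y)$ as a Hermitian quadratic form in $(x_1,x_2)$ whose coefficient matrix has diagonal $(\norm{y}^2,0)$ and off-diagonal entry $(y,\rho_{12}y)$. The nonnegative-determinant condition then forces $(y,\rho_{12}y)=0$ for every $y$, hence $\rho_{12}=0$, giving the second form. In the two-point case I would choose the orthonormal eigenbasis $\set{y_1,y_2}$ so that $\rho_{11}=\ket{y_1}\bra{y_1}$, $\rho_{22}=\ket{y_2}\bra{y_2}$, and write $\rho_{12}=\sum_{kl}m_{kl}\ket{y_k}\bra{y_l}$. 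The same $2\times2$ test, now with diagonal entries $\abs{(y,y_1)}^2$ and $\abs{(y,y_2)}^2$, reads $\abs{(y,y_1)}^2\abs{(y,y_2)}^2\ge\abs{(y,\rho_{12}y)}^2$ for all $y$; specializing to $y=y_1$ and $y=y_2$ annihilates $m_{11}$ and $m_{22}$, leaving $\rho_{12}=m_{12}\ket{y_1}\bra{y_2}+m_{21}\ket{y_2}\bra{y_1}$.

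Finally, the relative phase change $y_1\mapsto e^{i\theta}y_1$ preserves orthonormality and fixes both diagonal blocks while rotating $m_{12}\mapsto e^{-i\theta}m_{12}$, so choosing $\theta=\arg m_{12}$ makes $c_0\defeq\abs{m_{12}}\ge 0$ real, whereas $c\defeq m_{21}e^{i\theta}$ remains an arbitrary complex number; this produces the first form, and the displayed $(2,1)$ block follows immediately from $\rho_{21}=\rho_{12}\conj$. The step I expect to be the crux is the block-positivity determinant computation that kills the diagonal part of $\rho_{12}$ in the two-point case: the rest is bookkeeping, but this inequality is precisely where the two-dimensional rigidity is extracted, orthogonal diagonal blocks leaving only genuinely off-diagonal freedom.
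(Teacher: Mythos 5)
Your proposal is correct and follows essentially the same route as the paper: the dichotomy for the diagonal blocks comes from regular extremality plus the Section~3 classification, block positivity of the $2\times2$ Hermitian form $[(y,\rho_{ij}y)]_{ij}$ then forces $(y_k,\rho_{12}y_k)=0$ (and $\rho_{12}=0$ in the single-point case), and a relative phase rotation of $y_1$ normalizes $c_0\ge0$. The only cosmetic difference is that you extract these vanishing conditions from the determinant criterion for $2\times2$ positive semidefinite matrices, whereas the paper tests the vectors $\epsilon e_1\pm\lambda e_2$ and $\epsilon e_1\pm i\lambda e_2$ and lets $\lambda$ range over $\reals$ --- the same inequality in different clothing.
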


\begin{proof}
  If we consider a restriction of the map $\phi$ to diagonal matrices,
  then based on results of the previous section and the
  Def.~\ref{dfn:regpos}, we conclude that
  either
  \begin{equation*}
    \rho_{ii} = \ket{y_i}\bra{y_i}
  \end{equation*}
  or
  \begin{equation*}
    \rho_{11}=\bbone,\quad\rho_{22}=0.
  \end{equation*}
  Firstly we will consider non-trivial case. Assume $\rho_{ii} =
  \ket{y_i}\bra{y_i}$. The block-positivity property of the $\rho$
  gives us
  \begin{equation*}
    (x\otimes y, \sum_{ij} (E_{ij}\otimes \rho_{ij}) x\otimes y) = 
    \sum_{ij} (x,e_i)(e_j,x) (y,\rho_{ij} y)\ge0.
  \end{equation*}
  Now let us take $x = \epsilon e_1 + \lambda e_2$ and $y=y_1$, with
  $\epsilon>0$ and $\lambda$ real. Note that $x$ does not have to be
  normalized vector. Then the inequality above gives us
  \begin{equation*}
    \epsilon^2 + \epsilon\lambda (y_1,\rho_{12} y_1) +
    \epsilon \lambda (y_1,\rho_{21} y_1) + 0 \ge0 
  \end{equation*}
  So we get
  \begin{equation*}
    \lambda\big(y_1,(\rho_{12}+\rho_{12}\conj)y_1\big) \ge -\epsilon.
  \end{equation*}
  As vector $x$ can be chosen arbitrary, we can also take the vector
  $\epsilon e_1 - \lambda e_2$ and then we get
  \begin{equation*}
    \epsilon\ge \lambda\big(y_1,(\rho_{12}+\rho_{12}\conj)y_1\big)
  \end{equation*}
  Fixing $\epsilon$ and taking arbitrary $\lambda$ we conclude that
  \begin{equation}
    \big(y_1,(\rho_{12}+\rho_{12}\conj)y_1\big) = 0
  \end{equation}
  If we proceed by the same way using vectors $\epsilon e_1 + i
  \lambda e_2$ and $\epsilon e_1 - i \lambda e_2$ we conclude that
  \begin{equation}
    \label{eq:2}
    \big(y_1,(\rho_{12}-\rho_{12}\conj)y_1\big) = 0  
  \end{equation}
  Combining these two we get that $(y_1,\rho_{12} y_1) = 0$. If we
  choose $y=y_2$ and repeat all the above reasoning we arrive to
  conclusion that $(y_2, \rho_{12} y_2) = 0$, so finally we get that
  \begin{equation}
    \label{eq:3}
    \rho_{12} = c_1 \ket{y_1}\bra{y_2}+c_2\ket{y_2}\bra{y_1}.
  \end{equation}
  Now we do a unitary transformation $y_1\mapsto e^{-i \arg c_1}
  y_1$, $y_2\mapsto y_2$, which gives us the desired result. 

  In the case when
  $\rho_{11}=\bbone$ and $\rho_{22}=0$ we repeat all above
  calculations with the only difference that
  we get $(y,\rho_{12} y)=0$ for any $y$. Thus $\rho_{12}=0$ and
  this corresponds to the second form. 
\end{proof}

Let us now discuss admissible values of coefficients $c_0$ and
$c$. In this part we will extensively use the fact, than $\rho$ is
normalized in $\alpha$-norm, i.e. $\alpha(\rho_\phi)=1$. In
particular, the definition of $\alpha$-norm tells us that
\begin{equation*}
  1 = \alpha(\rho_\phi) \ge \abs{\Tr \rho_\phi a\otimes
    b},
\end{equation*}
for any $a$ and $b$ such that $\pi(a\otimes b) = 1$. Note that if
$\norm{a}=1$ and $\norm{b}_1$ then $\pi(a\otimes b)=1$. Take for
$a=E_{12}+\lambda E_{21}$ and $b=\ket{y_2}\bra{y_1}$, with
$\abs{\lambda}=1$. From the definition of the operator norm one
instantly gets that $\norm{a}=1$. On the other hand $\norm{b}_1 = \Tr
\big|\ket{y_2}\bra{y_1}\big| = \Tr (\ket{y_1}\bra{y_2}
\ket{y_2}\bra{y_1})^{1/2} = \Tr \ket{y_1}\bra{y_1} = 1$. Consequently
\begin{equation*}
  1 \ge \big|\Tr \left(\rho_\phi (E_{12}+\lambda E_{21})\otimes 
  \ket{y_2}\bra{y_1}\right)\big| = 
  \big| \Tr \left(\phi(E_{12}+\lambda E_{21}) \tau(\ket{y_2}\bra{y_1})\right)
\big|,
\end{equation*}
due to definition of $\rho_\phi$ (cf. Lemma \ref{thm:basic-lemma}). Now
applying the Proposition~\ref{thm:form-of-rho} we get
\begin{align*}
  1 &\ge \big| \Tr \left(\left(c_0\ket{y_1}\bra{y_2} + c\ket{y_2}\bra{y_1} +
  \lambda c_0\ket{y_2}\bra{y_1} + \lambda\cconj c\ket{y_1}\bra{y_2}
\right)\ket{y_1}\bra{y_2}\right)\big| \\
 &= \big| \Tr \left(c\ket{y_2}\bra{y_2}+\lambda c_0 \ket{y_2}\bra{y_2}\right) 
\big|
\end{align*}
Calculating the trace one arrives to $1\ge\abs{c+\lambda
  c_0}$. Because $\lambda$ here is arbitrary complex number of modulus
1, we can take in particular $\lambda=e^{i\arg c}$. Thus
\begin{equation}
  \label{eq:coeff-c-ineq}
  1\ge \big|\abs{c_0}+\abs{c}\big| = c_0+\abs{c}.
\end{equation}
Now it is easy to show that

\begin{theorem}
  Any regular extreme normalized unital map in $\pmaps(\malg2)$ corresponds to
  an element of the following subset of $\setd$
  \begin{equation*}
    \set{ \symmetries(\complexes^2), 2 P_x, p\otimes\bbone,
    \rho_{\tilde \phi} } = \tilde\setd\cup\set{\rho_{\tilde \phi}},
  \text{ where } 
  \rho_{\tilde\phi} = \left(
    \begin{matrix}
      \ket{y_1}\bra{y_1} & 0\\
      0 & \ket{y_2}\bra{y_2}
    \end{matrix}
  \right),
  \end{equation*}
  and $x$ is a maximally entangled vector in some basis
  $\set{y_1,y_2}$ of $\complexes^2$ and $p$ is a rank one projector in
  $\complexes^2$.
\end{theorem}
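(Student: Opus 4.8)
My plan is to start from the two block forms provided by Proposition~\ref{thm:form-of-rho} and to decide, for each, which values of the remaining parameters survive the extremality hypothesis. The second form is literally $p\otimes\bbone$ with $p=\ket{e_1}\bra{e_1}$ a rank one projector, so that case is already of the asserted type. For the first form I would begin by noting that unitality, $\rho_{11}+\rho_{22}=\bbone$, together with $\rho_{ii}=\ket{y_i}\bra{y_i}$ forces $\ket{y_1}\bra{y_1}+\ket{y_2}\bra{y_2}=\bbone$; hence $\set{y_1,y_2}$ is automatically an orthonormal basis and I may take $y_i=e_i$. In this basis the off-diagonal block reads $\rho_{12}=c_1\ket{e_1}\bra{e_2}+c_2\ket{e_2}\bra{e_1}$ with $c_1,c_2\in\complexes$, the normalization $c_1\ge0$ of the Proposition being merely a choice of phase representative. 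Repeating the block-positivity computation of the Proposition for a general rank one positive input shows that $\rho_\phi\gebp0$ if and only if $\abs{c_1}+\abs{c_2}\le1$, so the inequality~\eqref{eq:coeff-c-ineq} is precisely the defining constraint of this region. Thus the normalized unital positive maps sharing this fixed diagonal are parametrized exactly by the unit ball $T$ of $\ell^1(\complexes^2)$; since each such map is unital and positive, hence of norm one, the isometry of Lemma~\ref{thm:basic-lemma} gives $\alpha(\rho_\phi)=1$, and together with $\Tr\rho_\phi=2$ this yields $T\subseteq\setd$.

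The heart of the argument is to convert extremality of $\phi$ in $\setd$ into extremality of $(c_1,c_2)$ in $T$. Suppose $\rho_\phi=\tfrac12(\rho'+\rho'')$ with $\rho',\rho''\in\setd$. Comparing $(1,1)$ blocks gives $\rho'_{11}+\rho''_{11}=2\ket{e_1}\bra{e_1}$ with both summands positive; testing against any vector orthogonal to $e_1$ shows each summand is supported on $\complexes e_1$, hence rank one, and unitality then forces $\rho'_{11}=\rho''_{11}=\ket{e_1}\bra{e_1}$, and likewise for the $(2,2)$ block. Consequently $\rho'$ and $\rho''$ have the very same diagonal as $\rho_\phi$, so by the block-positivity part of the proof of Proposition~\ref{thm:form-of-rho} they too lie in the slice $T$, and $\set{y_1,y_2}$ remains the same orthonormal basis. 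Therefore $\rho_\phi$ is extreme in $\setd$ if and only if $(c_1,c_2)$ is an extreme point of $T$. I expect this diagonal-rigidity step to be the main obstacle, since it is exactly what guarantees that a nontrivial convex decomposition cannot escape the slice; everything before it is bookkeeping and everything after it is computation.

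It then remains to locate the extreme points of $T$ and to name them. A short argument shows that the extreme points of this complex $\ell^1$-ball are exactly $(e^{i\alpha},0)$ and $(0,e^{i\beta})$: every other point admits a proper decomposition — interior points along any diameter, and a boundary point $(c_1,c_2)$ with both coordinates nonzero by shifting a small amount of modulus between the two coordinates at fixed phases — whereas for $(e^{i\alpha},0)$ and $(0,e^{i\beta})$ the triangle inequality forbids any such splitting. To identify these two families I would compute the spectrum of $\rho_\phi$, which splits over the invariant pairs $\set{e_1\otimes e_1,e_2\otimes e_2}$ and $\set{e_1\otimes e_2,e_2\otimes e_1}$ into $\set{1\pm\abs{c_1}}$ and $\set{\pm\abs{c_2}}$. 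For $(e^{i\alpha},0)$ the eigenvalues are $2,0,0,0$, so $\rho_\phi$ is a rank one positive matrix of trace $2$, i.e. $2P_x$ with $x$ maximally entangled; for $(0,e^{i\beta})$ the eigenvalues are $1,1,1,-1$, so $\rho_\phi$ is a selfadjoint unitary, i.e. a symmetry in $\symmetries(\complexes^2)$. Together with the trivial case $p\otimes\bbone$ this shows that the Choi matrix of every regular extreme normalized unital map lies in the asserted set, which completes the proof.
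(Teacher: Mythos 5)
Your proposal is correct, but it routes through the middle of the argument quite differently from the paper. After Proposition~\ref{thm:form-of-rho} (the common starting point), the paper never analyzes the slice as a whole: it derives $c_0+\abs{c}\le 1$ from the $\alpha$-normalization via a test element in the dual pairing (that is the only role of \eqref{eq:coeff-c-ineq}), and then simply exhibits the identity $\rho_\phi = c_0\,(2P_x) + \abs{c}\,w + (1-c_0-\abs{c})\,\rho_{\tilde\phi}$, where $w$ is a symmetry carrying the phase $e^{i\arg c}$; extremality forces $\rho_\phi$ to equal one of the three terms, and the second block form gives $p\otimes\bbone$. You instead determine the whole face: block positivity is shown to hold \emph{iff} $\abs{c_1}+\abs{c_2}\le1$ (an equivalence the paper never needs), the slice $T$ is shown to lie in $\setd$, decompositions inside $\setd$ are shown not to leave $T$ (diagonal rigidity), and the extreme points of the complex $\ell^1$-ball are identified spectrally as $2P_x$ and the symmetries. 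One logical remark: the rigidity step you call the heart of the proof is only needed for the converse implication (extreme in $T$ $\Rightarrow$ extreme in $\setd$); the implication the theorem actually requires follows already from $T\subseteq\setd$ and the affinity and injectivity of the parametrization, which is why the paper can dispense with rigidity altogether. What your longer route buys is genuinely more information: it shows the matrices you land on are true extreme points of $\setd$, and, since $(0,0)$ is the center of the ball, that $\rho_{\tilde\phi}$ is \emph{not} extreme --- indeed $\rho_{\tilde\phi}=\tfrac12\,2P_{x_+}+\tfrac12\,2P_{x_-}$ with $x_\pm=\tfrac1{\sqrt2}(e_1\otimes e_1\pm e_2\otimes e_2)$ --- so the fourth element of the asserted list never occurs for an extreme map, a sharpening the paper's one-line decomposition cannot deliver.
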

\begin{proof}
  If $\phi\in\pmaps{\malg2}$ is regular extreme, then from
  Proposition~\ref{thm:form-of-rho} we know that $\rho_\phi$ it is of the form
  \begin{equation*}
    \rho_\phi = \left(
    \begin{matrix}
      \ket{y_1}\bra{y_1} & 
      c_0\ket{y_1}\bra{y_2}+c\ket{y_2}\bra{y_1} \\
      c_0\ket{y_2}\bra{y_1}+\cconj c\ket{y_1}\bra{y_2} &
      \ket{y_2}\bra{y_2}
    \end{matrix}\right)
    \text{ or }
    \rho_\phi = \left(
      \begin{matrix}
        \bbone & 0 \\
        0 & 0
      \end{matrix}\right),
  \end{equation*}
  Let us consider the first case. Then $\set{y_1,y_2}$ fix basis in
  one Hilbert space. Let us use the same symbol to denote basis in the
  second Hilbert space (that is fixed by the matrix representation). Define
  $\tilde y_1 = y_1, \tilde y_2 = e^{i \arg c}y_2$ and
  \begin{equation*}
    w = \sum_{i,j=1}^{2} \ket{\tilde y_i}\bra{\tilde y_j} 
    \otimes\ket{y_j}\bra{y_i} = 
    \left(
      \begin{matrix}
        \ket{y_1}\bra{y_1} & e^{i \arg c}\ket{y_2}\bra{y_1} \\
        e^{-i \arg c}\ket{y_1}\bra{y_2} & \ket{y_2}\bra{y_2}
      \end{matrix}
    \right).
  \end{equation*}
  By straightforward calculation we check that $w$ is a symmetry. Now
  take $x=1/\sqrt 2(y_1\otimes y_1+y_2\otimes y_2)$ and define
  \begin{equation*}
    \rho_0 = 2 P_x = \left(
      \begin{matrix}
        \ket{y_1}\bra{y_1} & \ket{y_1}\bra{y_2} \\
        \ket{y_2}\bra{y_1} & \ket{y_2}\bra{y_2}        
      \end{matrix}
      \right).
  \end{equation*}
  Then one gets that
  \begin{equation*}
    \rho_\phi = c_0 \rho_0 + \abs c w + (1-c_0-\abs
    c)\rho_{\tilde \phi}.
  \end{equation*}
  Due to \eqref{eq:coeff-c-ineq} we see that $\rho_{\phi}$ must be a convex
  combination of maps form $\tilde\setd$ and $\rho_{\tilde \phi}$. As we
  assumed that $\phi$ is extreme, the claim follows.

  In the case when
  \begin{equation*}
        \rho_\phi = \left(
      \begin{matrix}
        \bbone & 0 \\
        0 & 0
      \end{matrix}\right),
  \end{equation*}
  we immediately see that it is equal to $\ket{y_1}\bra{y_1}\otimes
  \bbone$ in the basis fixed by matrix representation.
\end{proof}

\begin{remark}
  The element $\rho_{\tilde \phi}$ corresponds to the map projecting element $a$
  onto the subalgebra of diagonal matrices in some basis fixed by
  matrix representation. Namely
  \begin{equation*}
    \tilde \phi(a) = \left(
      \begin{matrix}
        a_{11} & 0 \\
        0 & a_{22}
      \end{matrix}
    \right).
  \end{equation*}
\end{remark}

\begin{remark}
It is also noteworthy to mention that maps corresponding to elements $2
P_x$ are isomorphisms and those corresponding to symmetries are
anti-isomorphisms. The last claim follows from the fact that in the $n=2$ case
all symmetries in $\setd$ are locally unitary equivalent to the Choi matrix of
transposition map (for a simple proof see \cite{Majewski:2010fk}). 
\end{remark}

The situation in the case of $\phi\in\pmaps(\malg 3)$ is much more complicated.
Our results concerns only regular maps. Nevertheless example~\ref{thm:3dcex}
shows that even in this case we cannot infer that the block-diagonal part of
Choi matrix, i.e. elements $\phi(e_{ii})$, are formed by the orthogonal
projectors. Moreover for $n=3$ there appear non-decomposable maps.
Illustration of this fact is given by generalized Choi maps.

\begin{example}
  \label{ex:choi}
  Consider a generalized Choi map of the form \cite{cho1992generalized, choi1975completely, tanahashi1988indecomposable}
  \begin{equation*}
    \phi(a) = \frac12\left(
    \begin{matrix}
       {a}_{11}+{a}_{33} & -{a}_{1,2} & -{a}_{1,3} \\
       -{a}_{2,1} & {a}_{22}+{a}_{11} & -{a}_{2,3} \\
       -{a}_{3,1} & -{a}_{3,2} & {a}_{33}+{a}_{22}
    \end{matrix}\right).
  \end{equation*}
  It is known that this is an extreme
  positive map. Arverson's decomposition of its restriction to
  commutative algebra $\diag 3$ is given by
  \begin{equation*}
    \phi(f) = f(x_1) K_1 + f(x_2) K_2 + f(x_3) K_3,
  \end{equation*}
  where
  \begin{align*}
    K_1 &= 1/2(\ket{e_1}\bra{e_1}+\ket{e_2}\bra{e_2})\\
    K_2 &= 1/2(\ket{e_2}\bra{e_2}+\ket{e_3}\bra{e_3})\\
    K_3 &= 1/2(\ket{e_1}\bra{e_1}+\ket{e_3}\bra{e_3}).
  \end{align*}
  It is apparent that in this example $K_1K_2 \neq 0$. 
\end{example}

\section{Convex analysis of Choi map} 
\label{sec:choi_map}

In this section we will study the structure of Choi map $\phi$ that was
recalled in the Example~\ref{ex:choi}. It is worth remembering that this was
the first and very important example of a non-decomposable positive map. Denote
by $\rho_C$ a Choi matrix corresponding to $\phi$ and by $\tilde \rho_C$ partial
transpose of $\rho_C$. From \cite{Majewski:2010fk} we know that partial
transposition preserves the set $\setd$. In fact $\tilde \rho_C$ corresponds to
the map $\tau \circ \phi$ and is also extreme and indecomposable. The analysis
of $\tilde \rho_C$ is nicer that $\rho_C$. Thus to understand the nature of the
Choi map we will carry out an examination of $\tilde \rho_C.$

\begin{lemma}
  Let
  \begin{equation*}
    w^- = \sum_{i,j=1}^3 \varepsilon_{ij} E_{ij}\otimes E_{ji},\quad \text{where }
    \varepsilon_{ij} = \begin{cases}
    1 & \text{for } i = j,\\
    -1 & \text{for } i\neq j.
    \end{cases}
  \end{equation*}
  Then $w^-$ is a symmetry (but not block positive) and $\alpha(w^-) = 5/3.$ 
\end{lemma}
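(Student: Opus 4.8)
The plan is to verify the three assertions separately: that $w^-$ is a symmetry, that it violates block positivity, and that $\alpha(w^-)=5/3$.

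First I would check that $w^-$ is a self-adjoint unitary. Self-adjointness is immediate: since $(E_{ij}\otimes E_{ji})\conj = E_{ji}\otimes E_{ij}$ and the coefficients $\varepsilon_{ij}$ are real and symmetric ($\varepsilon_{ij}=\varepsilon_{ji}$), relabelling the summation indices gives $(w^-)\conj = w^-$. For $(w^-)^2=\bbone$ I would expand the product and apply $E_{ij}E_{kl}=\delta_{jk}E_{il}$ on each tensor leg; the only surviving terms have $j=k$ and $i=l$, so
\begin{equation*}
  (w^-)^2=\sum_{ij}\varepsilon_{ij}^2\,E_{ii}\otimes E_{jj}=\sum_{ij}E_{ii}\otimes E_{jj}=\bbone\otimes\bbone,
\end{equation*}
because $\varepsilon_{ij}^2=1$. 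To see that $w^-\not\gebp 0$ I would exhibit a single product vector on which the quadratic form is negative: taking $x=y=e_1+e_2+e_3$ and using $E_{ij}x = x_j e_i$, a direct computation gives $(x\otimes y, w^-\,x\otimes y)=\sum_{ij}\varepsilon_{ij}=3-6=-3<0$, since there are three diagonal ($+1$) and six off-diagonal ($-1$) entries.

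For the norm I would invoke Lemma~\ref{thm:alpha-norm-lem}, which reduces $\alpha(w^-)$ to maximizing $\abs{\Tr(w^-\, s\otimes p)}$ over symmetries $s\in\symmetries(\complexes^3)$ and rank-one projectors $p=\ket v\bra v$. Writing $\Tr(E_{ij}s)=s_{ji}$ and $\Tr(E_{ji}p)=p_{ij}=v_i\cconj{v_j}$ and collecting terms, I would obtain, using $\varepsilon_{ij}=2\delta_{ij}-1$,
\begin{equation*}
  \Tr(w^-\,s\otimes p)=\sum_{ij}\varepsilon_{ij}s_{ji}v_i\cconj{v_j}=\big(v,(2\Delta(s)-s)\,v\big),
\end{equation*}
where $\Delta(s)=\sum_i s_{ii}E_{ii}$ denotes the diagonal part of $s$. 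Since $s$ is self-adjoint, $2\Delta(s)-s$ is self-adjoint, so maximizing the modulus over unit vectors $v$ returns its operator norm, giving $\alpha(w^-)=\max_{s\in\symmetries(\complexes^3)}\norm{2\Delta(s)-s}$.

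The crude estimate $\norm{2\Delta(s)-s}\le 2\norm{\Delta(s)}+\norm s\le 3$ is not sharp, and obtaining the exact constant $5/3$ is the heart of the matter; this upper bound is the step I expect to be the main obstacle. The key idea is to realise the diagonal part as a group average (pinching): with $\zeta=e^{2\pi i/3}$ and $\omega=\mathrm{diag}(1,\zeta,\zeta^2)$ one has $\Delta(s)=\tfrac13\sum_{k=0}^2\omega^k s\,\omega^{-k}$, whence
\begin{equation*}
  2\Delta(s)-s=-\tfrac13 s+\tfrac23\,\omega s\omega\conj+\tfrac23\,\omega^2 s\omega^{2}{}\conj .
\end{equation*}
Each summand is unitarily conjugate to $s$ and hence has norm $\norm s=1$, so the triangle inequality yields $\norm{2\Delta(s)-s}\le\tfrac13+\tfrac23+\tfrac23=\tfrac53$. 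For the matching lower bound I would take the rank-one symmetry $s=\bbone-2\ket u\bra u$ with $u=\tfrac1{\sqrt3}(e_1+e_2+e_3)$; then $\Delta(s)=\tfrac13\bbone$ and $2\Delta(s)-s=-\tfrac13\bbone+\tfrac23 J$, where $J$ is the all-ones matrix, whose top eigenvalue is $5/3$ attained on $u$. Choosing $p=\ket u\bra u$ therefore gives $\abs{\Tr(w^-\,s\otimes p)}=5/3$, so $\alpha(w^-)=5/3$. Without the pinching identity one is stuck at the loose bound $3$; it is precisely this averaging representation that converts the estimate into a clean triangle-inequality argument.
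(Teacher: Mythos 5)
Your proposal is correct in all three parts: the symmetry check, the block-positivity counterexample, and the norm computation. (For block positivity you use the unnormalized uniform vector and get $-3$, while the paper evaluates on the normalized vector $\tfrac12(e_1+e_3)+\tfrac1{\sqrt2}e_2$ and gets $-\tfrac14$; either suffices.)

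The genuine divergence is in the computation of $\alpha(w^-)$. Both you and the paper start from Lemma~\ref{thm:alpha-norm-lem}, but the paper proceeds by brute force: it writes $s=\bbone-2q$, reduces to rank-one $q$, parametrizes $p=\ket{x}\bra{x}$ and $q=\ket{y}\bra{y}$ by moduli and phases, and minimizes the resulting trigonometric expression through successive substitutions ($\alpha^\pm,\beta^\pm$, then $a,b,c,d$), finding $\min_{q,p}\Tr w^- q\otimes p=-\tfrac13$ and hence $\sup\abs{\Tr w^- s\otimes p}=1+\tfrac23=\tfrac53$. You instead recast $\Tr(w^- s\otimes p)$ as the quadratic form $(v,(2\Delta(s)-s)v)$ with $p=\ket{v}\bra{v}$ and $\Delta(s)$ the diagonal part of $s$, so that $\alpha(w^-)=\max_s\norm{2\Delta(s)-s}$, and then the pinching identity $\Delta(s)=\tfrac13\sum_{k=0}^{2}\omega^k s\,\omega^{-k}$ converts the hard upper bound into a one-line triangle inequality, $\norm{2\Delta(s)-s}\le\tfrac13+\tfrac23+\tfrac23=\tfrac53$; your extremal pair $s=\bbone-2\ket{u}\bra{u}$, $p=\ket{u}\bra{u}$ with $u$ the normalized uniform vector attains it, and in fact reproduces exactly the paper's minimizing value $\Tr w^- q\otimes p=-\tfrac13$. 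Your route is shorter, conceptually cleaner, and generalizes at no extra cost: in $\malg{n}$ the same pinching over $n$-th roots of unity gives $\alpha(w^-)=3-\tfrac4n$ for the analogously defined $w^-$, which at $n=2$ is consistent with the paper's remark that the two-dimensional analogue lies in $\setd$. The paper's explicit parametrization buys little beyond what your argument delivers — it exhibits the full extremizing configurations rather than a single one — at the price of a long and error-prone computation, so your proof can be regarded as a genuine simplification.
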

\begin{proof}
  The fact that $w^-$ is a symmetry follows from the direct calculation. To see that it is not block positive it is enough to consider $x=1/2(e_1+e_3)+1/\sqrt2 e_2$ and calculate that
  \begin{equation*}
    (x\otimes, w^- x\otimes x) = -\frac14.
  \end{equation*}
  In order to calculate $\alpha(w^-)$ we
  will use Lemma~\ref{thm:alpha-norm-lem}, i.e.
  \[  
    \alpha(w^-) = \sup_{s,p} \abs{\Tr w^- s\otimes p},
  \]
 where $s$ is a symmetry and $p$ is a rank 1 projector. Because $s\in\malg 3$,
 we can write it as $s = \bbone - 2 q$, where $q$ is projector. Without loss
 of generality we can assume that this is rank 1 projector, as case rank 0 is
 trivial and rank 2 can be reduced to rank 1 by $\bbone-2q = 2q'-\bbone =
 -(\bbone - 2q')$ where $q' = \bbone - q$, and $q'$ is rank 1. Thus
  \[
    \alpha(w^-) = \sup_{q,p} \abs{\Tr w^- \bbone\otimes p - 2\Tr w^- q\otimes p}.
  \]
  Let $p=\ket{x}\bra{x}$ and $q=\ket{y}\bra{y}$. By explicit calculation we see that $\Tr w^- \bbone\otimes p = \norm{x}^2 = 1$, so to obtain supremum we need to find extreme values of $\Tr w^- q\otimes p$. Denote by $\set{x_i}$ and $\set{y_i}$ coefficients of $x$ and $y$ in canonical basis. Then we calculate
  \begin{equation*}
    \begin{split}
     \Tr w^- q\otimes p &= \abs{x_1}^2\abs{y_1}^2 - x_1 \cconj x_2 y_2 \cconj y_1 - x_1 \cconj x_3 y_3 \cconj y_1 \\
     &- x_2 \cconj x_1 y_1\cconj y_2 + \abs{x_2}^2\abs{y_2}^2 - x_2 \cconj x_3 y_3 \cconj y_2 \\
     &- x_3\cconj x_1 y_1 \cconj y_3 - x_3 \cconj x_2 y_2 \cconj y_3 + \abs{x_3}^2\abs{y_3}^2 \\
     &= \abs{x_1}^2\abs{y_1}^2 + \abs{x_2}^2\abs{y_2}^2 + \abs{x_3}^2\abs{y_3}^2\\
     &- 2 \Re x_1 \cconj x_2 y_2 \cconj y_1 
     - 2 \Re x_1 \cconj x_3 y_3 \cconj y_1
     - 2 \Re x_2 \cconj x_3 y_3 \cconj y_2.
     \end{split} 
  \end{equation*}
  We can rewrite this using a polar decomposition of complex coefficients $x_j = \xi_j e^{i \phi_j}, y_j = \eta_j e^{i \psi_j}$
  \begin{equation*}
    \begin{split}
      \Tr w^- q\otimes p &= \xi_1^2 \eta_1^2 + \xi_2^2 \eta_2^2 + \xi_3^2 \eta_3^2 \\
      &- 2 \xi_1 \xi_2 \eta_1 \eta_2 \cos (\phi_1 - \phi_2 + \psi_2 - \psi_1) \\
      &- 2 \xi_1 \xi_3 \eta_1 \eta_3 \cos (\phi_1 - \phi_3 + \psi_3 - \psi_1) \\
      &- 2 \xi_2 \xi_3 \eta_2 \eta_3 \cos (\phi_2 - \phi_3 + \psi_3 - \psi_2) \\
      & \ge \xi_1^2 \eta_1^2 + \xi_2^2 \eta_2^2 + \xi_3^2 \eta_3^2
      - 2 \xi_1 \xi_2 \eta_1 \eta_2 
      - 2 \xi_1 \xi_3 \eta_1 \eta_3 
      - 2 \xi_2 \xi_3 \eta_2 \eta_3 = m,
    \end{split}
  \end{equation*}
  with equality e.g. for $\phi_i = 0 = \psi_j$, and
  \begin{equation*}
    M = \xi_1^2 \eta_1^2 + \xi_2^2 \eta_2^2 + \xi_3^2 \eta_3^2
      + 2 \xi_1 \xi_2 \eta_1 \eta_2 
      + 2 \xi_1 \xi_3 \eta_1 \eta_3 
      + 2 \xi_2 \xi_3 \eta_2 \eta_3 \ge \Tr w^- q\otimes p.
  \end{equation*}
  with equality e.g. for $\phi_1 = \phi_2 = \pi$ and other $\phi_i = 0, \psi_i = 0$. 
  We use the normalization of $x$ and $y$ to introduce parametrization
  \begin{align*}
    \xi_1 &= \sin \alpha \sin \beta, &\quad \xi_2 &= \cos \alpha \sin \beta, &\quad \xi_3 &= \cos \beta,\\
    \eta_1 &= \sin \mu \sin \nu, &\quad \eta_2 &= \cos \mu \sin \nu, &\quad \eta_3 &= \cos \nu,
  \end{align*}
  Substitution and simplification yields
  \begin{equation*}
    \begin{split}
    m &=\cos^2 \beta \cos^2 \nu + \cos^2 (\alpha + \mu) \sin^2 \beta \sin^2 \nu - \cos \beta \sin \beta \cos (\alpha - \mu) \sin 2 \nu \\
    &= (\cos \beta \cos \nu - \cos (\alpha + \mu) \sin \beta \sin \nu)^2 - 4 \cos \beta \sin \beta \sin \alpha \sin \mu \sin \nu \cos \nu,
    \end{split}
  \end{equation*}
  and
  \begin{equation*}
    M = (\cos \beta \cos \nu + \cos (\alpha - \mu) \sin \beta \sin \nu)^2.
  \end{equation*}  
  Now we substitute
  \begin{align*}
    \alpha^- &= \alpha - \mu, & \alpha^+ &= \alpha + \mu,\\
    \beta^- &= \beta - \nu, & \beta^+ &= \beta + \nu
  \end{align*}
  and get 
  \begin{gather}
  \begin{split}
    m &=\frac14 (\cos \beta^- + \cos \beta^+ - \cos \alpha^+ \cos \beta^- + \cos \alpha^+ \cos \beta^+)^2 \\
    &- \frac12 (\cos \beta^- + \cos \beta^+)(\cos \beta^- - \cos \beta^+)(\cos \alpha^- -  \cos \alpha^+)
  \end{split}\\
    M = \frac14 (\cos \beta^- + \cos \beta^+ + \cos \alpha^- \cos \beta^- - \cos \alpha^- \cos \beta^+)^2.
  \end{gather}
  Finally we denote
  \begin{equation*}
    a = \cos \beta^-,\quad b = \cos \beta^+,\quad c = \cos \alpha^+,\quad d = \cos \alpha^-,
  \end{equation*}
  to get
  \begin{align*}
    m&=\frac14(a + b - c a + c b)^2 - \frac12 (a+b)(a-b)(d-c)\\
    M&=\frac14(a + b + d a - d b)^2
  \end{align*}
  By direct calculation the minimum value of $m$ is equal to $-1/3$ (e.g. for $a=-1,b=1/3,c=0,d=1$) and maximum value of $M$ equals $1$, so the $\alpha$-norm of $w^-$ equals $5/3$.
\end{proof}

\begin{remark}
  For $n=2$ analogously defined $w^-$ belongs to $\setd$. 
\end{remark}

Simple calculation leads to following conclusion.

\begin{proposition}
  Choi matrix $\tilde \rho_C$ is a convex combination of $w^-$ and matrix
  \begin{equation*}
    r = E_{11}\otimes E_{22} + E_{22}\otimes E_{33} + E_{33}\otimes E_{11},
  \end{equation*}
  namely
  \begin{equation}
    \label{eq:tilderho}
    \tilde \rho_C = \frac12 r + \frac12 w^-.
  \end{equation}
\end{proposition}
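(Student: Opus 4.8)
The plan is to verify the identity $\tilde\rho_C = \frac12 r + \frac12 w^-$ by computing both sides explicitly as block matrices and comparing entries. Since this is an equality between two concrete $9\times 9$ matrices (elements of $\malg 3\otimes\malg 3$), the cleanest route is to write everything in the matrix-unit basis $E_{ij}\otimes E_{kl}$ and match coefficients.

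First I would compute $\tilde\rho_C$ from the data we already have. The Choi map $\phi$ from Example~\ref{ex:choi} gives $\rho_C = \sum_{ij} E_{ij}\otimes\phi(E_{ij})$ via \eqref{eq:choi-isom}. Reading off the action of $\phi$ on matrix units: the diagonal blocks $\phi(E_{ii})$ are the $K_i$-type terms $\frac12(E_{11}+E_{33})$, $\frac12(E_{22}+E_{11})$, $\frac12(E_{33}+E_{22})$ for $i=1,2,3$ respectively, while the off-diagonal terms satisfy $\phi(E_{ij}) = -\frac12 E_{ij}$ for $i\neq j$. Then $\tilde\rho_C$ is the partial transpose; since $\tilde\rho_C$ corresponds to $\tau\circ\phi$, I would use $\widetilde{\rho_C} = \sum_{ij} E_{ij}\otimes(\phi(E_{ij}))\trpse$, so each second-factor block gets transposed. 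The effect is that $\phi(E_{ij})\trpse$ for $i\neq j$ becomes $-\frac12 E_{ji}$, which is exactly the shape of the off-diagonal pieces of $w^-$.

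Next I would expand the right-hand side. The matrix $w^- = \sum_{ij}\varepsilon_{ij}E_{ij}\otimes E_{ji}$ contributes $E_{ii}\otimes E_{ii}$ on the diagonal ($\varepsilon_{ii}=+1$) and $-E_{ij}\otimes E_{ji}$ for $i\neq j$. The matrix $r = E_{11}\otimes E_{22}+E_{22}\otimes E_{33}+E_{33}\otimes E_{11}$ contributes only diagonal-block terms (in the first factor) of the form $E_{ii}\otimes E_{jj}$ with a cyclic shift. Taking $\frac12 r + \frac12 w^-$, the off-diagonal contributions come entirely from $\frac12 w^-$, giving $-\frac12 E_{ij}\otimes E_{ji}$, which I expect to match the transposed off-diagonal blocks of $\tilde\rho_C$ computed above. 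The diagonal blocks $E_{ii}\otimes(\cdot)$ should match once I combine the $\frac12 E_{ii}\otimes E_{ii}$ from $w^-$ with the cyclically shifted $\frac12 E_{ii}\otimes E_{\sigma(i)\sigma(i)}$ from $r$, reproducing the $\frac12(E_{ii}+E_{\text{shift}})$ pattern of the $K_i$.

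The only real bookkeeping subtlety — and what I expect to be the one place to be careful — is tracking the partial transposition correctly, i.e.\ being sure which factor is transposed and confirming that the cyclic structure in $r$ aligns with the cyclic pairing of indices in the transposed diagonal blocks of the Choi map (the map sends $a_{ii}$ to entries shifted by the Choi pattern $11{+}33$, $22{+}11$, $33{+}22$). Once the index conventions are pinned down, the verification is a term-by-term comparison with no analytic content, so I would simply present the two expansions side by side and observe their coefficients agree.
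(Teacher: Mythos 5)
Your overall strategy --- expand both sides in the basis $E_{ij}\otimes E_{kl}$ and match coefficients --- is exactly the ``simple calculation'' the paper has in mind, and your treatment of the off-diagonal part is correct: $\phi(E_{ij})=-\frac12 E_{ij}$ for $i\neq j$, so after the partial transpose these blocks become $-\frac12 E_{ij}\otimes E_{ji}$, which is precisely the off-diagonal part of $\frac12 w^-$. However, there is a concrete index error in the one place you yourself flagged as delicate: the diagonal blocks. From the defining formula of the Choi map, $a_{11}$ appears in the $(1,1)$ and $(2,2)$ entries of $\phi(a)$, so $\phi(E_{11})=\frac12(E_{11}+E_{22})$; likewise $\phi(E_{22})=\frac12(E_{22}+E_{33})$ and $\phi(E_{33})=\frac12(E_{33}+E_{11})$. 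This agrees with Example~\ref{ex:choi}, where the restriction of $\phi$ to $\diag 3$ has $K_1=\frac12(E_{11}+E_{22})$, $K_2=\frac12(E_{22}+E_{33})$, $K_3=\frac12(E_{11}+E_{33})$ and $\phi(E_{ii})=K_i$. What you wrote instead --- $\phi(E_{11})=\frac12(E_{11}+E_{33})$, $\phi(E_{22})=\frac12(E_{22}+E_{11})$, $\phi(E_{33})=\frac12(E_{33}+E_{22})$ --- is the transposed pattern: it records which $a_{jj}$ occur in the $(i,i)$ entry of $\phi(a)$, not where $a_{ii}$ is sent.

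This is not a harmless relabelling, because the two patterns are cyclic shifts in opposite directions and only one of them matches the $r$ in the statement. With the correct blocks, the diagonal part of $\tilde\rho_C$ is $\frac12\left[E_{11}\otimes(E_{11}+E_{22})+E_{22}\otimes(E_{22}+E_{33})+E_{33}\otimes(E_{33}+E_{11})\right]$; removing the $\frac12\sum_i E_{ii}\otimes E_{ii}$ contributed by $\frac12 w^-$ leaves exactly $\frac12\left(E_{11}\otimes E_{22}+E_{22}\otimes E_{33}+E_{33}\otimes E_{11}\right)=\frac12 r$. With your blocks you would instead be forced to the reversed cycle $r'=E_{11}\otimes E_{33}+E_{22}\otimes E_{11}+E_{33}\otimes E_{22}\neq r$, i.e.\ you would either ``verify'' a different identity or wrongly conclude that the proposition fails. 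Once the assignment $\phi(E_{ii})=K_i$ is corrected, the rest of your term-by-term comparison goes through verbatim and coincides with the paper's intended computation.
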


\begin{remark}
  Matrix $r$ is positive and it is straightforward to check that $r\in\setd$. 
\end{remark}

Let us recall the definition of generalized Choi map (see e.g.\cite{choi1973positive}):
\begin{equation*}
  \phi_{a,b,c}(x) = \psi_{a,b,c}(x) - x,
\end{equation*}
where
\begin{equation*}
  \psi_{a,b,c}(x) = \left(
    \begin{matrix}
      a x_{11} + b x_{22} + c x_{33} & 0 & 0 \\
      0 & a x_{22} + b x_{33} + c x_{11} & 0 \\
      0 & 0 & a x_{33} + b x_{11} + c x_{22}
    \end{matrix}
  \right).
\end{equation*}
Such map is positive if and only if following conditions are satisfied
\begin{enumerate}[(i)]
\item $a\ge 1$,
\item $a+b+c \ge 3$,
\item $bc\ge (2-a)^2$ if $1\le a \le 2$. 
\end{enumerate}
Now one can see that partial transpose of $w^-$ is a Choi matrix corresponding to generalized Choi map $\phi_{2,0,0}$.
Then for any $0\le \lambda\le 1$ 
\begin{equation*}
  \rho_{\lambda} = \lambda r + (1 - \lambda) w^-
\end{equation*}
corresponds to generalized Choi map $(1 - \lambda) \phi_{2,0,\lambda/(1 - \lambda)}$. The factor $(1 - \lambda)$ in front ensures that the map is always unital. Conditions under which the generalized Choi map is positive imply that for $\lambda \ge 1/2$ the Choi matrix $\rho_{\lambda}$ belongs to $\setd.$

\section{Properties of symmetries in $\mathfrak D$}
\label{sec:properties-of-symm}

To further examine the $n=3$ case we will focus on a local unitary equivalence
of those Choi matrices that are represented by symmetries. We know that in
$n=2$ case all symmetries in $\setd$ are locally unitarily equivalent to Choi
matrix representing transposition map. The natural question arise whether it
is still true in $n=3$ case or can symmetries also represent some
non-decomposable maps. 

Through this section we adopt convention that coefficients of Schmidt
decomposition are non-negative (any possible phase is included in vectors of
Schmidt decomposition). To simplify notation we used the same symbol $e_i$ to
denote basis vectors in the first and the second Hilbert space, but clearly
this is only a matter of convenience.

\subsection{Technical lemmas}
\label{sec:technical-lemmas}

\begin{lemma}
  \label{thm:rank2fullent}
  Let $s$ be a block positive symmetry, with decomposition
  $s=p-q$. Then 
  \begin{enumerate}
  \item[(i)] any eigenvector of $q$ must have Schmidt rank greater than one;
  \item[(ii)] any eigenvector of $q$ that have Schmidt rank equal to 2
    must have both Schmidt coefficients equal to $1/\sqrt{2}$.
  \end{enumerate}
\end{lemma}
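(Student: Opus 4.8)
The plan is to turn block positivity of $s$ into a single quantitative bound on the projector $q$, and then read off both statements by testing that bound against the product vectors supplied by a Schmidt basis. Using the canonical decomposition $s=\bbone-2q$ recalled before the lemma, the hypothesis $s\gebp0$ says exactly that
$(x\otimes y,\,s\,x\otimes y)=\|x\|^2\|y\|^2-2(x\otimes y,\,q\,x\otimes y)\ge0$ for all $x,y\in\complexes^n$, i.e. that
$(x\otimes y,\,q\,x\otimes y)\le\tfrac12\|x\|^2\|y\|^2$ for every product vector. Throughout I read ``eigenvector of $q$'' as a normalized vector $\xi$ lying in the range of $q$, so that $q\xi=\xi$ (the eigenvalue-$0$ eigenvectors of $q$ are irrelevant and may well be product vectors, so the statement can only concern the range).

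For part (i) I would argue by contradiction. If such a $\xi$ were a product vector, then $p\xi=0$ and $s\xi=(p-q)\xi=-\xi$, so $(\xi,s\xi)=-\|\xi\|^2<0$; evaluating block positivity on the product vector $\xi$ gives $(\xi,s\xi)\ge0$, a contradiction. Hence no eigenvector of $q$ can have Schmidt rank one.

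For part (ii) the key step is the operator inequality $\ket{\xi}\bra{\xi}\le q$, which holds because $\xi$ is a unit vector in the range of the projector $q$ (complete $\{\xi\}$ to an orthonormal basis of that range). This yields $(v,\,q\,v)\ge|(\xi,v)|^2$ for every $v$. Writing the Schmidt decomposition $\xi=\lambda_1\,e_1\otimes f_1+\lambda_2\,e_2\otimes f_2$ with $\lambda_1,\lambda_2>0$ and $\lambda_1^2+\lambda_2^2=1$ (using the paper's convention of non-negative Schmidt coefficients), and testing on the unit product vectors $v=e_i\otimes f_i$, I obtain
$\lambda_i^2=|(\xi,\,e_i\otimes f_i)|^2\le(e_i\otimes f_i,\,q\,e_i\otimes f_i)\le\tfrac12$ for $i=1,2$. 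Combined with $\lambda_1^2+\lambda_2^2=1$ this forces $\lambda_1^2=\lambda_2^2=\tfrac12$, which is the claim.

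The argument is short, so rather than a real obstacle there is one delicate point and one clarifying remark. The delicate point is the passage from ``$\xi$ lies in the range of $q$'' to the bound $\ket{\xi}\bra{\xi}\le q$, together with fixing the intended meaning of ``eigenvector''. The remark is why (ii) is special to Schmidt rank two: the identical computation for rank $r$ gives only $\lambda_i^2\le\tfrac12$ for all $i$ with $\sum_i\lambda_i^2=1$, which for $r\ge3$ does not pin the coefficients down; it is precisely the rigidity of two non-negative numbers that sum to $1$ and are each at most $\tfrac12$ that forces equality in the rank-two case.
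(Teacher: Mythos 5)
Your proof is correct and follows essentially the same route as the paper's: part (i) by testing block positivity of $s=\bbone-2q$ directly on a hypothetical rank-one eigenvector, and part (ii) by combining $\ket{\xi}\bra{\xi}\le q$ with the block-positivity bound $(e_i\otimes f_i,\,q\,e_i\otimes f_i)\le\tfrac12$ on the Schmidt product vectors, which together with $\lambda_1^2+\lambda_2^2=1$ forces both coefficients to equal $1/\sqrt2$. Your explicit justification of the step $P_\xi\le q$ (and of the reading of ``eigenvector'' as a range vector of $q$) is a useful clarification of what the paper leaves implicit, but it is the same argument.
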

\begin{proof}
  Block positivity condition implies that for normalized vectors
  $$ (x\otimes y, q x \otimes y) \le \frac{1}{2}.$$
  The (i) part is then obvious, as one could take for $x\otimes y$
  eigenvector of $q$ that have Schmidt rank equal to one and violate
  above inequality.

  For (ii) let us consider the Schmidt rank 2 normalized eigenvector $v$ of
  $q$. Its Schmidt decomposition can be written as $v=\cos\alpha\; e_1\otimes
  f_1+\sin\alpha\; e_2\otimes f_2$ with $\alpha\in(0,\pi/2)$. Then:
  \begin{align*}
    (e_1\otimes f_1, P_v e_1\otimes f_1) &= \cos^2 \alpha,\\
    (e_2\otimes f_2, P_v e_2\otimes f_2) &= \sin^2 \alpha,
  \end{align*}
  where $P_v$ denotes orthogonal projector on vector $v$. But when
  $\cos^2\alpha\le1/2$ then $\sin^2\alpha\ge1/2$ with equality only when both
  equal $1/2$. So Schmidt coefficients of $v$ must be equal to $1/\sqrt{2}$.
\end{proof}

Following lemma about $\alpha$-norm will be used very
often and we will use it without an explicit mention.

\begin{lemma}
  \label{thm:alpha-norm}
  Let $\rho\in\mathfrak D$. Then for any one-dimensional projector $p$ 
  \begin{equation*}
    \alpha(\rho) = \Tr (\bbone\otimes p)\rho = 1 
 \end{equation*}
\end{lemma}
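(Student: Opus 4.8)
The plan is to exploit the fact that, by the correspondence recalled just before Lemma~\ref{thm:alpha-norm-lem}, membership $\rho\in\setd$ means $\rho=\rho_\phi$ for a normalized \emph{unital} positive map $\phi\in\pmaps(\malg n)$, so in particular $\phi(\bbone)=\bbone$. The asserted double equality then splits into two independent pieces: the equality $\alpha(\rho)=1$, which is immediate because $\alpha(\rho)=1$ is one of the defining conditions of $\setd$; and the identity $\Tr((\bbone\otimes p)\rho)=1$ for every rank one projector $p$, which is where unitality does the work.

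For the second piece I would compute directly from $\rho_\phi=\sum_{ij}E_{ij}\otimes\phi(E_{ij})$. Multiplying by $\bbone\otimes p$, using that the trace of a simple tensor factorizes, and using $\Tr E_{ij}=\delta_{ij}$, I obtain
\begin{equation*}
  \Tr\big((\bbone\otimes p)\rho_\phi\big)=\sum_{ij}\Tr(E_{ij})\,\Tr\big(p\,\phi(E_{ij})\big)=\sum_i\Tr\big(p\,\phi(E_{ii})\big)=\Tr\big(p\,\phi(\bbone)\big).
\end{equation*}
Unitality then gives $\Tr(p\,\phi(\bbone))=\Tr p=1$, since $p$ is a rank one (hence trace one) projector; crucially the value does not depend on the choice of $p$, and this uniformity is the real content of the lemma.

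A point I would make explicit is why unitality may be taken for granted. One option is simply to invoke the justification in \cite{Majewski:2010fk} that $\setd$ is exactly the image of the normalized unital positive maps. Alternatively, it can be derived on the spot from the defining conditions of $\setd$: by the isometry of Lemma~\ref{thm:basic-lemma} one has $\norm{\phi}=\alpha(\rho)=1$, for a positive map $\norm{\phi}=\norm{\phi(\bbone)}$, and $\phi(\bbone)\ge0$ with $\Tr\phi(\bbone)=\Tr\rho=n$; an $n\times n$ positive semidefinite matrix all of whose eigenvalues are at most $1$ and whose trace is $n$ must equal $\bbone$. As a consistency check, since $\bbone$ is itself a symmetry, Lemma~\ref{thm:alpha-norm-lem} already forces $\alpha(\rho)\ge\abs{\Tr(\rho\,\bbone\otimes p)}=1$, so the supremum defining $\alpha(\rho)$ is in fact attained at $s=\bbone$ for every $p$.

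I do not expect a genuine obstacle here: the computation is two lines once the Choi form is substituted. The only things requiring care are the bookkeeping of the isomorphism conventions (that the trace pairing returns $\phi(\bbone)$ with no leftover transpose, which is automatic precisely because the first tensor factor $\bbone$ is symmetric) and a clean statement of the source of unitality.
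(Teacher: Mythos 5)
Your proof is correct, but it follows a genuinely shorter route than the paper's. You treat $\alpha(\rho)=1$ as definitional for $\setd$ and reduce the whole lemma to the identity $\Tr\big((\bbone\otimes p)\rho\big)=\Tr\big(p\,\phi(\bbone)\big)=1$, which is exactly the unitality computation; your matrix-unit expansion is correct, your remark that $\Tr E_{ij}=\delta_{ij}$ kills any transpose bookkeeping is apt, and your optional on-the-spot derivation of unitality (isometry giving $\norm{\phi}=\alpha(\rho)=1$, then $\norm{\phi}=\norm{\phi(\bbone)}$ for positive maps, then the trace condition $\Tr\phi(\bbone)=n$ forcing $\phi(\bbone)=\bbone$) is sound. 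The paper, by contrast, never invokes the normalization $\alpha(\rho)=1$ from the definition of $\setd$: it proves the bound $\abs{\Tr \rho\, s\otimes p}\le 1$ for \emph{every} symmetry $s$ and rank-one projector $p$, using the Kadison--Schwarz inequality $\phi_\rho(s)^2\le\phi_\rho(s^2)=\bbone$ together with a spectral-decomposition argument, and only then concludes $\alpha(\rho)=1$ via Lemma~\ref{thm:alpha-norm-lem}, the maximum being attained at $s=\bbone$ by the same unitality computation you perform. So both arguments hinge on unitality for the identity $\Tr(\bbone\otimes p)\rho=1$ (where your computation and the paper's essentially coincide), but the approaches differ in what they deliver: yours is more elementary and uses nothing beyond the Choi representation, while the paper's proves strictly more, namely that positivity and unitality alone force $\alpha(\rho_\phi)=1$, so the $\alpha$-normalization condition in the definition of $\setd$ is automatically satisfied by Choi matrices of unital positive maps. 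Your explicit attention to the source of unitality is a useful supplement either way, since the paper simply asserts the existence of a positive unital $\phi_\rho$ with a reference to \cite{Majewski:2010fk}.
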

\begin{proof}
  From the Lemma~\ref{thm:alpha-norm-lem} we have
  \begin{equation*}
    \alpha(\rho) = \max\scaled\set{\abs{\Tr \rho s\otimes
        p}\setdef s\in\symmetries(\hilbert
      H),p\in\operatorname{Proj}^1(\hilbert H)}
  \end{equation*}
  Because $\rho\in\mathfrak D$, then there exists positive and
  unital map $\phi_\rho\colon\maps(\hilbert H)\to\maps(\hilbert H)$.
  For any symmetry $s$ and projector $p$ we thus have
  \begin{equation*}
    \Tr \rho s\otimes p = \Tr \phi_\rho(s)p\trpse.
  \end{equation*}
  Next, we notice that from the Kadison inequality we have that
  \begin{equation*}
    \bbone = \phi_\rho(s^2) \ge \phi_\rho(s)^2,
  \end{equation*}
  thus $\bbone \ge \abs{\phi_\rho(s)}$ and by the Proposition
  2.2.13c in \cite{bratteli2003operator},
  \begin{equation*}
    p\trpse \ge p\trpse\abs{\phi_\rho(s)}p\trpse.
  \end{equation*}
  Let $p\trpse=\ket x\bra x$. By $P_s(\lambda)$ we denote spectral projections
  of $\phi_\rho(s)$. Taking trace we have
  \begin{align*}
    \Tr p\trpse &\ge \Tr p\trpse\abs{\phi_\rho(s)}p\trpse = (x, \abs{\phi_
\rho(s)} x)
    = \sum_{\lambda\in\sigma(\phi_\rho(s))} \abs{\lambda} (x,P_s(\lambda) x) 
\\
    &\ge \scaled\abs{\sum_{\lambda\in\sigma(\phi_\rho(s))} \lambda 
(x,P_s(\lambda) x)}
\unscale
    = \abs{ (x,\phi_\rho(s) x) } = \abs{\Tr \phi_\rho(s)p\trpse}  = 
    \abs{\Tr \rho s\otimes (p\trpse)\trpse} \\
    &= \abs{\Tr \rho s\otimes p}.
  \end{align*}
  On the other hand we have $1=\Tr p\trpse = \Tr \phi_\rho(\bbone)p\trpse =
  \Tr \rho \bbone\otimes p$, for any $p$. 
\end{proof}

The following Lemma is in fact valid for any finite dimensional Hilbert
space $\hilbert H$. Although it seems to be well known for sake of
completeness we give here the proof because this lemma is the crucial element of
many proofs in the sequel. 

\begin{lemma}
  \label{thm:schmidtcoeff}
  Let $x=\sum_i \lambda_i e_i\otimes f_i$ be the Schmidt decomposition
  of vector $x$. Then for any
  one-dimensional projector $p$
  \begin{equation*}
    \Tr (\bbone\otimes p) P_x \le \max_i \lambda_i^2.
  \end{equation*}
  Moreover $\Tr (\bbone\otimes P_z) P_x = \max_i \lambda_i^2$ if and only
  if 
  \[
    z\in\operatorname{span} \set{f_i\setdef \text{where $i$ is such that } \lambda_i = \max_k \lambda_k}.
  \] 
\end{lemma}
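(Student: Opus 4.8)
The plan is to compute $\Tr(\bbone\otimes P_z)P_x$ directly from the Schmidt decomposition, reduce it to a single scalar sum, and then read off both the bound and the equality condition from Bessel's inequality. Write $p = P_z = \ket{z}\bra{z}$ for a unit vector $z$ (every rank-one projector has this form) and use $\Tr(A\,P_x) = \braket{x|A|x}$. Substituting $x = \sum_i \lambda_i e_i\otimes f_i$ and acting with $\bbone\otimes\ket{z}\bra{z}$, the orthonormality $\braket{e_j|e_i}=\delta_{ij}$ collapses the double sum (and uses that the $\lambda_i$ are real and non-negative), giving
\begin{equation*}
  \Tr(\bbone\otimes P_z)P_x = \sum_i \lambda_i^2\,\abs{\braket{z|f_i}}^2.
\end{equation*}

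Next I would apply Bessel's inequality: since $\set{f_i}$ is orthonormal and $z$ is a unit vector, $\sum_i \abs{\braket{z|f_i}}^2 \le \norm{z}^2 = 1$. Bounding each $\lambda_i^2$ by $\max_k\lambda_k^2$ then yields
\begin{equation*}
  \sum_i \lambda_i^2\abs{\braket{z|f_i}}^2 \le \Big(\max_k\lambda_k^2\Big)\sum_i\abs{\braket{z|f_i}}^2 \le \max_k\lambda_k^2,
\end{equation*}
which is the claimed inequality.

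The main point, and the part that needs care, is the equality case, which forces both inequalities above to be saturated simultaneously. The Bessel step is an equality exactly when $z$ lies in $\operatorname{span}\set{f_i}$. Granting this, the numbers $\set{\abs{\braket{z|f_i}}^2}_i$ form a probability distribution, so the convex bound $\sum_i \lambda_i^2\abs{\braket{z|f_i}}^2 \le \max_k\lambda_k^2$ is an equality iff all the weight sits on the indices realizing the maximum, i.e.\ $\abs{\braket{z|f_i}}^2 = 0$ whenever $\lambda_i < \max_k\lambda_k$. Taken together, these two conditions say precisely that $z\in\operatorname{span}\set{f_i\setdef \lambda_i = \max_k\lambda_k}$. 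The converse is a one-line substitution: if $z$ is a unit vector supported on the maximal $f_i$'s, the sum telescopes to $\max_k\lambda_k^2$.

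I expect no serious obstacle beyond bookkeeping; the only thing to watch is the possibility of degenerate (repeated) maximal Schmidt coefficients, which is exactly why the equality locus is a span rather than a single vector $f_i$. The convexity argument above already accommodates this, so no separate case analysis is needed.
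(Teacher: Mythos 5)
Your proof is correct and takes essentially the same approach as the paper's: both reduce $\Tr (\bbone\otimes P_z) P_x$ to the scalar sum $\sum_i \lambda_i^2 \abs{(z,f_i)}^2$ and then obtain the bound and the equality condition by elementary estimates on that sum. The only cosmetic difference is that you invoke Bessel's inequality for the partial orthonormal family $\set{f_i}$, whereas the paper completes $\set{f_i}$ to an orthonormal basis and compares coefficients directly; the substance of the equality analysis is identical.
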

\begin{proof}
  Let $z=\sum_i^k z_i f_i$ (if $k<N$, where $N$ is a dimension of
  a corresponding Hilbert space then we define $f_i$ for
  $i=k+1,\dots,N$ as mutually orthonormal vectors to $f_i,
  i=1,\dots,k$, such that $\set{f_i}_{i=1,\dots N}$ is a basis). Then 
  \begin{equation*}
    \begin{split}
      \Tr (\bbone\otimes P_z)P_x &= \sum_{ij} \lambda_i\lambda_j \Tr
      (\bbone\otimes P_z) \ket{e_i\otimes f_i}\bra{e_j\otimes f_j} \\
      &= \sum_{ij,mn} \lambda_i\lambda_j z_m \cconj z_n \Tr
      (\bbone\otimes \ket{f_m}\bra{f_n})
      \ket{e_i}\bra{e_j}\otimes\ket{f_i}\bra{f_j}\\ &= \sum_{ij,mn}
      \lambda_i\lambda_j z_m \cconj z_n \delta_{ij}
      \delta_{ni}\delta_{mj} \\ &= \sum_{ij} \lambda_i\lambda_j z_j
      \cconj z_i\delta_{ij} = \sum_i \lambda_i^2\abs{z_i}^2.
    \end{split}
 \end{equation*}
 Denote $\lambda_{\max}=\max_i \lambda_i$. Then
 \begin{equation*}
   \sum_i \lambda_i^2\abs{z_i}^2\le \sum_i \lambda_{\max}^2\abs{z_i}^2
   = \lambda_{\max}^2.
 \end{equation*}
 For a second claim, let us assume that $\lambda_i$ are sorted and the first
 $n$ of $\lambda_i$ are equal $\lambda_{\max}$ ($n$ can be smaller than $N$,
 in particular $n$ can be equal 1). The ``if'' part is obvious: substitution
 of $\lambda_{\max}$ for $\lambda_i$ does not change anything. The ``only if''
 part follows from the fact, that if $z=\sum_{i=1}^n z_i f_i + z_{j} f_{j}$,
 $j>n$ (with possible $z_i=0$ for $i\in1\dots n$), then $\lambda_j^2
 \abs{z_j}^2<\lambda_{\max}^2 \abs{z_j}^2$.
\end{proof}

We adopt following notation for partial transpose: $\bbone\otimes \tau
\equiv \ptranspose$.

\begin{lemma}
\label{thm:taupad}
Let $a\in \bmaps(\hilbert H)\otimes \bmaps(\hilbert H)$ and $U,V$ are
unitaries acting on $\hilbert H$. Then
  \begin{equation*}
    (\ptranspose\circ\ad_{U\otimes V})a = (\ad_{U\otimes \tau(V\conj)}\circ
    \ptranspose)a
  \end{equation*}
\end{lemma}
\begin{proof}
  Let $a = \sum_i a_i\otimes b_i$ then 
  \begin{equation*}
    \begin{split}
      \ptranspose(U\otimes V a U\conj\otimes V\conj) &= \sum_i (U a_i
      U\conj)\otimes(\tau(V b_i V\conj)) \\&= \sum_i (U a_i
      U\conj)\otimes(\tau(V\conj)\tau(b_i)\tau(V)) \\&= 
      U\otimes\tau(V\conj)\left(\sum_i a_i\otimes \tau(b_i)\right) U\conj\otimes
\tau(V) \\&=
      \left(U\otimes\tau(V\conj)\right) \ptranspose(a) \left(U\conj\otimes
\tau(V)\right).
    \end{split}
  \end{equation*}
\end{proof}

\subsection{Building blocks of symmetries in $\mathfrak D$}
\label{sec:building-blocks}

\begin{proposition}
  Let $\xi_1,\xi_2,\xi_3$ be three orthonormal vectors in $\hilbert
  H\otimes \hilbert H$ with Schmidt decompositions of the form: 
 \begin{align*}
    \xi_1 &= \sum_{i=1}^3 \lambda_i e_i\otimes f_i,\quad \lambda_i>0,
    \sum_{i=1}^3 \lambda_i^2 = 1,\\
    \xi_2 &= \frac{1}{\sqrt{2}} (h_1\otimes g_1 + h_2\otimes g_2),\\
    \xi_3 &= \frac{1}{\sqrt{2}} (k_1\otimes l_1 + k_2\otimes l_2).
  \end{align*}
  Then the symmetry
  $s=\bbone - 2 q$, where $q=\sum_i P_{\xi_i}$ is not in $\mathfrak D$.
\end{proposition}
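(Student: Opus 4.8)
The plan is to assume $s\in\setd$ and derive a contradiction with Lemma~\ref{thm:rank2fullent}. Since the $\xi_i$ are orthonormal, $q=\sum_i P_{\xi_i}$ is the orthogonal projector onto $W:=\lspan\set{\xi_1,\xi_2,\xi_3}$, and every nonzero $v\in W$ is an eigenvector of $q$ for the eigenvalue $1$. If $s=\bbone-2q$ were block positive, then Lemma~\ref{thm:rank2fullent} would force every $v\in W$ to have Schmidt rank at least $2$, and every Schmidt-rank-$2$ vector of $W$ to have both Schmidt coefficients equal to $1/\sqrt2$. So it suffices to produce a single $v\in W$ that is either a product vector or has Schmidt rank $2$ with unequal coefficients. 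I identify $\complexes^3\otimes\complexes^3$ with $\malg3$ via $\sum_{kl}c_{kl}\,e_k\otimes e_l\mapsto(c_{kl})$; under this identification Schmidt rank becomes matrix rank, Schmidt coefficients become singular values, the scalar product becomes the Hilbert--Schmidt product, and $\xi_1,\xi_2,\xi_3$ correspond to Hilbert--Schmidt orthonormal matrices $M_1,M_2,M_3$ with $M_1$ invertible (full Schmidt rank) and $M_2,M_3$ of rank $2$ with both nonzero singular values equal.

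Next I would exhibit the rank-$\le2$ locus. The form $D(a,b,c)=\det(aM_1+bM_2+cM_3)$ is a homogeneous cubic that is not identically zero, because $D(1,0,0)=\det M_1\neq0$; hence the curve $C=\set{[a:b:c]\setdef D=0}$ is nonempty, and its points are exactly the Schmidt-rank-$\le2$ vectors of $W$. Under the standing assumption every point of $C$ has rank exactly $2$ (no product vectors, by part~(i)) and equal singular values (by part~(ii)). To contradict this I would restrict to the pencil $\lspan\set{M_1,M_2}$. Since $\det M_1\neq0$ and $\det M_2=0$, one has $\det(aM_1+bM_2)=a\,Q(a,b)$ with $Q(a,b)=(\det M_1)a^2+\Tr(\operatorname{adj}(M_1)M_2)\,ab+\Tr(\operatorname{adj}(M_2)M_1)\,b^2$. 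The factor $a$ recovers the root $M_2$, while the two roots of $Q$ give rank-$2$ vectors $v_\pm\in W$ in which the invertible part $M_1$ enters with nonzero weight.

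The finish is to show that $v_+$ or $v_-$ (and, if this pencil is inconclusive, the analogous vectors coming from $\lspan\set{M_1,M_3}$) has unequal singular values. After using a local unitary $\ad_{U\otimes V}$---which preserves both $\setd$ and block positivity because it maps product vectors to product vectors---to bring $M_1$ to the diagonal form $\operatorname{diag}(\lambda_1,\lambda_2,\lambda_3)$, this becomes the concrete statement that the two nonzero singular values of $M_1+tM_2$, evaluated at the parameters $t$ where its determinant vanishes, are distinct; the presence of the invertible $M_1$ is what separates the two coalesced singular values of $M_2$. Verifying this and assembling the contradiction with part~(ii) finishes the proof.

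I expect the main obstacle to be exactly this last step: ruling out, for \emph{every} admissible configuration, the possibility that the whole curve $C$ consists of rank-$2$ matrices with equal singular values. The difficulty is that $C$ may be reducible (for instance a line through $M_2,M_3$ together with a conic), so the offending rank-$2$-unequal vectors need not lie near $M_2$ or $M_3$, and a purely local perturbation of the degenerate singular values at $M_2$ can fail to detect them. What is really needed is the global fact that the real-analytic ``equal coefficient'' condition $2\Tr((N\conj N)^2)=(\Tr(N\conj N))^2$ cannot hold identically along the complex curve $C$ once $M_1$ is invertible and $M_2,M_3$ are rank-$2$ with equal coefficients and orthogonal to $M_1$; I would establish this either by the explicit singular-value computation in the normalised frame above or by a degree/intersection argument on $C$.
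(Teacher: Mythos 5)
Your argument reduces the Proposition to a single claim: that the singular locus of $W=\lspan\set{\xi_1,\xi_2,\xi_3}$ must contain a unit vector which is either a product vector or has Schmidt rank $2$ with unequal coefficients, so that Lemma~\ref{thm:rank2fullent} is violated. You then leave precisely this claim unproven, and your pencil computation does not deliver it: the roots of $Q$ may coincide with $M_2$ (when $\Tr(\operatorname{adj}(M_2)M_1)=0$), and even when they do not, nothing prevents them from being rank-$2$ matrices with equal singular values, in which case the pencils through $M_1,M_2$ and through $M_1,M_3$ are both inconclusive --- as you yourself concede in the final paragraph. What you have is therefore a correct reformulation (the matrix picture, the cubic $\det(aM_1+bM_2+cM_3)$, the factorization $\det(aM_1+bM_2)=a\,Q(a,b)$) plus an unverified global statement about that cubic, and the entire difficulty of the Proposition is concentrated in the unverified part. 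That is a genuine gap, not a routine verification: the ``balanced'' condition $2\Tr((N^*N)^2)=(\Tr N^*N)^2$ is non-holomorphic, the curve can be reducible, and you offer no mechanism that rules out the whole singular locus being balanced.

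There is also a structural reason to doubt the approach can be completed at all: you use only block positivity of $s$, whereas membership in $\setd$ additionally requires $\alpha(s)=1$, and that is the hypothesis the paper's proof actually exploits. The paper splits on the largest Schmidt coefficient $\lambda_1$ of $\xi_1$. Only for $\lambda_1>1/\sqrt2$ is block positivity itself violated (your part (i) type argument). For $\lambda_1\le 1/\sqrt2$ the contradiction comes from Lemma~\ref{thm:alpha-norm}, which forces $\Tr\,(\bbone\otimes p)\,s=1$ for every rank-one projector $p$: taking $p=P_{l_3}$ with $l_3\perp l_1,l_2$, Lemma~\ref{thm:schmidtcoeff} gives $\Tr(\bbone\otimes P_{l_3})q\le \lambda_1^2+\tfrac12+0<1$, hence $\Tr(\bbone\otimes P_{l_3})s>1$ (the boundary case $\lambda_1=1/\sqrt2$ needs subcases, but the same tool). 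So the paper never proves, and does not need, the statement your plan requires --- namely that block positivity \emph{alone} excludes a Schmidt-rank-$3$ vector from the range of $q$. That stronger statement may well be false (a block positive $s$ of the given form with $\alpha(s)>1$ is excluded by nothing you or the paper say), and the authors' retreat to the $\alpha$-norm exactly in the regime $\lambda_1\le1/\sqrt2$ suggests they could not make block positivity carry the argument. To repair your proof you must either establish the global real-algebraic claim about the cubic, or reinstate the $\alpha$-normalization --- and once that is available, the paper's short trace argument makes the algebraic machinery unnecessary.
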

\begin{proof}
   Assume that Schmidt coefficients of $\xi_1$ are sorted and the
  greatest is $\lambda_1$. We will consider separately three cases
  exhausting all possible values for $\lambda_1$, namely
  $\lambda_1>1/\sqrt{2}, \lambda_1\in[1/\sqrt{3},1/\sqrt{2})$, and
  $\lambda_1=1/\sqrt{2}$ ($\lambda_1$ is the greatest Schmidt
  coefficient, so must be greater or equal to $1/\sqrt3$). Last two
  parts will be proved by contradiction: we assume that $s$ is in
  $\mathfrak D$ and show that then $\alpha(s)\neq 1$. 

  If $\lambda_1>1/\sqrt{2}$ then $(e_1\otimes f_1, q e_1\otimes
  f_1)>1/2$ so $s$ is not block positive.

  If $\lambda_1\in[1/\sqrt3,1/\sqrt2)$, take
  the $l_3$ such that $\set{l_i}$ is a basis in $\hilbert H$. Then $\Tr
  (\bbone\otimes P_{l_3}) P_{\xi_3} = 0$.  But from the
  Lemma~\ref{thm:schmidtcoeff}: $\Tr (\bbone\otimes
  P_{l_3}) P_{\xi_1}<1/2$ and $\Tr (\bbone\otimes
  P_{l_3})P_{\xi_2}\le1/2$. Consequently 
  \begin{equation*}   
    \Tr (\bbone\otimes P_{l_3}) s = \Tr (\bbone\otimes P_{l_3}) \bbone
    - 2 \Tr (\bbone\otimes P_{l_3}) q = 3 - 2 \Tr (\bbone\otimes
    P_{l_3}) q > 3-2(1/2+1/2+0)=1,
  \end{equation*}
  and $s$ can not be in $\mathfrak D$.
 
  It remains to consider the case when $\lambda_1 = 1/\sqrt2$. Notice
  that then $\lambda_2,\lambda_3<1/\sqrt{2}$.
  Let $g_3,h_3,k_3$ and $l_3$ be orthonormal vectors to, respectively,
  $\set{g_1,g_2}, \set{h_1,h_2}, \set{k_1,k_2}$ and
  $\set{l_1,l_2}$. Then either (a) $f_1 \neq g_3$ or (b) $f_1=g_3$. In
  the case (a), from the Lemma~\ref{thm:schmidtcoeff} we infer that
  \begin{equation*}    
    \Tr (\bbone\otimes P_{g_3}) P_{\xi_1} < \frac{1}{2}, 
  \end{equation*}
  as the maximal Schmidt coefficient equals to $1/\sqrt2$, and $g_3$
  does not belong to one dimensional subspace spanned by
  $f_1$. But as $\Tr (\bbone\otimes P_{g_3}) P_{\xi_2} = 0$ ($g_3$ is
  orthogonal to $g_1$ and $g_2$), and $\Tr (\bbone\otimes P_{g_3})
  P_{\xi_3}\le1/2$ (Lemma~\ref{thm:schmidtcoeff} again), we conclude that
  \begin{equation*}
    \Tr (\bbone\otimes P_{g_3}) s > 3 - 2(1/2+1/2) = 1,
  \end{equation*}
  and $s$ is not in $\mathfrak D$ this case. 

  If $f_1=g_3$, then we repeat previous
  reasoning, i.e. either (b1) $f_1\neq l_3$ or (b2) $f_1=l_3$. The
  case (b1) can be treated exactly in the same manner as it was done
  previously in the case
  (a): $\Tr (\bbone\otimes P_{l_3}) P_{\xi_1} < \frac{1}{2}$ and the
  rest follows as before, so $s\notin \mathfrak{D}$. 

  For (b2) we conclude that $g_3=l_3$ and notice that $\Tr (\bbone
  \otimes P_{g_1}) P_{\xi_2}=1/2$ (obvious) and $\Tr (\bbone\otimes
  P_{g_1}) P_{\xi_3}=1/2$ (again Lemma~\ref{thm:schmidtcoeff}, as
  $g_1$ being orthogonal to $l_3=g_3$ belongs to the subspace spanned
  by $l_1,l_2$) and $\Tr (\bbone\otimes P_{g_1}) P_{\xi_1}$ must be
  strictly greater than zero as $\set{f_1,f_2,f_3}$ spans whole
  $\hilbert H$, so
  \begin{equation*}
    \Tr (\bbone\otimes P_{g_1}) q = \Tr (\bbone\otimes P_{g_1})
    P_{\xi_1}+\Tr (\bbone\otimes P_{g_1}) P_{\xi_2}+ \Tr (\bbone\otimes
    P_{g_1}) P_{\xi_3} > 1,
  \end{equation*}
  and also it that case $\Tr (\bbone\otimes P_{g_1}) s\neq 1$. We
  excluded all possibilities, so such $s$ cannot be in $\mathfrak D$.
\end{proof}

In the tensor product $\hilbert H\otimes\hilbert H$ the subspace of Schmidt
rank 3 vectors is one dimensional
\cite{cubitt2007dimension,parthasarathy2004maximal}, thus it is impossible to
have two or more such orthonormal vectors. Thus we arrive at the following
conclusion.

\begin{corollary}
  \label{thm:schmidt-rank2}
  Let $s$ be a block positive symmetry in $\mathfrak D$, and
  $s=p-q$. Then all eigenvectors of $q$ are 
  Schmidt rank 2 vectors with both Schmidt coefficients equal to $1/\sqrt2$.
\end{corollary}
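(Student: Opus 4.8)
The plan is to pin down the range of $q$ using the normalization built into $\mathfrak D$ and then to rule out full Schmidt rank vectors inside it. First I would use that $s\in\mathfrak D$ forces $\Tr s = n = 3$; writing $s=\bbone-2q$ this gives $\Tr s = 9 - 2\operatorname{rank}(q) = 3$, so $\operatorname{rank}(q)=3$ and the range $V$ of $q$ is a three dimensional subspace of $\hilbert H\otimes\hilbert H$. An eigenvector of $q$ is exactly a nonzero vector of $V$, so by Lemma~\ref{thm:rank2fullent} every such vector has Schmidt rank $2$ or $3$, and the rank $2$ ones automatically have both Schmidt coefficients equal to $1/\sqrt2$. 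Thus the entire content of the corollary is the exclusion of Schmidt rank $3$ vectors from $V$.

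Second, I would bring in the cited dimension result: a subspace of $\complexes^3\otimes\complexes^3$ all of whose nonzero vectors have full Schmidt rank is at most one dimensional. Since $\dim V = 3$, the set of rank $\le 2$ vectors in $V$ — the zero locus in $V$ of the determinant viewed as a cubic form on the $3\times3$ matrices — is nonempty, and by Lemma~\ref{thm:rank2fullent}(i) it consists entirely of rank $2$ vectors with equal coefficients. Now suppose, for contradiction, that $V$ also contains a full rank vector $\xi_1$. The aim is to complete $\xi_1$ to an orthonormal basis $\set{\xi_1,\xi_2,\xi_3}$ of $V$ in which $\xi_2,\xi_3$ are rank $2$ (hence of the equal coefficient form), since then $q=\sum_i P_{\xi_i}$ is precisely the operator of the preceding Proposition, which shows $s\notin\mathfrak D$ — the desired contradiction. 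One produces $\xi_2$ by intersecting the rank $\le 2$ locus with the two dimensional slice $\xi_1^\perp\cap V$ (a determinant cubic restricted to a plane always vanishes), and then $\xi_3$ spans the remaining line.

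The hard part is exactly this last reduction: the orthogonal complement of a chosen rank $2$ vector inside $V$ need not itself be rank $2$, so matching the arbitrary $V$ to the rigid normal form of the Proposition is delicate. I therefore expect the cleanest route to bypass the Proposition's normal form and argue directly from the $\mathfrak D$ condition. By Lemma~\ref{thm:alpha-norm} applied to $s=\bbone-2q$ one has $\Tr(\bbone\otimes P_z)\,q = 1$ for every unit vector $z$, i.e. $\Tr_1 q = \bbone$, where $\Tr_1$ traces out the first factor. Using Lemma~\ref{thm:schmidtcoeff}, $\Tr(\bbone\otimes P_z)P_{\xi_i}\le (\max_k\lambda^{(i)}_k)^2$ with equality only for $z$ in the span of the maximal Schmidt directions, a full rank $\xi_1$ forces $\Tr(\bbone\otimes P_z)\,q>1$ for a suitable $z$ chosen outside those directions — exactly the mechanism already exploited in the proof of the preceding Proposition. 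This violates $\Tr_1 q=\bbone$ and hence membership in $\mathfrak D$, so $V$ contains no full rank vector and every eigenvector of $q$ is a Schmidt rank $2$ vector with both coefficients $1/\sqrt2$.
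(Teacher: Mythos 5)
Your setup is correct and matches the paper's: $\Tr s=3$ forces $\operatorname{rank}(q)=3$, and Lemma~\ref{thm:rank2fullent} reduces everything to excluding Schmidt rank~3 vectors from the range $V$ of $q$. You also correctly spotted the delicate point that the paper itself passes over rather quickly: to invoke the preceding Proposition one needs an orthonormal basis of $V$ consisting of \emph{one} rank-3 vector and \emph{two} rank-2 vectors, and completing a rank-3 vector $\xi_1$ by a rank-2 vector $\xi_2\in\xi_1^\perp\cap V$ does not guarantee that the remaining direction $\xi_3$ has rank~2. Recognizing this as the hard part is a genuine insight.

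The problem is that the ``direct argument'' you substitute for this reduction does not work, so your proof has a real gap. The mechanism is stated backwards: if $z$ is chosen \emph{outside} the span of the maximal Schmidt directions of a full-rank $\xi_1$, Lemma~\ref{thm:schmidtcoeff} gives $\Tr(\bbone\otimes P_z)P_{\xi_1}<\lambda_{\max}^2\le 1/2$, which can only \emph{lower} $\Tr(\bbone\otimes P_z)q$; it can never force $\Tr(\bbone\otimes P_z)q>1$. Nor does it force $\Tr(\bbone\otimes P_z)q<1$, since the two remaining eigenvectors may each contribute up to $1/2$, so no contradiction with Lemma~\ref{thm:alpha-norm} follows. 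To violate the normalization one must control all three contributions at once: the Proposition does this by annihilating the contribution of a \emph{rank-2} eigenvector (taking $z$ orthogonal to its second-factor support), which forces the other two contributions to equal exactly $1/2$, and then splitting into cases according to whether the largest coefficient of the rank-3 vector is $<1/\sqrt2$ or $=1/\sqrt2$ --- the latter case, which block positivity permits, is exactly the long cases (a), (b1), (b2) of its proof, where the contradiction is obtained with $z$ chosen \emph{inside} suitable supports so that $\Tr(\bbone\otimes P_z)q>1$. Your sketch reproduces none of this, and in particular it leaves untouched precisely the configurations that made you abandon the reduction: a basis of $V$ with two rank-3 vectors (then there is no rank-2 support to project away from), or a rank-3 vector whose largest coefficient equals $1/\sqrt2$. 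Hence the concluding claim that $V$ contains no full-rank vector is not established; closing the argument essentially requires the Proposition's case analysis (or the paper's route through it), not a one-line bypass of it.
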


\subsection{Local unitary equivalence of a certain class of symmetries}
\label{sec:local-unit-equiv}

It will be less complicated if we show locally unitary
equivalence to the following symmetry in $\mathfrak D$ (which is
locally unitary equivalent to symmetry corresponding to transposition
map). 

\begin{lemma}
\label{thm:s0}
  Let 
  \begin{align*}    
    x_1 &= \frac{1}{\sqrt{2}} (e_1\otimes e_1 + e_2\otimes e_2),\\
    x_2 &= \frac{1}{\sqrt{2}} (e_1\otimes e_3 + e_3\otimes e_2),\\
    x_3 &= \frac{1}{\sqrt{2}} (e_2\otimes e_3 - e_3\otimes e_1).
  \end{align*}
  Then $s_0 = \bbone - 2 \sum_i P_{x_i}$ is a block positive symmetry
  in $\mathfrak D$ locally unitary equivalent to the symmetry
  corresponding to the transposition map.
\end{lemma}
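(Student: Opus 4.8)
The plan is to verify the three claimed properties of $s_0$ separately: that it is a symmetry, that it is block positive and lies in $\setd$, and that it is locally unitarily equivalent to the transposition map's Choi matrix.

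First I would check that $s_0$ is a symmetry. Since $s_0 = \bbone - 2q$ with $q = \sum_i P_{x_i}$, it is automatically self-adjoint, and $s_0^2 = \bbone$ holds precisely when $q$ is an orthogonal projector, i.e.\ when the $x_i$ are orthonormal. So the first step is the routine verification that $\set{x_1,x_2,x_3}$ is an orthonormal set: each $x_i$ has norm one, and the pairwise inner products vanish because the tensor-product basis vectors appearing in distinct $x_i$ are distinct (e.g.\ $x_1$ uses $e_1\otimes e_1, e_2\otimes e_2$ while $x_2$ uses $e_1\otimes e_3, e_3\otimes e_2$, sharing no common basis ket). This is a short computation.

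Next I would establish block positivity and membership in $\setd$. For block positivity I must show $(v, q\, v) \le 1/2$ for every product vector $v = x\otimes y$ with $\norm{v}=1$; by Lemma~\ref{thm:rank2fullent} this is the right threshold, and here each $x_i$ already has Schmidt rank $2$ with both coefficients $1/\sqrt2$, so the obstruction identified in Corollary~\ref{thm:schmidt-rank2} does not arise. Concretely I would compute $\sum_i \abs{(x\otimes y, x_i)}^2$ and maximize over product vectors, checking the maximum does not exceed $1/2$; this is the one genuinely computational step, though it is finite-dimensional and explicit. Once $s_0$ is known to be block positive, Lemma~\ref{thm:alpha-norm} immediately gives $\alpha(s_0)=1$, and since $s_0$ is a symmetry we have $\Tr s_0 = \Tr\bbone - 2\Tr q = 9 - 6 = 3 = n$, so $s_0\in\setd$.

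Finally, and this is the main obstacle, I would exhibit local unitaries $U,V$ with $(U\otimes V)\, s_0\,(U\otimes V)\conj$ equal to the Choi matrix $w_\tau$ of the transposition map. The natural strategy is to identify the $(-1)$-eigenspace of $w_\tau$: the transposition Choi matrix $w_\tau = \sum_{ij} E_{ij}\otimes E_{ji}$ is itself a symmetry whose $(-1)$-eigenspace is spanned by the antisymmetric vectors $\tfrac{1}{\sqrt2}(e_i\otimes e_j - e_j\otimes e_i)$. Thus the task reduces to finding local unitaries carrying the span of $\set{x_1,x_2,x_3}$ onto the three-dimensional antisymmetric subspace of $\complexes^3\otimes\complexes^3$. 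The difficulty is that $x_1 = \tfrac{1}{\sqrt2}(e_1\otimes e_1 + e_2\otimes e_2)$ is \emph{symmetric}, not antisymmetric, so a single local unitary applied on the left alone cannot do the job; I would instead use an independent unitary $V$ on the second factor (recalling from Lemma~\ref{thm:taupad} and the Remark after the main $n=2$ theorem that $2P_x$-type and transposition-type Choi matrices differ by a partial transpose together with a local twist). The cleanest route is probably to apply a partial transposition intertwined with local unitaries: choosing $V$ so that $\id\otimes V$ turns each $x_i$ into an antisymmetric combination and invoking the established fact that partial transposition preserves $\setd$. The precise form of $U$ and $V$ is to be found by solving the linear conditions that map the given spanning set to the antisymmetric subspace; verifying that such local unitaries exist, rather than merely a global one, is the crux of the argument.
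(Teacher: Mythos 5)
Your proposal is a plan rather than a proof: both of its substantive steps are flagged but left undone. First, block positivity is reduced to ``compute $\sum_i \abs{(x\otimes y, x_i)}^2$ and maximize over product vectors,'' a genuine multivariable optimization that you never carry out (equal Schmidt coefficients $1/\sqrt2$ is only a necessary condition, as you yourself note, so on your route nothing short of that optimization closes the step). Second, and more seriously, for the local unitary equivalence you write that verifying the existence of suitable local unitaries ``is the crux of the argument'' and then leave it at ``solving the linear conditions.'' Asserting that a system of conditions should be solved is not a proof that it has a solution; indeed your own correct observation that $x_1$ is symmetric while the target $(-1)$-eigenspace of $w$ is antisymmetric shows that the existence of such $U\otimes V$ is exactly the point in doubt.

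The paper closes both gaps at once with a single computation that your proposal circles around but never lands on: the partial transpose of $s_0$ equals $3P_x$, where $x=\frac{1}{\sqrt3}(e_3\otimes e_3-e_1\otimes e_2+e_2\otimes e_1)$ is maximally entangled. Since the transposition Choi matrix $w$ likewise satisfies $\ptranspose(w)=3P_y$ with $y=\frac{1}{\sqrt3}\sum_i e_i\otimes e_i$ maximally entangled, and any two vectors with identical Schmidt coefficients are related by a local unitary $U\otimes V$, Lemma~\ref{thm:taupad} transfers this relation through the partial transpose: $s_0=\ptranspose(3P_x)=\ad_{U\otimes\transpose(V\conj)}w$. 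This exhibits the local unitaries (abstractly but rigorously), and block positivity together with membership in $\setd$ then come for free, since these properties hold for $w$ and are invariant under local unitary conjugation. For what it is worth, your route can also be completed concretely: taking $U=\id$ and $V$ defined by $Ve_1=e_2$, $Ve_2=-e_1$, $Ve_3=e_3$, one checks that $\id\otimes V$ carries $x_1,x_2,x_3$ onto the three antisymmetric basis vectors $\frac{1}{\sqrt2}(e_i\otimes e_j-e_j\otimes e_i)$, whence $(\id\otimes V)s_0(\id\otimes V)\conj=w$; but as written your proposal neither performs this construction nor the paper's partial transpose argument, so the lemma is not established.
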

\begin{proof}
  By direct calculation one sees that partial transpose of $s_0$, is
  equal to $3 P_x$, where $x$ is maximally entangled vector:
  \begin{equation*} 
    x = \frac{1}{\sqrt3}(e_3\otimes e_3-e_1\otimes e_2+e_2\otimes e_1).
  \end{equation*}

  Now let $w$ be the Choi matrix corresponding to the transposition map
  (in the basis introduced above). Let $y$ be the vector defined by $3
  P_y = \ptranspose(w)$, i.e.
  \begin{equation*}
    y = \frac1{\sqrt3} (e_1\otimes e_1 + e_2\otimes e_2 + e_3\otimes e_3)
  \end{equation*}
  We remind that if two vectors have exactly the same Schmidt coefficients then
  they are locally unitarily equivalent, so $x = U\otimes V y$ for some
  unitaries $U,V$. Consequently $P_x = U\otimes V P_y U\conj\otimes V\conj$.
  Finally, by the Lemma~\ref{thm:taupad}:
  \begin{equation*}
    s_0 = \ptranspose(3 P_x) = \ptranspose(U\otimes
    V 3 P_y U\conj\otimes V\conj) = U\otimes \tau(V\conj) \ptranspose(3 P_y)
    U\conj\otimes \tau(V) = \ad_{U\otimes \tau(V\conj)} w.
 \end{equation*}
\end{proof}

Now we can prove our first equivalence result.

\begin{proposition}
  \label{thm:twofixed}
  Let $s$ be a symmetry in $\mathfrak D$ and let $s=\bbone - 2 q$.
  Assume that eigenvectors of $q$ are of the form
  \begin{align*}
    x_1 &= \frac{1}{\sqrt{2}} (e_1\otimes e_1 + e_2\otimes e_2),\\
    x_2 &= \frac{1}{\sqrt{2}} (e_i\otimes e_j \pm e_k\otimes
    e_l),\\
    x_3 &\text{ arbitrary consistent with assumptions,} 
  \end{align*}
  then $s$ is locally unitary equivalent to the symmetry $s_0$.
\end{proposition}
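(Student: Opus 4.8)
The plan is to show that $s$ is locally unitarily equivalent to $s_0$ by constructing explicit local unitaries that carry the eigenvectors of $q$ onto the eigenvectors of $q_0$ (where $s_0 = \bbone - 2\sum_i P_{x_i^{(0)}}$), while preserving the block-positivity/$\mathfrak D$ membership that constrains everything. By Corollary~\ref{thm:schmidt-rank2}, every eigenvector of $q$ is forced to have Schmidt rank $2$ with both Schmidt coefficients equal to $1/\sqrt2$; the first two eigenvectors $x_1,x_2$ are already pinned down in the given normal form, so the only genuine freedom lives in the third eigenvector $x_3$. The strategy is therefore to determine how severely the orthogonality constraints $(x_1,x_3)=(x_2,x_3)=0$, together with the Schmidt-rank-$2$, balanced-coefficient requirement, restrict $x_3$, and then to exhibit a \emph{local} unitary $U\otimes V$ fixing $x_1$ and $x_2$ (up to the allowed sign/phase) that maps $x_3$ to the corresponding eigenvector of $q_0$.

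First I would set up coordinates: write $x_1=\tfrac1{\sqrt2}(e_1\otimes e_1+e_2\otimes e_2)$ and analyze the two sign cases for $x_2=\tfrac1{\sqrt2}(e_i\otimes e_j\pm e_k\otimes e_l)$, reducing by a preliminary local unitary to the canonical shape of $x_2$ appearing in Lemma~\ref{thm:s0}, namely $\tfrac1{\sqrt2}(e_1\otimes e_3+e_3\otimes e_2)$. The key observation is that a local unitary $U\otimes V$ which must fix the maximally entangled-on-its-support vector $x_1=\tfrac1{\sqrt2}(e_1\otimes e_1+e_2\otimes e_2)$ forces a rigid relation between $U$ and $V$ on the span of $\{e_1,e_2\}$: since $x_1$ corresponds (via the standard identification $\sum_{ij}c_{ij}e_i\otimes e_j \leftrightarrow (c_{ij})$) to the identity matrix on that $2$-dimensional block, invariance under $U\otimes V$ translates to $V = \cconj U$ (i.e.\ $\transpose(V)=U^{-1}$) restricted to that block. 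This pins down the residual gauge freedom to essentially one unitary parameter on the two-dimensional subspace plus phases on $e_3$, which is exactly the amount needed to also normalize $x_2$ and then $x_3$.

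Next I would impose the remaining constraints. Having normalized $x_1$ and $x_2$, the residual local unitaries form a small group (a torus together with the rigidity relation above), and I would use the two orthogonality conditions on $x_3$ plus the balanced-Schmidt-coefficient condition from Corollary~\ref{thm:schmidt-rank2} to show that $x_3$ is forced into the one-parameter family $\tfrac1{\sqrt2}(e_2\otimes e_3 - e^{i\theta}e_3\otimes e_1)$ (or a sign/index variant), after which a single diagonal phase adjustment commuting with the already-fixed structure of $x_1,x_2$ carries $x_3$ onto $\tfrac1{\sqrt2}(e_2\otimes e_3-e_3\otimes e_1)$. Since $q=\sum_i P_{x_i}$ is determined by the span of its eigenvectors, carrying each $x_i$ to $x_i^{(0)}$ by the same local unitary $U\otimes V$ yields $\ad_{U\otimes V}q = q_0$, hence $\ad_{U\otimes V}s = s_0$; composing with Lemma~\ref{thm:s0} gives local unitary equivalence to the transposition symmetry.

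The main obstacle I anticipate is the bookkeeping in the third step: proving that the orthogonality and balanced-coefficient constraints leave exactly enough (and not too much) freedom in $x_3$ to be absorbed by the residual local unitaries fixing $x_1,x_2$. One must check that the stabilizer of the pair $(x_1,x_2)$ in the local unitary group is genuinely large enough to normalize every admissible $x_3$, yet that the constraints genuinely force $x_3$ into the expected family rather than permitting an inequivalent configuration. This is where the rigidity relation $V=\cconj U$ on the $\{e_1,e_2\}$ block does the real work, and I would verify it by a direct computation on the matrix representing $x_1$, treating the $\pm$ cases for $x_2$ in parallel since they differ only by a sign that can be absorbed into the sign freedom in defining $x_2^{(0)}$ or by a local reflection.
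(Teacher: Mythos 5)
Your strategy inverts the logical structure of the statement, and its two key steps fail. The central gap is the ``preliminary local unitary'' that is supposed to carry $x_2$ to the canonical shape $\tfrac1{\sqrt2}(e_1\otimes e_3+e_3\otimes e_2)$. The hypothesis only says $x_2=\tfrac1{\sqrt2}(e_i\otimes e_j\pm e_k\otimes e_l)$, which also admits, e.g., $\tfrac1{\sqrt2}(e_1\otimes e_3+e_3\otimes e_1)$ and $\tfrac1{\sqrt2}(e_1\otimes e_2-e_2\otimes e_1)$. No reduction is possible for these: in the first case, taking $x=y=\tfrac1{\sqrt2}(e_1+e_3)$ gives
\begin{equation*}
  (x\otimes y, q\, x\otimes y)\;\ge\;\abs{(x\otimes y,x_1)}^2+\abs{(x\otimes y,x_2)}^2
  \;=\;\tfrac18+\tfrac12\;>\;\tfrac12,
\end{equation*}
so such an $s$ is not block positive, hence not in $\mathfrak D$, hence not locally unitarily equivalent to $s_0$ (the set $\mathfrak D$ is invariant under local unitaries). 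For these index patterns the proposition holds only because the hypothesis $s\in\mathfrak D$ is contradictory, and the only possible proof is to derive that contradiction; no construction of unitaries can succeed. The paper's proof is essentially this case analysis: it lists all admissible patterns for $x_2$, discards the Schmidt-rank-one ones, and kills the remaining bad ones using block positivity and the $\alpha$-normalization $\Tr(\bbone\otimes p)q=1$ of Lemma~\ref{thm:alpha-norm}; only then do the surviving eigenvectors coincide with those of $s_0$. Your proposal has no counterpart to this: apart from Corollary~\ref{thm:schmidt-rank2}, it never uses membership in $\mathfrak D$ at all.

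The same defect undermines your treatment of $x_3$, and your phase-absorption claim is false. Orthogonality to $x_1,x_2$ plus balanced Schmidt rank $2$ does \emph{not} confine $x_3$ to the family $\tfrac1{\sqrt2}(e_2\otimes e_3-e^{i\theta}e_3\otimes e_1)$: the singlet $\tfrac1{\sqrt2}(e_1\otimes e_2-e_2\otimes e_1)$ satisfies all of these conditions and lies outside the family. It is excluded only because $\Tr(\bbone\otimes P_{e_2})(P_{x_1}+P_{x_2})=1$ together with Lemma~\ref{thm:alpha-norm} forces $\Tr(\bbone\otimes P_{e_2})P_{x_3}=0$; similarly an $e_3\otimes e_3$ component is removed by a block-positivity test, not by algebra. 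Finally, the residual relative phase cannot be ``absorbed by a diagonal phase adjustment'': the stabilizer of the pair $(P_{x_1},P_{x_2})$ consists exactly of $U=\operatorname{diag}(\mu,\nu,u_3)$, $V=\operatorname{diag}(e^{i\alpha}\cconj{\mu},e^{i\alpha}\cconj{\nu},v_3)$ with $\mu v_3=e^{i\alpha}u_3\cconj{\nu}$, and it multiplies the components $e_2\otimes e_3$ and $e_3\otimes e_1$ of $x_3$ by factors whose ratio is $\abs{\mu}^2/\abs{\nu}^2=1$; the relative phase in $x_3$ is therefore an invariant of the configuration, not a gauge degree of freedom. Wrong phases must again be excluded by block positivity, as the paper does for the $+$ sign via $x=\tfrac1{\sqrt3}(e_1+e_2+e_3)$, which yields $(x\otimes x,q\,x\otimes x)=\tfrac23$. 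In short, the rigidity relation $V=\cconj{U}$ that you expected to ``do the real work'' in fact shows that the residual gauge group is too small for your strategy; the work has to be done by block positivity and $\alpha$-normalization, which are absent from your proposal.
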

\begin{proof}
  Recall that the block positivity condition is equivalent to
  $(x\otimes y, q x\otimes y)\le 1/2$. Take $x=1/\sqrt{2}(e_1+e_2)=y$,
  then $(x\otimes y, P_{x_1} x\otimes y) = 1/2$. Thus $(x\otimes y,
  P_{x_2} x\otimes y)=0$. By the analogous argument we infer that
  $(e_1\otimes e_1,P_{x_2} e_1\otimes e_1)=0$ and $(e_2\otimes e_2,
  P_{x_2} e_2\otimes e_2)=0$. Consequently $x_2$ must belong to the
  subspace spanned by six basis vectors $e_1\otimes e_3, e_2\otimes
  e_3, e_3\otimes e_1, e_3\otimes e_2, e_3\otimes e_3$ and
  $1/\sqrt2(e_1\otimes e_2-e_2\otimes e_1)$. Due to assumed form of $x_2$ we have following possibilities (without normalization constant)
  \begin{align*}
  (i)\quad& e_1\otimes e_3\pm e_2\otimes e_3 & (ii)\quad& e_1\otimes e_3\pm e_3\otimes e_1 & (iii)\quad& e_1 \otimes e_3\pm e_3\otimes e_2 \\
  (iv)\quad& e_1\otimes e_3\pm e_3\otimes e_3 & (v)\quad& e_2\otimes e_3\pm e_3\otimes e_1 & (vi)\quad& e_2\otimes e_3\pm e_3\otimes e_2 \\
  (vii)\quad& e_2\otimes e_3 \pm e_3\otimes e_3 & (viii)\quad& e_3\otimes e_1\pm e_3\otimes e_2 & (ix)\quad& e_3\otimes e_1 \pm e_3\otimes e_3 \\
  (x)\quad& e_3\otimes e_2\pm e_3\otimes e_3 & (xi)\quad& e_1\otimes e_2 - e_2\otimes e_1
  \end{align*}
  Note that (i), (iv), (vii), (viii), (ix), (x) are Schmidt rank 1 vectors, thus it remains to consider
  \begin{enumerate}
  \item $e_1\otimes e_3\pm e_3\otimes e_1$,
  \item $e_1\otimes e_3\pm e_3\otimes e_2$,
  \item $e_2\otimes e_3\pm e_3\otimes e_1$,
  \item $e_2\otimes e_3\pm e_3\otimes e_2$,
  \item $e_1\otimes e_2 - e_2\otimes e_1$.
  \end{enumerate}
  Note that it suffices to consider only `+' case, as we can get minus
  by performing local unitary transformation $e_3\to-e_3$ on the first
  Hilbert space component and leave everything else unchanged.
  
  Let us examine the first case. We will use once more the block
  positivity condition. Take $x=1/\sqrt{2}(e_1+e_3)=y$. Then one
  calculates $(x\otimes y, P_{x_2} x\otimes y) = \frac12$
  but $(x\otimes y, P_{x_1} x\otimes y) = 1/8$, thus this violates block
  positivity and the first case is excluded. By the analogous
  argument we also exclude the fourth case.

  In the second case $x_3$ can be a linear combination of the remaining
  $e_2\otimes e_3, e_3\otimes e_1, e_3\otimes e_3$ and $1/\sqrt2 (e_1\otimes
  e_2 - e_2\otimes e_1)$ (as it must be orthogonal to $x_1$ and $x_2$). When
  we consider $x=1/\sqrt{2}(e_1+e_3), y=1/\sqrt{2}(e_2+e_3)$ we get $(x\otimes
  y, P_{x_2} x\otimes y)=\frac12$, what excludes $e_3\otimes e_3$ from the
  list (as in that case $(x\otimes y, P_{x_3} x\otimes y)>0$). Now consider
  $\Tr(\bbone\otimes P_{e_2})(P_{x_1} + P_{x_2}) = 1$. From Lemma
  \ref{thm:alpha-norm} it follows that $\alpha$-normalization demands that
  $\Tr(\bbone\otimes P_{e_2}) P_{x_3} = 0$ what excludes $1/\sqrt2
  (e_1\otimes e_2 - e_2\otimes e_1)$ from $x_3$. Thus, $x_3 =
  1/\sqrt{2}(e_2\otimes e_3\pm e_3\otimes e_1)$. In fact there must be a `-'
  sign, as for $x=1/\sqrt{3}(e_1+e_2+e_3)$ plus sign gives $(x\otimes x, q
  x\otimes x)=2/3$. In the third case $x_2$ and $x_3$ are swapped. 

  Finally in the last case we have that $\Tr (\bbone\otimes
  P_{e_1})(P_{x_1}+P_{x_2})=1$ and $\Tr (\bbone\otimes
  P_{e_2})(P_{x_1}+P_{x_2})=1$, thus $\Tr (\bbone\otimes P_{e_1})P_{x_3}$ and
  $\Tr (\bbone\otimes P_{e_2})P_{x_3}$ must equal zero (again we use
  $\alpha$-normalization and Lemma \ref{thm:alpha-norm}). This cannot be true
  if $x_3$ has Schmidt rank 2, so we arrive to contradiction.
\end{proof}

\begin{remark}
  Notice that in the Proposition above we indeed put restriction only
  on one vector, namely $x_2$. Form of $x_1$ only specifies the
  basis. 
\end{remark}

\begin{lemma}
\label{thm:rank-of-bad-and-good}
  Let $x=1/\sqrt2(e_1\otimes e_2-e_2\otimes e_1)$ and
  $y=1/\sqrt2(g_1\otimes e_3+e_3\otimes g_2)$, where
  $g_1,g_2\in\lspan\set{e_1,e_2}$ (all vectors are normalized). Then
  $z=c_1 x + c_2 y$ for any nonzero $c_1,c_2\in\complexes$ have Schmidt
  rank equal to $3$ unless $g_1=\pm g_2$. 
\end{lemma}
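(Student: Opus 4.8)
The plan is to compute the Schmidt rank of $z=c_1x+c_2y$ directly by writing its associated $3\times 3$ coefficient matrix in the basis $\set{e_1,e_2,e_3}$ and analysing when that matrix is singular. Recall that a vector $z=\sum_{ij} M_{ij}\, e_i\otimes e_j$ has Schmidt rank equal to $\operatorname{rank}(M)$; thus the Schmidt rank is $3$ precisely when $\det M\neq 0$, Schmidt rank is at most $2$ when $\det M=0$, and the whole statement reduces to showing $\det M\neq 0$ unless $g_1=\pm g_2$. First I would assemble $M$: since $x=\tfrac1{\sqrt2}(e_1\otimes e_2-e_2\otimes e_1)$ contributes only to the top-left $2\times2$ block (entries involving $e_1,e_2$ in both factors), and $y=\tfrac1{\sqrt2}(g_1\otimes e_3+e_3\otimes g_2)$ with $g_1,g_2\in\lspan\set{e_1,e_2}$ contributes only to the third column (the $g_1\otimes e_3$ term) and the third row (the $e_3\otimes g_2$ term), the matrix has a clean block structure with a zero in the $(3,3)$ slot.

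Concretely, writing $g_1=\alpha_1 e_1+\alpha_2 e_2$ and $g_2=\beta_1 e_1+\beta_2 e_2$, the coefficient matrix is (up to the overall $1/\sqrt2$ factors absorbed into $c_1,c_2$)
\begin{equation*}
  M = \left(
  \begin{matrix}
    0 & c_1 & c_2\alpha_1 \\
    -c_1 & 0 & c_2\alpha_2 \\
    c_2\beta_1 & c_2\beta_2 & 0
  \end{matrix}
  \right).
\end{equation*}
The key step is then to expand $\det M$. Because the diagonal is zero, the determinant is a sum of the two ``antidiagonal'' product terms, and a short computation gives $\det M = c_1 c_2^2(\alpha_1\beta_1+\alpha_2\beta_2)$, i.e.\ $\det M$ is proportional to $c_1 c_2^2\,(g_1,g_2)$ up to sign conventions on the inner product. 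Since $c_1,c_2$ are nonzero by hypothesis, $\det M=0$ if and only if $(g_1,g_2)=0$, i.e.\ $g_1\perp g_2$.

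At this point the statement as written appears to demand a second look: the claim is that the rank drops (to $2$) exactly when $g_1=\pm g_2$, whereas the determinant computation points to $g_1\perp g_2$. I expect the resolution to be the main subtlety here, and I would handle it by checking the inner-product conventions carefully. If the two copies of $\complexes^2$ are identified and the relevant bilinear pairing coming from $x$ is the antisymmetric form $e_1\wedge e_2$ rather than the Hermitian inner product, then the vanishing condition $\alpha_1\beta_2-\alpha_2\beta_1=0$ (proportionality of $g_1$ and $g_2$) is exactly $g_1=\pm g_2$ for unit vectors, matching the stated hypothesis. So the careful step is to track whether the cross term from the antisymmetric $x$ produces $(g_1,g_2)$ or the determinant $\det\begin{psmallmatrix}\alpha_1&\beta_1\\ \alpha_2&\beta_2\end{psmallmatrix}$; once the correct form of $M$ is fixed, the conclusion follows by the single determinant evaluation. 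Finally I would note that when $\det M=0$ the rank is genuinely $2$ rather than lower, since the top-left $2\times2$ block $\begin{psmallmatrix}0&c_1\\-c_1&0\end{psmallmatrix}$ is already nonsingular for $c_1\neq 0$, guaranteeing $\operatorname{rank}(M)\ge 2$ in all cases; hence the rank is exactly $2$ in the degenerate case and $3$ otherwise.
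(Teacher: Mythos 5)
Your strategy is the same as the paper's: expand $z$ in the product basis, form the $3\times3$ coefficient matrix, and decide Schmidt rank $3$ by whether its determinant vanishes. Your matrix $M$ is assembled correctly. The genuine gap is that you then evaluate $\det M$ incorrectly, and everything downstream inherits the error. Since the diagonal of $M$ is zero, only the two $3$-cycles contribute to the determinant, both with positive sign:
\begin{equation*}
\det M \;=\; M_{12}M_{23}M_{31} + M_{13}M_{32}M_{21}
\;=\; c_1c_2^2\,\alpha_2\beta_1 - c_1c_2^2\,\alpha_1\beta_2
\;=\; c_1 c_2^2\,(\alpha_2\beta_1-\alpha_1\beta_2).
\end{equation*}
Products like $\alpha_1\beta_1$ or $\alpha_2\beta_2$ cannot occur at all: any permutation producing them would have a fixed point and hence use a diagonal entry of $M$, and those are all zero. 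So $\det M$ is proportional to $\det\left(\begin{matrix}\alpha_1&\beta_1\\ \alpha_2&\beta_2\end{matrix}\right)$, and with $c_1,c_2\neq0$ it vanishes exactly when $g_1$ and $g_2$ are linearly dependent, which for normalized vectors is the condition $g_1=\pm g_2$ (up to phase, which is how the paper's conventions read the $\pm$). This is precisely the paper's computation written in Cartesian rather than polar coordinates: their determinant is proportional to $e^{i\psi}\sin\alpha\cos\beta-e^{i\phi}\cos\alpha\sin\beta$, the same $2\times2$ determinant.

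You did notice that your vanishing criterion ($g_1\perp g_2$) contradicts the statement, and you even guessed the correct resolution, but you misdiagnosed its source and left it unresolved. There is no subtlety of ``inner-product conventions'' and nothing to fix in the form of $M$: the coefficient matrix is built from plain expansion coefficients with no complex conjugation, so the Hermitian product $(g_1,g_2)=\cconj{\alpha_1}\beta_1+\cconj{\alpha_2}\beta_2$ could never appear in $\det M$ anyway; the discrepancy is purely an arithmetic slip in the cofactor expansion. As written, your proof asserts a false identity, derives a false criterion, and then defers the repair to a ``careful step'' that is never carried out, so it is not complete. The fix is the one-line expansion above; with it, your argument — including your correct closing observation that the rank is exactly $2$ in the degenerate case because the block $\left(\begin{smallmatrix}0&c_1\\-c_1&0\end{smallmatrix}\right)$ is nonsingular, a point the paper does not even bother to record — becomes a valid proof essentially identical to the paper's.
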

\begin{proof}
  Firstly let us rewrite
  \begin{align*}
    g_1 &= \sin\alpha\; e_1 + e^{i\phi}\cos \alpha\; e_2,\\
    g_2 &= \sin\beta\;  e_1 + e^{i\psi}\cos \beta\;  e_2,\\
    y   &= \frac1{\sqrt2}(g_1\otimes e_3 + e^{i\eta} e_3 \otimes g_2),\\
    z   &= c (\cos\gamma\; x + e^{i\chi} \sin\gamma\; y).
  \end{align*}
  This is exactly equivalent to the statement of the theorem, but now
  parameters $\alpha,\beta,\gamma,\phi,\psi,\chi,\eta$ are real, and
  only $c$ is complex. Now recall that the Schmidt rank of the vector $z$
  is equal to the rank of the matrix formed by coefficients $z_{ij} =
  (e_i\otimes e_j, z)$. Thus if $z$ have Schmidt rank less than 3,
  then $\det (z_{ij})=0$. By the explicit calculation we have
  \begin{equation*}
    \det (z_{ij}) = -\frac1{2\sqrt2} 
    c^3 \sin ^2\gamma \cos \gamma e^{i (\eta +2 \chi )} \left(e^{i \psi }
      \sin \alpha \cos\beta-e^{i \phi } \cos\alpha\sin\beta\right)
  \end{equation*}
  This is equal to zero only when $(e^{i \psi }
  \sin \alpha \cos\beta-e^{i \phi } \cos\alpha\sin\beta) = 0$,
  i.e. either $\sin(\alpha-\beta)=0$ and $\phi-\psi=0$ or
  $\sin(\alpha+\beta)=0$ and $\phi-\psi=\pi$.  In the first case
  $g_1=g_2$ and in the second $g_1=-g_2$. 
\end{proof}

\begin{lemma}
  \label{thm:not-twisted}
  Let $\xi_1,\xi_2,\xi_3$ are three orthonormal vectors in $\hilbert
  H\otimes \hilbert H$ with Schmidt rank equal 2. Let us put
  $\xi_1=1/\sqrt2 (e_1\otimes e_1+e_2\otimes e_2)$. If one of
  remaining vectors is of the form 
  \begin{equation*}
    c_1 (e_1\otimes e_2-e_2\otimes
    e_1) + c_2 e_1\otimes e_3 + c_3 e_2\otimes e_3 + 
    c_4 e_3\otimes e_2 + c_5 e_3\otimes e_1,\quad \text{where }c_i\in\complexes
  \end{equation*}
  then the symmetry $s=\bbone - 2 q$, where $q=\sum_i P_{\xi_i}$ is not in
  $\mathfrak D$ unless $c_1=0$.
\end{lemma}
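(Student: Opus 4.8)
The plan is to argue by contradiction: assume $c_1\neq 0$ and that $s\in\mathfrak D$, and derive a contradiction from the $\alpha$-normalization condition. Denote by $\xi_2$ the eigenvector of $q$ of the stated form (relabelling if necessary). Since $s\in\mathfrak D$ is a block positive symmetry, Corollary~\ref{thm:schmidt-rank2} forces every eigenvector of $q$, and in particular $\xi_2$, to have Schmidt rank exactly $2$ with both Schmidt coefficients equal to $1/\sqrt2$.

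First I would record the coefficient matrix of $\xi_2$ in the basis $\set{e_i\otimes e_j}$, namely
\begin{equation*}
  M = \left(\begin{matrix} 0 & c_1 & c_2 \\ -c_1 & 0 & c_3 \\ c_5 & c_4 & 0 \end{matrix}\right),
\end{equation*}
whose rows $m_i$ are indexed by the first tensor factor and columns by the second. A direct computation gives $\det M = c_1(c_3 c_5 - c_2 c_4)$, and one checks that $M$ cannot have rank $1$ when $c_1\neq 0$. Hence the Schmidt-rank-$2$ requirement is equivalent to $c_3 c_5 = c_2 c_4$ together with $\dim\ker M = 1$. The second-factor support of $\xi_2$ is then the orthogonal complement of a one-dimensional space, so there is a unique (up to phase) unit vector $w_2$ with $Q_2 = \bbone - P_{w_2}$, where $Q_2$ denotes the rank-$2$ orthogonal projector onto that support; solving $\langle m_i, w_2\rangle = 0$ for all $i$ shows
\begin{equation*}
  w_2 \propto \cconj{c_3}\, e_1 - \cconj{c_2}\, e_2 + \cconj{c_1}\, e_3.
\end{equation*}

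The key step is to translate the $\alpha$-normalization into an operator identity on the second factor. Writing each eigenvector $\xi_i$ with its equal Schmidt coefficients, Lemma~\ref{thm:schmidtcoeff} gives $\Tr(\bbone\otimes P_z)P_{\xi_i} = \tfrac12(z, Q_i z)$ for every unit $z$, where $Q_i$ is the rank-$2$ projector onto the second-factor support of $\xi_i$. Since $s=\bbone-2q$ and $\Tr(\bbone\otimes P_z)\bbone = 3$, Lemma~\ref{thm:alpha-norm} (which applies because we have assumed $s\in\mathfrak D$) yields $\sum_i\Tr(\bbone\otimes P_z)P_{\xi_i}=1$ for all unit $z$, hence $\sum_i Q_i = 2\,\bbone$. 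Writing $Q_i=\bbone - P_{w_i}$ this reads $\sum_i P_{w_i} = \bbone$, and on the three-dimensional second factor three rank-one projectors summing to the identity must be mutually orthogonal; in particular $w_1\perp w_2$.

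It remains to exhibit the contradiction. For $\xi_1 = \tfrac{1}{\sqrt2}(e_1\otimes e_1 + e_2\otimes e_2)$ the second-factor support is $\lspan\set{e_1,e_2}$, so $w_1 = e_3$. But the $e_3$-component of $w_2$ computed above is $\cconj{c_1}$, which is nonzero precisely when $c_1\neq 0$; thus $(w_1,w_2)=\cconj{c_1}\neq 0$, contradicting $w_1\perp w_2$. Therefore $s\notin\mathfrak D$ whenever $c_1\neq 0$, which is the assertion. The only real obstacle is the bookkeeping around the Schmidt-rank casework for $\xi_2$ — ensuring that $w_2$ is well defined as a single direction — but this is dispatched by the rank computation above, and it is worth noting that the argument never needs any information about $\xi_3$ beyond the orthonormality of the $w_i$ forced by $\alpha$-normalization.
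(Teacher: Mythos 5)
Your proof is correct and rests on exactly the same pillars as the paper's: Corollary~\ref{thm:schmidt-rank2}, Lemma~\ref{thm:schmidtcoeff}, and the $\alpha$-normalization identity of Lemma~\ref{thm:alpha-norm}, with the contradiction pivoting on the direction $e_3$ that is absent from the second-factor support of $\xi_1$. The only difference is packaging: the paper evaluates the normalization at the single test vector $e_3$ using the trigonometric parameterization from Lemma~\ref{thm:rank-of-bad-and-good}, whereas you derive the global identity $\sum_i Q_i = 2\bbone$ (equivalently, that the $w_i$ form an orthonormal basis) and compute $w_2$ explicitly from the kernel of the coefficient matrix --- a tidier but equivalent route to the same contradiction.
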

\begin{proof}
  Without loss of generality assume that $\xi_2$ is of
  the form above. Then it can be written in the same form as $z$ in
  the proof of the Lemma~\ref{thm:rank-of-bad-and-good}, precisely
  \begin{align*}
    x &= 1/\sqrt2 (e_1\otimes e_2-e_2\otimes e_1),\\
    g_1 &= \sin\alpha\; e_1 + e^{i\phi}\cos \alpha\; e_2,\\
    g_2 &= \sin\beta\;  e_1 + e^{i\psi}\cos \beta\;  e_2,\\
    y   &= \frac1{\sqrt2}(g_1\otimes e_3 + e^{i\eta} e_3 \otimes g_2),\\
    \xi_2   &= \cos\gamma\; x + e^{i\chi} \sin\gamma\; y.
  \end{align*}
  But then we easily see that $\Tr (\bbone\otimes P_{e_3})P_{\xi_2}=1/2
  \sin^2\gamma$. As $\Tr (\bbone\otimes P_{e_3}) P_{\xi_1}=0$, by the
  $\alpha$-normalization and Lemma \ref{thm:alpha-norm} we need $\Tr
  (\bbone\otimes P_{e_3})P_{\xi_3} = 1-\frac12 \sin^2\gamma$. But
  $1-\frac12\sin^2 \gamma > \frac12$ for $\gamma\neq\pi/2$ or $3/2\pi$. Then,
  by the Lemma~\ref{thm:schmidtcoeff} one of the Schmidt coefficients of
  $\xi_3$ would have to be greater than $1/\sqrt2$, so the thesis is proved
  (see Corollary \ref{thm:schmidt-rank2}). On the other hand if $\gamma=\pi/2$
  or $\gamma=3/2\pi$ then, turning back to notation from the statement of the
  lemma, we get $c_1=0$.
\end{proof}

\begin{lemma}
  \label{thm:not-e3-e3}
  Let $\xi_1,\xi_2,\xi_3$ are three orthonormal vectors in $\hilbert
  H\otimes \hilbert H$ with Schmidt rank equal 2. Let us put
  $\xi_1=1/\sqrt2 (e_1\otimes e_1+e_2\otimes e_2)$. If one of
  remaining vectors is of the form 
  \begin{equation*}
    c_1 e_1\otimes e_3 + c_2 e_2\otimes e_3 + 
    c_3 e_3\otimes e_2 + c_4 e_3\otimes e_1 + c_5 e_3\otimes e_3,\quad\text{where } c_i\in\complexes
  \end{equation*}
  then the symmetry
  $s=\bbone - 2 q$, where $q=\sum_i P_{\xi_i}$ is not in $\mathfrak D$
  unless $c_5=0$.
\end{lemma}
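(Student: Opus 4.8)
The plan is to establish the contrapositive: assuming $s\in\mathfrak D$, I will deduce $c_5=0$. Following the convention of Lemma~\ref{thm:not-twisted} I take the distinguished vector to be $\xi_2$, so that
\begin{equation*}
  \xi_2 = c_1 e_1\otimes e_3 + c_2 e_2\otimes e_3 + c_3 e_3\otimes e_2 + c_4 e_3\otimes e_1 + c_5 e_3\otimes e_3,\qquad \sum_i\abs{c_i}^2 = 1.
\end{equation*}
Since membership in $\mathfrak D$ forces $s$ to be a block positive symmetry, Corollary~\ref{thm:schmidt-rank2} applies to the eigenvector $\xi_2$ of $q$ and tells me that \emph{both} Schmidt coefficients of $\xi_2$ equal $1/\sqrt2$. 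The entire argument then consists of converting this single spectral constraint into the equation $c_5=0$.

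The computation becomes transparent after the factorization
\begin{equation*}
  \xi_2 = u\otimes e_3 + e_3\otimes v,\qquad u = c_1 e_1 + c_2 e_2 + c_5 e_3,\quad v = c_4 e_1 + c_3 e_2,
\end{equation*}
in which, crucially, $v$ is orthogonal to $e_3$. Schmidt rank $2$ forces $u\neq0$, $v\neq0$ and $u\notin\complexes e_3$ (otherwise $\xi_2$ would factorize and drop to Schmidt rank $1$). Because $v\perp e_3$, the two mixed terms of $P_{\xi_2}$ vanish under the partial trace $\Tr_2$ over the second factor, leaving the first-factor reduced density matrix in the clean form
\begin{equation*}
  \Tr_2 P_{\xi_2} = \ket{u}\bra{u} + \norm{v}^2 P_{e_3}.
\end{equation*}

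Next I extract eigenvalues. The squared Schmidt coefficients of $\xi_2$ are precisely the two nonzero eigenvalues $\mu_\pm$ of $\Tr_2 P_{\xi_2}$, whose trace is $1$, so $\mu_++\mu_-=1$. As a sum of two rank-one positive operators supported on the two-dimensional space $\lspan\{u,e_3\}$, it has eigenvalue product
\begin{equation*}
  \mu_+\mu_- = \norm{u}^2\norm{v}^2\Big(1 - \frac{\abs{\braket{u|e_3}}^2}{\norm{u}^2}\Big) = \norm{v}^2\big(\norm{u}^2-\abs{c_5}^2\big) = (\abs{c_1}^2+\abs{c_2}^2)(\abs{c_3}^2+\abs{c_4}^2).
\end{equation*}
Hence the conclusion of Corollary~\ref{thm:schmidt-rank2}, $\mu_+=\mu_-=1/2$, is equivalent to $(\abs{c_1}^2+\abs{c_2}^2)(\abs{c_3}^2+\abs{c_4}^2)=1/4$.

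Finally I close by elementary arithmetic. Setting $P=\abs{c_1}^2+\abs{c_2}^2$, $Q=\abs{c_3}^2+\abs{c_4}^2$ and $R=\abs{c_5}^2$, normalization gives $P+Q+R=1$ while the previous step gives $PQ=1/4$. The arithmetic--geometric mean inequality then yields $1/4 = PQ \le \big(\tfrac{P+Q}{2}\big)^2 = \big(\tfrac{1-R}{2}\big)^2 \le 1/4$, so equality holds throughout; since $0\le 1-R\le1$ this forces $1-R=1$, i.e. $c_5=0$, which is the claim. I expect the only genuinely delicate point to be the reduced-density-matrix step -- the vanishing of the cross terms (resting on $v\perp e_3$) and the eigenvalue-product formula for a sum of two non-orthogonal rank-one projectors; once these are in place the finish is a one-line estimate. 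One could instead imitate Lemma~\ref{thm:not-twisted}, feeding $p=P_{e_3}$ into the $\alpha$-normalization identity $\Tr(\bbone\otimes P_{e_3})q=1$ of Lemma~\ref{thm:alpha-norm} to obtain $\abs{c_1}^2+\abs{c_2}^2+\abs{c_5}^2=1/2$, but that relation only supplements the Schmidt-coefficient computation rather than replacing it, so I would present the self-contained route above.
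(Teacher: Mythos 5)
Your proof is correct, but it follows a genuinely different route from the paper's. The paper does not compute a reduced density matrix in this lemma: it first uses the $\alpha$-normalization of $s$ (Lemma \ref{thm:alpha-norm}), namely $\Tr(\bbone\otimes P_{e_3})q=1$, together with $\Tr(\bbone\otimes P_{e_3})P_{\xi_1}=0$ and the upper bound $1/2$ on the two remaining terms (Lemma \ref{thm:schmidtcoeff} plus the equal-coefficient property), to conclude $\abs{c_1}^2+\abs{c_2}^2+\abs{c_5}^2=1/2$ and hence $\abs{c_3}^2+\abs{c_4}^2=1/2$; it then evaluates $\Tr(\bbone\otimes P_{u_\phi})P_{\xi_2}$ on the phase family $u_\phi=c_4 e_1+c_3 e_2+\tfrac{e^{i\phi}}{\sqrt2}e_3$, obtaining $\tfrac12$ plus a positive multiple of $\cconj{c_5}e^{i\phi}+c_5 e^{-i\phi}$, and since this trace can never exceed $\tfrac12$, the choices $\phi=0,\pi,\pi/2,3\pi/2$ force $c_5=0$. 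You instead take the single spectral fact that both Schmidt coefficients of $\xi_2$ equal $1/\sqrt2$ (Corollary \ref{thm:schmidt-rank2}; given the rank-2 hypothesis, Lemma \ref{thm:rank2fullent}(ii) already suffices), turn it into the eigenvalue-product identity $\mu_+\mu_-=(\abs{c_1}^2+\abs{c_2}^2)(\abs{c_3}^2+\abs{c_4}^2)$ via $\Tr_2 P_{\xi_2}=\ket{u}\bra{u}+\norm{v}^2P_{e_3}$, and close with the AM--GM inequality. Your delicate steps are all sound: the cross terms under $\Tr_2$ vanish precisely because $v\perp e_3$, the Gram-determinant formula for a sum of two rank-one operators is correct, and equality throughout $1/4=PQ\le\bigl(\tfrac{1-R}{2}\bigr)^2\le 1/4$ does force $R=0$. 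Your argument is more economical and slightly stronger --- it is a statement about $\xi_2$ alone, never invoking $\xi_3$ or the $\alpha$-normalization --- whereas the paper's route keeps the style uniform with the neighbouring Lemmas \ref{thm:not-twisted} and \ref{thm:not-both} and produces an intermediate identity ($\abs{c_1}^2+\abs{c_2}^2+\abs{c_5}^2=1/2$) that is explicitly reused in the proof of Lemma \ref{thm:not-both}. It is worth noting that the reduced-density-matrix device you rely on is exactly the one the paper itself introduces in Lemma \ref{thm:not-both} (the characteristic polynomial of $\omega=\Tr_2 P_{\xi_2}$), so in effect you have moved that tool one lemma earlier and given it a cleaner finish.
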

\begin{proof}
  Without loss of generality let us assume that the $\xi_2$ is of the form
  above. The $\alpha$-normalization demands that $\Tr (\bbone\otimes P_{e_3})
  q = 1$ but $\Tr (\bbone\otimes P_{e_3}) P_{\xi_1}=0$, thus we infer that
  $\Tr (\bbone\otimes P_{e_3}) P_{\xi_2} = 1/2 = \Tr (\bbone\otimes P_{e_3})
  P_{\xi_3}$ (note that the trace cannot be greater than $1/2$ as the maximal
  Schmidt coefficient is equal to $1/\sqrt2$, cf. Lemma
  \ref{thm:schmidtcoeff}). Then
  \begin{equation*}
    \Tr (\bbone\otimes P_{e_3}) P_{\xi_2} = \abs{c_1}^2 + \abs{c_2}^2 +
    \abs{c_5}^2 = \frac{1}{2}.
  \end{equation*}
  So $\abs{c_3}^2+\abs{c_4}^2=1/2$ due to normalization. Now we are
  going to show that $c_5$ must equal to zero. Consider the family
  of vectors of the form:
  \begin{equation*}
    u_\phi = c_4 e_1+c_3 e_2 + e^{i \phi}/\sqrt{2} e_3.
  \end{equation*}
  Now we directly calculate
  \begin{equation*}
    \Tr (\bbone\otimes P_{u_\phi}) P_{\xi_2} = \frac{1}{2} +
    \frac{1}{\sqrt2} ( c_5\conj e^{i\phi} + c_5 e^{-i\phi})
  \end{equation*}
  As  $\Tr (\bbone\otimes P_{u_\phi}) P_{\xi_2}$ must be less or equal
  to $1/2$, then
  \begin{equation*}
    c_5\conj e^{i\phi} + c_5 e^{-i\phi} \le 0.
  \end{equation*}
  If we take $\phi=0$, we get that $\Re c_5 \le 0$. For $\phi=\pi$ we
  get $\Re c_5 \ge 0$, $\phi=\pi/2$ implies $\Im c_5 \ge 0$ and
  $\phi=3/2 \pi$ gives $\Im c_5 \le 0$. Thus $c_5 = 0$.
\end{proof}

\begin{lemma}
  \label{thm:not-both}
  Let $\xi_1,\xi_2,\xi_3$ are three orthonormal vectors in $\hilbert
  H\otimes \hilbert H$ with Schmidt rank equal 2. Let us put
  $\xi_1=1/\sqrt2 (e_1\otimes e_1+e_2\otimes e_2)$. If one of
  remaining vectors is of the form 
  \begin{equation*}
    c_1\frac1{\sqrt 2}(e_1\otimes e_2 - e_2\otimes e_1) +
    c_2 e_1\otimes e_3 + c_3 e_2\otimes e_3 + 
    c_4 e_3\otimes e_2 + c_5 e_3\otimes e_1 + c_6 e_3\otimes e_3,
  \end{equation*}
  then the symmetry
  $s=\bbone - 2 q$, where $q=\sum_i P_{\xi_i}$ is not in $\mathfrak D$
  if both $c_1$ and $c_6$ are not equal to 0.
\end{lemma}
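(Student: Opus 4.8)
The plan is to argue by contradiction: suppose $s\in\mathfrak D$ while $c_1\neq 0$ and $c_6\neq 0$, and derive an impossible relation among the coefficients. As usual write $z_{jk}=(e_j\otimes e_k,\xi_2)$ for the $3\times 3$ coefficient array of $\xi_2$; the only nonzero entries are $z_{12}=c_1/\sqrt2$, $z_{21}=-c_1/\sqrt2$, $z_{13}=c_2$, $z_{23}=c_3$, $z_{32}=c_4$, $z_{31}=c_5$, $z_{33}=c_6$. Since $\xi_2$ has Schmidt rank $2$ its coefficient array is singular, and a direct expansion gives $\det(z_{jk})=\tfrac{c_1}{2}\bigl(c_1 c_6+\sqrt2(c_3 c_5-c_2 c_4)\bigr)$. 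As $c_1\neq 0$ this yields the relation
\begin{equation*}
  c_1 c_6 = \sqrt2\,(c_2 c_4 - c_3 c_5). \tag{$\star$}
\end{equation*}

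Next I would pin down the second-factor reduced density matrix $\sigma$ of $\xi_2$, whose entries are $\sigma_{kk'}=\sum_j z_{jk}\overline{z_{jk'}}=\Tr(\bbone\otimes\ket{e_{k'}}\bra{e_k})P_{\xi_2}$. Because $s\in\mathfrak D$, Corollary~\ref{thm:schmidt-rank2} forces every $\xi_i$ to have both Schmidt coefficients equal to $1/\sqrt2$, so the largest eigenvalue of $\sigma$ is exactly $1/2$. Applying the $\alpha$-normalization of Lemma~\ref{thm:alpha-norm} with the rank-one projector $p=P_{e_3}$, and using that $\xi_1$ carries no weight on $e_3$ in the second factor, one gets $\Tr(\bbone\otimes P_{e_3})(P_{\xi_2}+P_{\xi_3})=1$; since each summand is at most $1/2$ by Lemma~\ref{thm:schmidtcoeff}, both must equal $1/2$. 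In particular $\sigma_{33}=(e_3,\sigma e_3)=1/2$ coincides with the operator norm of $\sigma$.

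The key observation is then that a positive matrix attains its largest eigenvalue as a diagonal entry only at an eigenvector, so $\sigma e_3=\tfrac12 e_3$; equivalently the off-diagonal entries $\sigma_{13}$ and $\sigma_{23}$ vanish. Written out these read $c_5\overline{c_6}=c_1\overline{c_3}/\sqrt2$ and $c_4\overline{c_6}=-c_1\overline{c_2}/\sqrt2$, which (using only $c_1\neq0$) I solve for $c_2=-\sqrt2\,\overline{c_4}c_6/\overline{c_1}$ and $c_3=\sqrt2\,\overline{c_5}c_6/\overline{c_1}$. Substituting these into $(\star)$ makes every term proportional to $c_6/\overline{c_1}$, and after dividing by $c_6\neq 0$ and multiplying by $\overline{c_1}$ the relation collapses to $|c_1|^2=-2(|c_4|^2+|c_5|^2)$. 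The right-hand side is nonpositive while the left-hand side is positive (as $c_1\neq0$), a contradiction; hence $s\notin\mathfrak D$.

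I expect the substance of the argument to be the middle step: recognizing that the $\alpha$-normalization forces the second-factor $e_3$-weight of $\xi_2$ to saturate the bound $1/2$, and hence that $e_3$ is an eigenvector of the reduced density matrix $\sigma$. Once the two vanishing off-diagonal conditions are in hand, the final step is a short and, pleasantly, \emph{case-free} piece of algebra: the signs conspire so that $(\star)$ forces $c_1=0$ regardless of whether $c_4$ or $c_5$ happens to vanish, so no separate subcases are needed.
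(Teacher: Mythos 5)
Your proof is correct, and its second half takes a genuinely different route from the paper's. Both arguments open identically: Schmidt rank $2$ makes the coefficient matrix of $\xi_2$ singular, giving your relation $(\star)$ (the paper derives the same determinant identity and then symmetrizes it into its eq.~\eqref{eq:det-zero} after a WLOG phase choice making $c_6$ real), and the $\alpha$-normalization (Lemma~\ref{thm:alpha-norm} plus Lemma~\ref{thm:schmidtcoeff}, exactly as in the proof of Lemma~\ref{thm:not-e3-e3}) forces $\Tr (\bbone\otimes P_{e_3})P_{\xi_2}=\frac12$. From there the paper encodes the equal-Schmidt-coefficient condition through the characteristic polynomial of the reduced density matrix: the rank-$2$ condition kills the constant term, Vieta's formula gives $\lambda_1\lambda_2=b=\frac14$, and a heavy explicit computation of $b$, combined with the normalization identities $\abs{c_2}^2+\abs{c_3}^2+c_6^2=\frac12$ and $\abs{c_1}^2+\abs{c_4}^2+\abs{c_5}^2=\frac12$ and eq.~\eqref{eq:det-zero}, collapses to $2\abs{c_1}^2c_6^2-2c_6^2-\abs{c_1}^4=0$, which has no admissible solutions. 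You instead exploit the saturation $\sigma_{33}=\frac12$ together with the fact (from Corollary~\ref{thm:schmidt-rank2}) that $\frac12$ is the top eigenvalue of the second-factor reduced density matrix $\sigma$: a positive matrix whose diagonal entry equals its largest eigenvalue has that basis vector as an eigenvector, so $\sigma_{13}=\sigma_{23}=0$. These two \emph{linear} constraints let you solve for $c_2,c_3$ and substitute into $(\star)$, yielding $\abs{c_1}^2=-2\left(\abs{c_4}^2+\abs{c_5}^2\right)$, an immediate sign contradiction. Your route buys a much shorter and more transparent computation (no characteristic polynomial, no elimination of moduli), dispenses with the phase normalization of $c_6$, and is case-free; the only extra ingredients are the standard spectral fact about saturated diagonal entries and Corollary~\ref{thm:schmidt-rank2}, both available at this point in the paper, so there is no circularity.
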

\begin{proof}
  We will prove the statement by the contradiction, thus assume that
  $c_1\neq 0$, $c_6\neq 0$ and $s$ is in $\mathfrak D$.  Firstly, let
  us assume that $\xi_2$ is of the claimed form, and we get rid off
  irrelevant overall phase factor assuming that $c_6$ is real. We know
  from Lemma \ref{thm:rank2fullent} that $\xi_2$ must be Schmidt rank
  2 with equal Schmidt coefficients. Then, it is well known that the
  coefficient matrix $(a_{ij})$, where $a_{ij}\defeq (e_i\otimes e_j,
  \xi_2)$ must have zero determinant (otherwise it would be full rank
  and this would mean that $\xi_2$ must by Schmidt rank 3).
  By the explicit calculation one finds that
  \begin{equation*}
    \det (a_{ij}) = \frac{1}{2} c_1 \left(c_6 c_1 + \sqrt{2} (c_3 c_5
      - c_2 c_4)\right).
  \end{equation*}
  Let us denote by $\delta= (c_3 c_5 - c_2 c_4)$. As we assumed that
  $c_1\neq 0$, this means that
  \begin{equation*}
    c_1 c_6 + \sqrt{2} \delta = 0.
  \end{equation*}
  We can multiply it by $\cconj c_1$ to get $\abs{c_1}^2 c_6 +
  \sqrt{2} \delta\cconj c_1 = 0$. Now adding this equation and its conjugate
  together we can express the necessary condition for
  $\xi_2$ to be Schmidt rank 2 as (we remind that $c_6$ is chosen to be real)
  \begin{equation}
    \label{eq:det-zero}
    \sqrt{2} (\delta \cconj c_1 + \cconj \delta c_1) = - 2 \abs{c_1}^2 c_6.
  \end{equation}
  
  Now note that as we demand that $\xi_2$ is Schmidt rank 2 with equal
  Schmidt coefficients we can assume that $\xi_2 = 1/\sqrt2
  (f_1\otimes g_1 + f_2\otimes g_2)$ for appropriate vectors $f_i$ and
  $g_i$.  Now take the projector $P_{\xi_2}$ onto the vector $\xi_2$
  and calculate partial trace $\omega = \Tr_2 P_{\xi_2}$ with respect
  to the second Hilbert space. Partial trace does not depend on basis
  and we see that eigenvalues of $\omega$ are equal to squares of
  Schmidt coefficients of $\xi_2$, thus must be equal to $1/2$. On the
  other hand we know that the eigenvalues of $\omega$ are
  roots of characteristic polynomial, which in case of $3\times 3$
  matrix can be written in the form
  \begin{equation*}
    \det(\lambda\bbone - \omega) = \lambda^3 + a \lambda^2 + b \lambda
    + d = 0.
  \end{equation*}
  The $d$ is simply equal to $\det(\omega)$ and must equal to zero,
  otherwise $\xi_2$ would be Schmidt rank 3 vector. Thus
  \begin{equation*}
    \lambda (\lambda^2 + a \lambda + b) = 0.
  \end{equation*}
  
  Let us for while forget about the assumptions that two non-zero roots must be
  equal. As always here the non-zero solutions $\lambda_1$ and $\lambda_2$ are
  the squares of Schmidt coefficients, which are positive and their squares
  sum up to 1 (due to normalization), we can assume that
  \begin{equation*}
    \lambda_1 = \sin^2 \theta,\quad \lambda_2 = \cos^2\theta,\quad
    0<\theta<\frac\pi2.
  \end{equation*}
  Then we immediately see that $\lambda_1\lambda_2\leq1/4$ and is
  equal $1/4$ if and only if $\sin\theta = 1/\sqrt2 = \cos\theta$, thus in
  case of equal Schmidt coefficients. We then use Vieta's formula to
  express necessary condition for a $\lambda_1$ and $\lambda_2$ to be equal
  \begin{equation*}
    \lambda_1\lambda_2 = b = \frac14
  \end{equation*}
  By the explicit calculation one find that $b$ can be expressed as
  \begin{equation*}
    \begin{split}
      b = \frac{1}{4} \bigg(2\sqrt{2}c_6 \left( (c_3 c_5-c_2
            c_4) \cconj c_1 + c_1 (\cconj c_3 \cconj c_5 - \cconj
            c_2\cconj c_4)\right) + 4
          \left(\abs{c_2}^2+\abs{c_3}^2\right)
          \left(\abs{c_4}^2+\abs{c_5}^2\right)\\ +2 |c_1|^2
        \left(2
          c_6^2+|c_2|^2+|c_3|^2+|c_4|^2+|c_5|^2\right)+|c_1|^4\bigg).
    \end{split}
  \end{equation*}
  To simplify this expression we will use the fact that $\alpha$-normalization
  of $s$ demands that $\Tr (\bbone\otimes P_{e_3})P_{\xi_2} = 1/2$ (cf. proof
  of Lemma \ref{thm:not-e3-e3}). By explicit calculation this means that
  \begin{equation*}
    \abs{c_2}^2+\abs{c_3}^2+c_6^2 = \frac12,
  \end{equation*}
  and this combined with normalization of vector $\xi_2$ yields
  \begin{equation*}
    \abs{c_1}^2+\abs{c_4}^2+\abs{c_5}^2 = \frac12.
  \end{equation*}
  We will use these equalities to eliminate $\abs{c_2}^2, \abs{c_3}^2,
  \abs{c_4}^2$ and $\abs{c_5}^2$ from the $b$. We will also substitute
  $\delta$ where it applies. We get
  \begin{equation*}
    b = \frac14 \left( 2\sqrt2 c_6(\delta\cconj c_1 + \cconj
      \delta c_1) + 4 \left(\textstyle\frac12-c_6^2\right)\left(\textstyle
\frac12-\abs{c_1}
^2\right) +
      2 \abs{c_1}^2 (c_6^2 - \abs{c_1}^2 + 1) + \abs{c_1}^4\right).
  \end{equation*}
  Now we substitute eq.~\eqref{eq:det-zero} and simplify
  expression to get
  \begin{equation*}
    b = 1/4 \left(2\abs{c_1}^2c_6^2 - 2 c_6^2 - \abs{c_1}^4+1\right)
    = 1/4.
  \end{equation*}
  This is satisfied when
  \begin{equation*}
    2\abs{c_1}^2c_6^2 - 2 c_6^2 - \abs{c_1}^4 = 0
  \end{equation*}
  To simplify notation denote $\abs{c_1}^2 = \alpha$ and
  $c_6^2=\beta$. Now one can immediately see that the equation
  \begin{equation*}
    2 \alpha\beta - 2\beta-2\alpha^2 = 0
  \end{equation*}
  does not have solutions for
  $0<\alpha,\beta<1$. Thus we arrived to the contradiction.
\end{proof}

\begin{corollary}
  \label{thm:form-of-s}
  If a symmetry $s=\bbone - 2 q$, where $q=\sum_i P_{x_i}$ is in $\setd$ then
  \begin{align*}
    x_1 &= \frac{1}{\sqrt{2}} (e_1\otimes e_1 + e_2\otimes e_2),\\
    x_2 &= \frac{1}{\sqrt{2}} (g_1\otimes e_3 + e_3\otimes h_2),\\
    x_3 &= \frac{1}{\sqrt{2}} (k_1\otimes e_3 + e_3\otimes l_2),
  \end{align*}
\end{corollary}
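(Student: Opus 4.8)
The plan is to derive the claimed form of the eigenvectors by feeding the structural information from Corollary~\ref{thm:schmidt-rank2} into the three exclusion lemmas just proved. First I would invoke Corollary~\ref{thm:schmidt-rank2}: since $s=\bbone-2q$ is a block positive symmetry in $\setd$, every eigenvector $x_i$ of $q$ is a Schmidt rank $2$ vector with both Schmidt coefficients equal to $1/\sqrt2$. In particular $x_1$ has a Schmidt decomposition $x_1=\frac{1}{\sqrt2}(f_1\otimes f_1'+f_2\otimes f_2')$, and choosing the orthonormal basis of each copy of $\hilbert H$ so that $f_i,f_i'$ become $e_1,e_2$ (with $e_3$ spanning the remaining direction) puts $x_1$ into the stated form $\frac{1}{\sqrt2}(e_1\otimes e_1+e_2\otimes e_2)$. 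This step only exercises the freedom to pick orthonormal bases.

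Next I would pin down the subspace containing $x_2$ and $x_3$, exactly as in the opening of the proof of Proposition~\ref{thm:twofixed}; note that that reduction used only the form of $x_1$, so it is available verbatim here. Block positivity $(x\otimes y,q\,x\otimes y)\le 1/2$ with $x=y=\frac{1}{\sqrt2}(e_1+e_2)$ forces $(x\otimes y,P_{x_2}x\otimes y)=0$, and the choices $x=y=e_1$ and $x=y=e_2$ force $x_2\perp e_1\otimes e_1$ and $x_2\perp e_2\otimes e_2$. Together these imply that the component of $x_2$ in $\lspan\{e_1,e_2\}\otimes\lspan\{e_1,e_2\}$ is proportional to the antisymmetric vector $e_1\otimes e_2-e_2\otimes e_1$, so $x_2$ lies in the six-dimensional span of $\frac{1}{\sqrt2}(e_1\otimes e_2-e_2\otimes e_1),\,e_1\otimes e_3,\,e_2\otimes e_3,\,e_3\otimes e_2,\,e_3\otimes e_1,\,e_3\otimes e_3$. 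In other words, $x_2$ has precisely the general form of Lemma~\ref{thm:not-both}, with antisymmetric coefficient $c_1$ and $e_3\otimes e_3$ coefficient $c_6$; the same holds for $x_3$.

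The final step is a short case distinction. Since $s\in\setd$, Lemma~\ref{thm:not-both} forbids $c_1\neq0$ and $c_6\neq0$ simultaneously, so at least one of $c_1,c_6$ vanishes. If $c_6=0$, then $x_2$ has the form treated by Lemma~\ref{thm:not-twisted}, which forces $c_1=0$; if instead $c_1=0$, then $x_2$ has the form treated by Lemma~\ref{thm:not-e3-e3}, which forces $c_6=0$. Either way $c_1=c_6=0$, leaving $x_2=c_2\,e_1\otimes e_3+c_3\,e_2\otimes e_3+c_4\,e_3\otimes e_2+c_5\,e_3\otimes e_1$. Using the $\alpha$-normalization identity $|c_2|^2+|c_3|^2=|c_4|^2+|c_5|^2=1/2$ established inside those lemmas, I would rewrite this as $\frac{1}{\sqrt2}(g_1\otimes e_3+e_3\otimes h_2)$ with $g_1\propto c_2e_1+c_3e_2$ and $h_2\propto c_5e_1+c_4e_2$ normalized in $\lspan\{e_1,e_2\}$; one checks directly that such a vector automatically has Schmidt coefficients $1/\sqrt2,1/\sqrt2$, consistent with Corollary~\ref{thm:schmidt-rank2}. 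Applying the identical argument to $x_3$ yields $x_3=\frac{1}{\sqrt2}(k_1\otimes e_3+e_3\otimes l_2)$.

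I do not expect a genuine obstacle at this level, because all of the analytic difficulty is already packaged into Lemmas~\ref{thm:not-twisted}, \ref{thm:not-e3-e3} and \ref{thm:not-both}, the hardest being the determinant and characteristic-polynomial computation in Lemma~\ref{thm:not-both}. The only points demanding care are the bookkeeping of which exclusion lemma governs which degenerate case and the verification that the block-positivity reduction of Proposition~\ref{thm:twofixed} transfers here using only the normalized form of $x_1$.
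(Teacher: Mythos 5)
Your proposal is correct and follows essentially the same route as the paper's own (much terser) proof: normalize $x_1$ via Corollary~\ref{thm:schmidt-rank2}, reduce $x_2,x_3$ to the six-dimensional subspace by the block-positivity argument opening Proposition~\ref{thm:twofixed}, and then combine Lemmas~\ref{thm:not-both}, \ref{thm:not-twisted} and \ref{thm:not-e3-e3} to kill the coefficients $c_1$ and $c_6$. Your explicit case distinction is exactly what the paper means by the three lemmas implying ``together'' that $c_1=c_6=0$, and your closing use of the $\alpha$-normalization identities to recast the surviving terms as $\frac{1}{\sqrt2}(g_1\otimes e_3+e_3\otimes h_2)$ matches the paper's implicit final step.
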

\begin{proof}
  We know that $x_2$ and $x_3$ must be a linear combination of the
  form
  \begin{equation*}
    c_1\frac1{\sqrt 2}(e_1\otimes e_2 - e_2\otimes e_1) +
    c_2 e_1\otimes e_3 + c_3 e_2\otimes e_3 + 
    c_4 e_3\otimes e_2 + c_5 e_3\otimes e_1 + c_6 e_3\otimes e_3,
  \end{equation*}
  but Lemma~\ref{thm:not-both}, \ref{thm:not-e3-e3} and
  \ref{thm:not-twisted} imply together that $c_1=0$ and $c_6=0$. Then
  we get the desired form.
\end{proof}

\begin{lemma}
  \label{thm:perp}
  Let $s = \bbone - 2 q\,\in\mathfrak D$ be the symmetry where
  $q=\sum_i P_{x_i}$ and
  \begin{align*}
    x_1 &= \frac{1}{\sqrt{2}} (e_1\otimes e_1 + e_2\otimes e_2),\\
    x_2 &= \frac{1}{\sqrt{2}} (g_1\otimes e_3 + e_3\otimes h_2),\\
    x_3 &= \frac{1}{\sqrt{2}} (k_1\otimes e_3 + e_3\otimes l_2),
  \end{align*}
  then $h_2\perp l_2$ and $g_1\perp k_1$.
\end{lemma}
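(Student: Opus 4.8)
The plan is to obtain the two relations from two \emph{different} constraints: $h_2\perp l_2$ will come from the $\alpha$-normalization of $s$ via Lemma~\ref{thm:alpha-norm}, whereas $g_1\perp k_1$ will then drop out for free from the mutual orthogonality of the eigenvectors $x_2,x_3$. Throughout I use that, by Corollary~\ref{thm:form-of-s} together with Corollary~\ref{thm:schmidt-rank2}, the vectors $g_1,h_2,k_1,l_2$ are unit vectors lying in $\lspan\set{e_1,e_2}$, and hence are all orthogonal to $e_3$.

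First I would compute the partial trace $\Tr_1 q$ over the first tensor factor. Since $g_1,h_2,k_1,l_2\perp e_3$, every cross term in $P_{x_2}$ and $P_{x_3}$ carries a vanishing inner product such as $(e_3,g_1)$ or $(h_2,e_3)$, and a short bookkeeping computation leaves only
\begin{equation*}
  \Tr_1 q = \tfrac12\big(\bbone + P_{e_3} + P_{h_2} + P_{l_2}\big).
\end{equation*}
Applying Lemma~\ref{thm:alpha-norm} to $\rho=s=\bbone-2q\in\mathfrak D$ gives $\Tr\big((\bbone\otimes p)s\big)=1$ for every rank one projector $p$; as $\Tr(\bbone\otimes p)=3$ this is equivalent to $\Tr\big((\bbone\otimes p)q\big)=1$ for all such $p$. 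Taking $p=P_z$ and inserting the formula above, the condition reads $\big(z,(P_{e_3}+P_{h_2}+P_{l_2})z\big)=1=(z,z)$ for every unit $z$, i.e. the operator identity $P_{e_3}+P_{h_2}+P_{l_2}=\bbone$. Restricting this to the subspace $\set{e_3}^\perp=\lspan\set{e_1,e_2}$ gives $P_{h_2}+P_{l_2}$ equal to the identity operator on that two-dimensional subspace; evaluating the associated quadratic form on the unit vector $h_2$ yields $1+\abs{(l_2,h_2)}^2=1$, so $h_2\perp l_2$.

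Finally I would read off $g_1\perp k_1$ from orthonormality of the eigenvectors. Expanding $(x_2,x_3)$ and again cancelling the cross terms (because $g_1,k_1,h_2,l_2\perp e_3$) gives
\begin{equation*}
  (x_2,x_3)=\tfrac12\big((g_1,k_1)+(h_2,l_2)\big).
\end{equation*}
Since $x_2\perp x_3$ and $(h_2,l_2)=0$ by the previous step, this forces $(g_1,k_1)=0$, i.e. $g_1\perp k_1$.

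The step I expect to be the only real subtlety is precisely this last one. There is no immediately available ``first factor'' analogue of Lemma~\ref{thm:alpha-norm} (such a statement would amount to trace preservation of the map $\phi_s$, which is \emph{not} part of the definition of $\mathfrak D$), so a symmetric repetition of the partial-trace argument on $\Tr_2 q$ is not justified a priori. The resolution is to notice that $g_1\perp k_1$ need not be proved by a normalization argument at all: it is delivered automatically by the orthonormality of $x_2$ and $x_3$ once $h_2\perp l_2$ has been established. Everything else is routine manipulation of the Schmidt forms, relying only on the vanishing of cross terms coming from $e_3\perp\lspan\set{e_1,e_2}$.
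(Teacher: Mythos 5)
Your proof is correct and follows essentially the same route as the paper: the paper applies the $\alpha$-normalization (Lemma~\ref{thm:alpha-norm}) with the specific projector $P_{h_2}$, computing $\Tr\big((\bbone\otimes P_{h_2})P_{x_i}\big)$ term by term to force $\abs{(h_2,l_2)}^2=0$, and then, exactly as you do, obtains $g_1\perp k_1$ from $(x_2,x_3)=0$. Your packaging of the first step as the operator identity $\Tr_1 q=\bbone$ evaluated at $h_2$ is the same computation in global form, and your closing remark (that no $p\otimes\bbone$ analogue of the normalization is available, so the second orthogonality must come from orthonormality of the eigenvectors) is precisely the structure of the paper's argument.
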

\begin{proof}
  Let us calculate $\Tr (\bbone\otimes P_{h_2})P_{x_i}$. For $P_{x_1}$
  we get $1/2$, as $h_2$ belongs to the span of $e_1$ and $e_2$ (it is
  shown in the previous proof). Obviously for $P_{x_2}$ this also
  equals $1/2$, so for $P_{x_3}$ it must equal $0$. We thus calculate
  $\Tr (\bbone\otimes P_{h_2})P_{x_3}$ (once more we use the
  fact, that non-diagonal terms in the first tensor product factor
  will vanish)
  \begin{equation*}
    \begin{split}
      \Tr (\bbone\otimes P_{h_2})P_{x_3} = 
      \frac12 \left(\braket{h_2|e_3}\braket{e_3|h_2} +
      \braket{h_2|l_2}\braket{l_2|h_2}\right) = 
      \frac12 \abs{(h_2,l_2)}^2
    \end{split}
  \end{equation*}
  so the first claim
  follows. Then using the fact that $(x_2,x_3)=0$, we get
  \begin{equation*}
    (x_2,x_3)=\frac12(g_1\otimes e_3 + e_3\otimes h_2, k_1\otimes e_3 +
    e_3\otimes l_2) 
    = \frac12 \left( (g_1,k_1)+(h_2,l_2)\right) = \frac12 (g_1,k_1),
  \end{equation*}
  and the second claim follows.
\end{proof}
\begin{theorem}
  For $n=2,3$ any symmetry in $\mathfrak D$ is locally unitarily
  equivalent to Choi matrix corresponding to the transposition map. 
\end{theorem}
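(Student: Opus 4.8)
The plan is to dispose of $n=2$ by citation and to treat $n=3$ by assembling the structural lemmas of this section into a rigid normal form and then invoking Proposition~\ref{thm:twofixed}. For $n=2$ the statement is exactly the result recalled from \cite{Majewski:2010fk}, that all symmetries in $\setd$ are locally unitarily equivalent to the transposition Choi matrix, so nothing new is required there.

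For $n=3$, I would start from a symmetry $s=\bbone-2q\in\setd$ written in its canonical decomposition. The first point is that membership in $\setd$ fixes the trace, $\Tr s = n = 3$, whence $\Tr q = 3$ and $q$ is a rank-three projector; thus $q=\sum_{i=1}^3 P_{x_i}$ for an orthonormal triple $\set{x_1,x_2,x_3}$. By Corollary~\ref{thm:schmidt-rank2} each $x_i$ has Schmidt rank $2$ with both coefficients equal to $1/\sqrt2$. Since any such vector is locally unitarily equivalent to $\tfrac1{\sqrt2}(e_1\otimes e_1+e_2\otimes e_2)$, I would apply a local unitary so that $x_1=\tfrac1{\sqrt2}(e_1\otimes e_1+e_2\otimes e_2)$; this merely fixes a basis. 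Corollary~\ref{thm:form-of-s} together with Lemma~\ref{thm:perp} then force the remaining eigenvectors into the form
\begin{equation*}
  x_2=\tfrac1{\sqrt2}(g_1\otimes e_3+e_3\otimes h_2),\qquad
  x_3=\tfrac1{\sqrt2}(k_1\otimes e_3+e_3\otimes l_2),
\end{equation*}
with $g_1,k_1,h_2,l_2\in\lspan\set{e_1,e_2}$, $g_1\perp k_1$ and $h_2\perp l_2$.

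The next step is to exhaust the residual local-unitary freedom. The unitaries $U\otimes V$ that preserve the form of $x_1$ are precisely those that are block diagonal with respect to $\lspan\set{e_1,e_2}\oplus\complexes e_3$ with $V\restr{\lspan\set{e_1,e_2}}=\overline{U\restr{\lspan\set{e_1,e_2}}}$ (so that the maximally entangled $x_1$ is left invariant), together with arbitrary phases on $e_3$. Using the freedom in $U\restr{\lspan\set{e_1,e_2}}$ I would rotate $g_1\mapsto e_1$ and $k_1\mapsto e_2$, after which $x_2$ becomes $\tfrac1{\sqrt2}(e_1\otimes e_3+e_3\otimes\tilde h_2)$ and $x_3$ becomes $\tfrac1{\sqrt2}(e_2\otimes e_3+e_3\otimes\tilde l_2)$ with $\tilde h_2\perp\tilde l_2$ still in $\lspan\set{e_1,e_2}$. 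The goal is then to show that $\tilde h_2$ is, up to a phase, the basis vector $e_2$: for then $x_2=\tfrac1{\sqrt2}(e_1\otimes e_3\pm e_3\otimes e_2)$ is exactly of the type admitted in the hypothesis of Proposition~\ref{thm:twofixed}, and that proposition hands us local unitary equivalence to $s_0$, while Lemma~\ref{thm:s0} identifies $s_0$ with the transposition Choi matrix, completing the argument.

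The main obstacle is precisely this last reduction, forcing $\tilde h_2\perp e_1$. It cannot come from $\alpha$-normalization alone: a direct computation with Lemma~\ref{thm:schmidtcoeff} and Lemma~\ref{thm:alpha-norm} shows that $\Tr(\bbone\otimes p)q=1$ for every rank-one $p$ \emph{regardless} of the angle between $\tilde h_2$ and $e_1$, because $\set{\tilde h_2,\tilde l_2}$ is an orthonormal basis of $\lspan\set{e_1,e_2}$. The constraint must therefore be extracted from genuine block positivity, i.e.\ from $(a\otimes b,\,q\,a\otimes b)\le\tfrac12$ on product vectors. Evaluating this on vectors straddling the $e_3$ direction (for example $a=b=\tfrac1{\sqrt2}(e_1+e_3)$, and its $\tilde h_2$-adapted analogues) already yields an inequality in the single angle parameter of $\tilde h_2$; I expect that choosing enough test vectors to pin this angle down exactly, while tracking the leftover phases on $e_3$ and on $e_2$ so as to land in the precise `$\pm$' normal form, is the only real work, everything else being an application of results already proved in this section. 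An equivalent and perhaps cleaner packaging of the endgame is via the partial transpose, exactly as in Lemma~\ref{thm:s0}: once the normal form is rigid one verifies that $\ptranspose(s)=3P_z$ for a maximally entangled $z$, and Lemma~\ref{thm:taupad} together with the local equivalence of equal-Schmidt-coefficient vectors then produces equivalence to transposition directly.
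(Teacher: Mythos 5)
Your reduction to normal form is sound and follows the paper step for step up to the very end: $n=2$ by citation, then Corollary~\ref{thm:schmidt-rank2}, Corollary~\ref{thm:form-of-s} and Lemma~\ref{thm:perp} to reach $x_1=\tfrac1{\sqrt2}(e_1\otimes e_1+e_2\otimes e_2)$, $x_2=\tfrac1{\sqrt2}(g_1\otimes e_3+e_3\otimes h_2)$, $x_3=\tfrac1{\sqrt2}(k_1\otimes e_3+e_3\otimes l_2)$ with $g_1\perp k_1$, $h_2\perp l_2$. But from there your argument has a genuine gap at its decisive point. After restricting to the stabilizer of $x_1$ (block-diagonal $U\otimes V$ with $V\restr{\lspan\set{e_1,e_2}}=\overline{U\restr{\lspan\set{e_1,e_2}}}$) and rotating $g_1\mapsto e_1$, $k_1\mapsto e_2$, everything hinges on showing $\tilde h_2\perp e_1$ — and this you only announce as an expectation (``choosing enough test vectors to pin this angle down \ldots is the only real work''). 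That step is not a formality: as you yourself correctly observe, $\alpha$-normalization is blind to the angle between $\tilde h_2$ and $e_2$, so the whole content of the theorem sits at this point in an unperformed block-positivity computation. The claim is in fact true — for instance $a=\tfrac1{\sqrt2}(e_1+e_3)$, $b=\tfrac1{\sqrt2}(u+e_3)$ with $u$ suitably aligned with $\tilde h_2$ gives $(a\otimes b, q\, a\otimes b)>\tfrac12$ whenever $\tilde h_2\not\perp e_1$ — but the submitted proof contains neither this computation nor the phase bookkeeping needed to land on the exact $\pm$ form required by Proposition~\ref{thm:twofixed}. A proof cannot defer its central step.

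The missing idea — and the way the paper closes the argument with no further computation at all — is that you are not obliged to keep $x_1$ in canonical form, hence not obliged to work inside its stabilizer. All three eigenvectors have identical Schmidt structure, so any of them may serve as the anchor. The paper chooses $U$ and $V$ completely independently, built from the two orthonormal triples supplied by Lemma~\ref{thm:perp}: $Ug_1=e_1$, $Uk_1=-e_3$, $Ue_3=e_2$ and $Vh_2=e_2$, $Vl_2=e_3$, $Ve_3=e_1$. Under this $U\otimes V$, the vector $x_2$ becomes $\tfrac1{\sqrt2}(e_1\otimes e_1+e_2\otimes e_2)$ (the new anchor), $x_3$ becomes $\tfrac1{\sqrt2}(e_2\otimes e_3-e_3\otimes e_1)$ (exactly the special form demanded of the second vector in Proposition~\ref{thm:twofixed}), and the image of $x_1$ is the ``arbitrary'' third vector. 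Since local unitaries preserve $\setd$, Proposition~\ref{thm:twofixed} applies verbatim with the roles of the three eigenvectors permuted, and that proposition already contains all the block-positivity work you were postponing. So your plan is salvageable, but as written it stops exactly where the real proof must begin, whereas the paper's role-swap makes that hard step unnecessary.
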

\begin{proof}
  For $n=2$ the result is already known (see
  \cite{Majewski:2010fk}). For $n=3$, take a symmetry $s\in\mathfrak
  D$. Denote $s = \bbone - 2 q$ where $q=\sum_i P_{y_i}$. Then from
  the Corollary~\ref{thm:form-of-s} we know that 
  \begin{align*}
    y_1 &= \frac{1}{\sqrt{2}} (e_1\otimes e_1 + e_2\otimes e_2),\\
    y_2 &= \frac{1}{\sqrt{2}} (g_1\otimes e_3 + e_3\otimes h_2),\\
    y_3 &= \frac{1}{\sqrt{2}} (k_1\otimes e_3 + e_3\otimes l_2).
  \end{align*}
  According to the Lemma~\ref{thm:perp} $h_2\perp l_2$ and
  $g_1\perp k_1$. Moreover we know that $g_1\perp e_3$, $k_1\perp
  e_3$, $h_2\perp e_3$ and $l_2\perp e_3$. Thus $\set{g_1,k_1,e_3}$
  and $\set{h_2,l_2,e_3}$ are two sets of mutually orthogonal vectors that
  we can consider as a bases in corresponding Hilbert spaces. This
  allows us to define two unitary operators:
  \begin{align*}
    U g_1 &= e_1,& U k_1 &= -e_3,& U e_3 &= e_2,\\
    V h_2 &= e_2,& V l_2 &= e_3,& V e_3 &= e_1.
  \end{align*}
  Then
  \begin{align*}
    U\otimes V y_1 &= \frac{1}{\sqrt{2}}( U e_1\otimes V e_1+U e_2\otimes
    V e_2),\\
    U\otimes V y_2 &= \frac{1}{\sqrt{2}}( e_1\otimes e_1 + e_2\otimes
    e_2),\\
    U\otimes V y_3 &= \frac{1}{\sqrt{2}}( -e_3\otimes e_1 + e_2\otimes
    e_3).
  \end{align*}
  These three vectors satisfy assumptions of the
  Proposition~\ref{thm:twofixed}, so $s$ is locally unitarily
  equivalent to $s_0$. But $s_0$ is locally unitarily equivalent to
  the Choi matrix of transposition map and the claim follows.
\end{proof}

\begin{remark}
  Consider case $n=3$. It is clear that any antisomorphism is
  represented by a symmetry. Above theorem establishes the converse:
  any symmetry corresponds to the antisomorphism. Consequently any
  isomorphism in $\pmaps(\malg 3)$ is represented by a Choi matrix of
  the form $3 P_x$, for some maximally entangled vector $x$.
\end{remark}

\subsection{Symmetries as exposed points of $\setd$}

Our goal is to show that the Choi matrix corresponding to transposition map in
$\malg 3$ is an exposed (so also an extreme) point of $\setd$. We start with a
lemma.

\begin{lemma}
  \label{thm:wex-lemma}
  Let $w = \sum_{i,j=1}^{n} E_{ij}\otimes E_{ji}$ and 
  $\sigma \in \setd$ such that $\Tr w \sigma = n^2$. Then
  \begin{enumerate}[(i)]
  \item $(e_j, \sigma_{ij} e_i) = 1$,
  \item $\sum_i \sigma_{ii} = \bbone$,
  \item $\sigma_{ii}\ge 0$, and $(e_i, \sigma_{jj} e_i) = 0$ for $i\neq j$, 
  \item $(e_i, \sigma_{ij} e_i) = 0$ for $i\neq j$,
  \end{enumerate}
  where we adopted notation $\sigma = \sum_{ij} E_{ij}\otimes \sigma_{ij}.$
\end{lemma}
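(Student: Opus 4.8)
The plan is to extract all four statements from the single scalar identity
\[
  \Tr w\sigma = \sum_{i,j} (e_j,\sigma_{ij}e_i),
\]
obtained by expanding $w=\sum_{kl}E_{kl}\otimes E_{lk}$ and $\sigma=\sum_{ij}E_{ij}\otimes\sigma_{ij}$ and computing the trace directly. The hypothesis $\Tr w\sigma=n^2$ then says that this sum of $n^2$ terms attains the maximal value compatible with the constraints, so the whole argument is an equality-case analysis. First I would record the two assertions that need no hypothesis at all: writing $\sigma=\rho_{\phi_\sigma}$ for the positive unital map $\phi_\sigma$ with $\sigma_{ij}=\phi_\sigma(E_{ij})$, unitality gives $\sum_i\sigma_{ii}=\phi_\sigma(\bbone)=\bbone$, which is (ii), while positivity gives $\sigma_{ii}=\phi_\sigma(E_{ii})\ge 0$, the first half of (iii).

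The heart of the matter is the term-wise estimate $\abs{(e_j,\sigma_{ij}e_i)}\le 1$ for every $i,j$. For the diagonal terms this is elementary: by (ii), $\sum_i(e_k,\sigma_{ii}e_k)=(e_k,\bbone e_k)=1$ with all summands nonnegative, so $0\le(e_i,\sigma_{ii}e_i)\le 1$. For $i\ne j$ I would invoke the standard fact that a positive unital map is a contraction, whence $\norm{\sigma_{ij}}=\norm{\phi_\sigma(E_{ij})}\le\norm{E_{ij}}=1$ and therefore $\abs{(e_j,\sigma_{ij}e_i)}\le\norm{\sigma_{ij}}\le 1$. Since $\Tr w\sigma$ is real and equals $n^2=\sum_{i,j}1$, and each of the $n^2$ summands has real part at most $1$, all real parts must equal $1$; combined with the modulus bound this forces $(e_j,\sigma_{ij}e_i)=1$ for all $i,j$, which is exactly (i).

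With (i) in hand the remaining items are bookkeeping with positive semidefinite matrices. From the diagonal case $(e_i,\sigma_{ii}e_i)=1$ together with $\sum_i(e_k,\sigma_{ii}e_k)=1$ and nonnegativity of the summands, I get $(e_k,\sigma_{ii}e_k)=0$ for $k\ne i$, which after relabelling is (iii). For (iv) I would use block positivity evaluated at the single vector $e_i$: the matrix $G_{pq}=(e_i,\sigma_{pq}e_i)$ is positive semidefinite, because contracting it against a vector $c$ yields $(e_i,\phi_\sigma(\ket{\cconj{c}}\bra{\cconj{c}})e_i)\ge 0$. By (i) and (iii) its diagonal is $G_{pp}=(e_i,\sigma_{pp}e_i)=\delta_{pi}$, and a positive semidefinite matrix with a vanishing diagonal entry has the whole corresponding row and column equal to zero; hence $G_{ij}=(e_i,\sigma_{ij}e_i)=0$ for $j\ne i$, which is (iv).

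I expect the only genuine obstacle to be the off-diagonal bound $\abs{(e_j,\sigma_{ij}e_i)}\le 1$; every other step is either immediate from unitality and positivity of $\phi_\sigma$ or a one-line positive-semidefiniteness argument, and the hypothesis $\Tr w\sigma=n^2$ does the rest by collapsing the $n^2$ inequalities simultaneously to equalities. Should one wish to avoid quoting the contractivity theorem, I note that for the self-adjoint element $E_{ij}+E_{ji}$ of norm one, positivity and unitality already give $-\bbone\le\phi_\sigma(E_{ij}+E_{ji})\le\bbone$ (and $(E_{ij}+E_{ji})^2=E_{ii}+E_{jj}$ is what Kadison's inequality would use); the corresponding control of the full non-self-adjoint block $\sigma_{ij}$ is precisely the assertion that a positive unital map is contractive.
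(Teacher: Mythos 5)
Your proof is correct, and its overall skeleton is the same as the paper's: expand $\Tr w\sigma=\sum_{i,j}(e_j,\sigma_{ij}e_i)$, bound every summand by $1$ in modulus, and extract (i), (iii), (iv) from the equality case, with (ii) coming from unitality alone. You genuinely diverge in two places, and the comparison is instructive. First, for the off-diagonal bound $\abs{(e_j,\sigma_{ij}e_i)}\le 1$ the paper does not invoke contractivity of positive unital maps (Russo--Dye); it uses the normalization $\alpha(\sigma)=1$, which is part of the definition of $\setd$, together with $\pi(E_{ji}\otimes E_{ij})\le\norm{E_{ji}}\,\norm{E_{ij}}_1=1$, so the bound falls straight out of the duality \eqref{eq:alpha-norm}. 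The two justifications are equivalent: under the isometric isomorphism of Lemma~\ref{thm:basic-lemma} one has $\alpha(\sigma)=\norm{\phi_\sigma}$, so $\alpha$-normalization is precisely the statement that $\phi_\sigma$ is a contraction. But the paper's route stays inside its own machinery and uses a hypothesis that is literally given, whereas yours imports an external theorem --- which, as you correctly concede in your closing remark, you genuinely need, since the elementary order estimate $-\bbone\le\phi_\sigma(a)\le\bbone$ only controls self-adjoint arguments and $E_{ij}$ is not normal. Second, for (iv) the paper repeats the perturbation computation of Proposition~\ref{thm:form-of-rho}: it tests block positivity on $x=\epsilon e_i\pm\lambda e_j$ and $x=\epsilon e_i\pm i\lambda e_j$ with $y=e_i$, letting $\lambda\in\reals$ be arbitrary. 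You instead observe that $G_{pq}=(e_i,\sigma_{pq}e_i)$ is positive semidefinite (this is exactly block positivity tested at $y=e_i$), that its diagonal is $\delta_{pi}$ by (i) and (iii), and that a positive semidefinite matrix with a vanishing diagonal entry has vanishing row and column. Your version is shorter and packages the paper's four sign choices into a single Cauchy--Schwarz fact; the paper's version is more pedestrian but self-contained, and it deliberately mirrors the $n=2$ computation of Proposition~\ref{thm:form-of-rho}, a parallel between the $n=2$ and $n=3$ cases that the authors explicitly want to emphasize.
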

\begin{proof}
  For (i) we firstly note that
  \[
    \Tr w \sigma = \sum_{i,j=1}^n (e_i\otimes e_j, \sigma e_j\otimes e_i),
  \]
  but on the other hand due to $\alpha(\sigma) = 1$, i.e. 
  \[\sup\set{\abs{\Tr
  \sigma a}\setdef a\in\malg 3, \pi(a) = 1} = 1,
  \]
  one has
  \[
    \abs{\Tr \sigma E_{ji} \otimes E_{ij}} = 
    \abs{(e_i\otimes e_j, \sigma e_j\otimes e_i)} \leq 1.
  \]
  These and the assumption that $\Tr w \sigma = n^2$ implies (i).

  Property (ii) follows immediately:
  \[
    \sum_i \sigma_{ii} = \sum_i \phi(E_{ii}) = \phi(\bbone) = \bbone,
  \]
  where $\phi$ is a positive unital normalized map corresponding to $\sigma$. 

  To show (iii) we need to apply block-positivity condition $\sigma\gebp 0$. In
  particular
  \[
    0\le (e_m\otimes y, \sigma e_m\otimes y) = \sum_{ij} (e_m\otimes y, E_{ij}\otimes \sigma_{ij} e_m\otimes y) = (y,\sigma_{mm} y).
  \]
  Now, due to (ii) $(e_k, \sum_{m} \sigma_{mm} e_k) = 1$, due to (i) $(e_k,  
  \sigma_{kk} e_k) = 1$ and due to last inequality $(e_k,\sigma_{mm} e_k)\ge  
  0$. Thus we obtained desired result. 

  Finally to show (iv) we proceed as in the proof of Prop. \ref{thm:form-of-rho}. We take $x=\epsilon e_i + \lambda e_j$ and $y=e_i$ with $i\neq j$ and $\epsilon>0,\lambda\in\reals$ (so the vectors are not necessarily normalized). Then block positivity gives us
  \begin{equation*}
    \begin{split}
      0&\le (x\otimes e_i, \sum_{kl} E_{kl} \otimes \sigma_{kl} x\otimes e_i)\\
      &=\epsilon^2 (e_i,\sigma_{ii} e_i) + \epsilon \lambda (e_i,\sigma_{ji} e_i)
      + \epsilon \lambda (e_i,\sigma_{ji} e_i) + \lambda^2 (e_i,\sigma_{jj}e_i)
    \end{split}
  \end{equation*}
  Due to our assumptions and results already obtained this means that
  \begin{equation*}
    \lambda (e_i,(\sigma_{ij}+\sigma_{ji}) e_i)\ge -\epsilon.
  \end{equation*}
  Repeating this for $x=\epsilon e_i - \lambda e_j, x = \epsilon e_i + i \lambda e_j, x = \epsilon e_i - i \lambda e_j$, we conclude that $(e_i,\sigma_{ij} e_i) = 0$ for $i\neq j$.
\end{proof}

\begin{theorem}
	The Choi matrix $w$ 
	\begin{equation}
		w = \sum_{ij} E_{ij}\otimes E_{ji}
	\end{equation}
	corresponding to transposition map in $\malg 3$ is an exposed point of $\setd$.
\end{theorem}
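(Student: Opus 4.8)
The plan is to use $\sigma\mapsto\Tr w\sigma$ as an exposing functional for $w$: I will show that it attains its maximum $n^2=9$ over $\setd$ and that $w$ is the \emph{unique} maximizer. That $w\in\setd$ is clear, since $w$ is the Choi matrix of the transposition map, which is positive and unital, and $\Tr w w=n^2$ is a direct matrix-unit computation. For an arbitrary $\sigma=\sum_{ij}E_{ij}\otimes\sigma_{ij}\in\setd$ one has $\Tr w\sigma=\sum_{ij}(e_j,\sigma_{ij}e_i)$, a real number because $w$ and $\sigma$ are self-adjoint. Each summand equals $\Tr\sigma\,(E_{ji}\otimes E_{ij})$, and since $\pi(E_{ji}\otimes E_{ij})=1$ the normalization $\alpha(\sigma)=1$ forces $\abs{(e_j,\sigma_{ij}e_i)}\le1$. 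Hence $\Tr w\sigma\le n^2$, with equality only if every summand equals $1$, which is exactly item (i) of Lemma~\ref{thm:wex-lemma}. This reduces the theorem to proving that $\Tr w\sigma=n^2$ implies $\sigma=w$, and simultaneously hands me properties (i)--(iv).

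Next I would pin down the diagonal blocks. By (i) and (iii), $\sigma_{ii}\ge0$ with $(e_i,\sigma_{ii}e_i)=1$ while $(e_k,\sigma_{ii}e_k)=0$ for $k\ne i$; a positive semidefinite matrix with a zero on the diagonal has the whole corresponding row and column equal to zero, so $\sigma_{ii}=E_{ii}$. The useful reformulation of block positivity is that for every vector $y$ the $n\times n$ matrix $M(y)\defeq\big((y,\sigma_{ij}y)\big)_{ij}$ is positive semidefinite, since $(x\otimes y,\sigma\,x\otimes y)=(x,M(y)x)$. Taking $y=e_k$ and using $\sigma_{ii}=E_{ii}$ gives $M(e_k)$ with diagonal having a single $1$ in position $k$; positivity then forces all its off-diagonal entries to vanish, so $(e_k,\sigma_{ij}e_k)=0$ for $i\ne j$. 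Thus every diagonal entry of every off-diagonal block is zero.

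To kill the remaining entries I would take $y=e_k+\omega e_l$ with $\abs{\omega}=1$ and $k\ne l$. Because $\sigma_{ii}=E_{ii}$, the diagonal of $M(y)$ is the indicator of $\set{k,l}$, so for $n=3$ exactly one index $m$ is absent; positivity makes the whole $m$-th row and column of $M(y)$ vanish and leaves a $2\times2$ positive block with unit diagonal. Running through the three choices $\set{k,l}=\set{1,2},\set{1,3},\set{2,3}$, and using that the (already vanishing) diagonal entries of the off-diagonal blocks drop out, each constrained entry of $M(y)$ reduces to $\omega(\sigma_{ij})_{kl}+\bar\omega(\sigma_{ij})_{lk}$, or, for the surviving block and after invoking $(\sigma_{kl})_{lk}=1$ from (i), to $\omega(\sigma_{kl})_{kl}+\bar\omega$ bounded in modulus by the $2\times2$ minor. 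Letting $\omega$ range over the unit circle forces all these to vanish, so together with $(e_j,\sigma_{ij}e_i)=1$ we obtain $\sigma_{ij}=E_{ji}$ for every $i\ne j$. Combined with $\sigma_{ii}=E_{ii}$ this gives $\sigma=w$, proving uniqueness and hence exposedness.

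I expect this last step to be the main obstacle. The diagonal information is essentially free from positivity and the single-vector test $y=e_k$, but extracting the genuinely off-diagonal entries of the off-diagonal blocks really needs the phase average over $\omega$ together with the $2\times2$ principal-minor bound; one also has to check that the three pairs $\set{k,l}$ indeed exhaust all entries of all blocks when $n=3$, which they do precisely because each block index lands once in the surviving pair and twice in a vanishing row.
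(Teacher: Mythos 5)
Your proof is correct, and its skeleton is the same as the paper's: the same exposing functional $\sigma\mapsto\Tr w\sigma$, the same reduction to showing that $\Tr w\sigma=n^2$ forces $\sigma=w$, the same preliminary facts (items (i)--(iv) of Lemma~\ref{thm:wex-lemma}, part of which you re-derive), and the same identification of the diagonal blocks $\sigma_{ii}=\ket{e_i}\bra{e_i}$ via the zero-diagonal-kills-row-and-column property of positive semidefinite matrices. Where you genuinely depart from the paper is in eliminating the off-diagonal entries of the off-diagonal blocks $\sigma_{ij}$, $i\neq j$. The paper does this by hand, testing block positivity on several families of product vectors ($x_\pm=\epsilon e_1\pm\lambda e_2$, $y_\pm=e_1\pm e_3$, $u_\pm=\epsilon e_1\pm i\lambda e_2$, $v_\pm=e_1\pm ie_3$, and finally the rather ad hoc $z_\pm=\epsilon e_1\pm(1\pm\epsilon i)e_2$), extracting real and imaginary parts entry by entry through $\epsilon$--$\lambda$ limiting arguments. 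You instead recast block positivity once and for all as positive semidefiniteness of $M(y)=\bigl((y,\sigma_{ij}y)\bigr)_{ij}$ for every $y$, and then invoke structural facts about PSD matrices: with $y=e_k$ the zero diagonal annihilates rows and columns, and with $y=e_k+\omega e_l$, $\abs{\omega}=1$, the vanishing row/column of the absent index $m$ plus the unit-diagonal $2\times2$ minor bound, averaged over the phase $\omega$, kill everything else, item (i) supplying the single entry equal to $1$. Your exhaustion check is also right: each entry position $(k,l)$, $k\neq l$, of each block $\sigma_{ij}$ is covered by the run with pair $\set{k,l}$, either as the surviving block or inside the vanishing row/column, and the blocks $\sigma_{ji}$ follow by hermiticity. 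What your organization buys is uniformity and economy — one mechanism disposes of all entries, with no delicately tuned test vectors; what the paper's version buys is an explicitly elementary computation that deliberately mirrors Proposition~\ref{thm:form-of-rho}, which the authors state was their reason for presenting the longer proof.
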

\begin{proof}
  We will show that the value of functional $\omega(\sigma) = \Tr w \sigma$ is strictly less than $n^2$ for $\sigma\in\setd$ unless $\sigma = w$.

  Because $\Tr w \sigma = \sum_{i,j=}^n (e_i\otimes e_j, \sigma e_j\otimes e_i)$ and due to $\alpha$-normalization of $\sigma$ we have that $\Tr w \sigma\leq n^2$. It is clear that $\Tr w w = n^2$. Let us take arbitrary $\sigma\in\setd$ such that $\Tr w \sigma = n^2$. From the previous lemma we know that $\sigma_{ii}\ge 0$ and 
  \begin{equation*}
    (e_1,\sigma_{11} e_1) = 1,\quad (e_2,\sigma_{11} e_2) = 0,\quad (e_3,\sigma_{11} e_3) = 0,
  \end{equation*}
  from which we infer that $e_2,e_3\in \ker \sigma_{11}$, so $\sigma_{11} = \ket{e_1}\bra{e_1}.$ Analogously we show that $\sigma_{22} = \ket{e_2}\bra{e_2}$ and $\sigma_{33} = \ket{e_3}\bra{e_3}.$ 

  Now let us consider $\sigma_{12}$. From the Lemma~\ref{thm:wex-lemma} we immediately get that 
  \begin{equation}
    \label{eq:wex-proof-1}
    (e_2, \sigma_{12} e_1) = 1
  \end{equation}
  and
  \begin{equation}
    \label{eq:wex-proof-2}
    (e_1, \sigma_{12} e_1) = 0.
  \end{equation}
  Due to the fact that $\sigma$ is hermitian, $(e_2, \sigma_{12} e_2) = (\sigma_{21} e_2, e_2) = \cconj{(e_2, \sigma_{21} e_2)} = 0$, so
  \begin{equation*}
    \label{eq:wex-proof-3}
    (e_2, \sigma_{12} e_2) = 0
  \end{equation*}

  Now we proceed as in the proof of Prop.~\ref{thm:form-of-rho}. Precisely, take $x_\pm = \epsilon e_1 \pm \lambda e_2,$ with $\epsilon>0, \lambda\in\reals$ and $y_\pm = e_1\pm e_3$. Then
  \begin{equation*}
     0\le (x_+\otimes y_+, \sigma x_+\otimes y_+) + (x_+\otimes y_-, \sigma x_+\otimes y_-) = 2 \epsilon^2 + 4 \epsilon \lambda \Re (e_3, \sigma_{12} e_3)
   \end{equation*} 
   and
   \begin{equation*}
     0\le (x_-\otimes y_+, \sigma x_-\otimes y_+) + (x_-\otimes y_-, \sigma x_-\otimes y_-) = 2 \epsilon^2 - 4 \epsilon \lambda \Re (e_3, \sigma_{12} e_3),     
   \end{equation*}
 so $-\epsilon\leq 2 \lambda \Re (e_3, \sigma_{12} e_3) \leq \epsilon$. Due to
 arbitrariness of $\lambda$, $\Re (e_3, \sigma_{12} e_3) = 0$. Analogous
 calculations for $u_\pm = \epsilon e_1 \pm i \lambda e_2$ instead of $x_\pm$
 yield that $\Im (e_3, \sigma_{12} e_3) = 0$, so
   \begin{equation*}
     (e_3, \sigma_{12} e_3) = 0.
   \end{equation*}

   Using these results we see that
   \begin{align*}
     0&\le (x_+\otimes y_+, \sigma x_+\otimes y_+) = \epsilon^2 + 2 \epsilon \lambda \Re \left( (e_1,\sigma_{12} e_3) + (e_3, \sigma_{12} e_1) \right) \\
     0&\le (x_+\otimes y_-, \sigma x_+\otimes y_-) = \epsilon^2 - 2 \epsilon \lambda \Re \left( (e_1,\sigma_{12} e_3) + (e_3, \sigma_{12} e_1) \right) \\
     0&\le (u_+\otimes y_+, \sigma u_+\otimes y_+) = \epsilon^2 - 2 \epsilon \lambda \Im \left( (e_1,\sigma_{12} e_3) + (e_3, \sigma_{12} e_1) \right) \\
     0&\le (u_+\otimes y_-, \sigma u_+\otimes y_-) = \epsilon^2 + 2 \epsilon \lambda \Im \left( (e_1,\sigma_{12} e_3) + (e_3, \sigma_{12} e_1) \right) \\
   \end{align*}
   so $(e_1,\sigma_{12} e_3) + (e_3, \sigma_{12} e_1) = 0$. Analogous results for $v_\pm = e_1\pm i e_3$ yield that $(e_1,\sigma_{12} e_3) - (e_3, \sigma_{12} e_1)=0$, so
   \begin{equation*}
     \label{eq:wex-proof-4}
     (e_1,\sigma_{12} e_3) = 0,\quad (e_3, \sigma_{12} e_1) = 0.
   \end{equation*}

  Repeating the same arguments for $y_\pm = e_2 \pm e_3, v_\pm = e_2\pm i e_3$ we get that
   \begin{equation*}
     \label{eq:wex-proof-5}
     (e_2, \sigma_{12} e_3) = 0,\quad (e_3, \sigma_{12} e_2) = 0.
   \end{equation*}

  It remains to show that $(e_1, \sigma_{12} e_2) = 0$. Firstly we take $\epsilon = 1,\lambda = 1$, $y_\pm = e_1\pm e_2$ and $v_\pm = e_1\pm i e_2$ and see that
  \begin{align*}
    0&\le (u_-\otimes v_+, \sigma u_-\otimes v_+) = 2\Re (e_1, \sigma_{12} e_2),\\
    0&\le (x_+\otimes y_-, \sigma x_+\otimes y_-) = -2\Re (e_1, \sigma_{12} e_2),
  \end{align*}
  so $\Re (e_1, \sigma_{12} e_2) = 0$. Now for $z_\pm = \epsilon e_1 \pm (1\pm \epsilon i) e_2$ we calculate (using previous results) that
	\begin{align*}
		0 &\le (z_+\otimes v_-, \sigma z_+\otimes v_-) = 
			1-2i \epsilon (e_1, \sigma_{12} e_2),\\
		0 &\le (z_-\otimes v_-, \sigma z_-\otimes v_-) = 
			1+2i \epsilon (e_1, \sigma_{12} e_2),\\
	\end{align*}
	so that for every $\epsilon>0$
	\begin{equation*}
		-\frac1{2 \epsilon} \le \Im (e_1, \sigma_{12} e_2) \le \frac1{2 \epsilon}
	\end{equation*}
  We conclude that $\Im (e_1, \sigma_{12} e_2) = 0$. Gathering all those results together we see that 
  \begin{equation*}
    \sigma_{12} = \ket{e_2}\bra{e_1}.
  \end{equation*}
  Using the same methods we show that
  \begin{equation*}
    \sigma_{13} = \ket{e_3}\bra{e_1},\quad \sigma_{23} = \ket{e_3}\bra{e_2}.
  \end{equation*}
  Thus if for any $\sigma\in\setd$, $\Tr w \sigma = n^2,$ then $\sigma=w$, otherwise $\Tr w \sigma < n^2$, so $w$ is an exposed point of $\setd$.
\end{proof}

Combining this result with previous section we see that.

\begin{corollary}
  Any symmetry $s\in\setd$ is an exposed point of $\setd$. Also any Choi matrix
  of the form $3 P_x$, where $x$ maximally entangled vector, is an exposed
  point of $\setd$.
\end{corollary}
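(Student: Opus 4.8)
The plan is to deduce both assertions formally from the two preceding theorems---the local unitary classification of symmetries in $\setd$ and the exposedness of the transposition Choi matrix $w$---by exploiting that the property of being an exposed point of $\setd$ is invariant under the affine automorphisms of $\setd$. The two families of automorphisms I will use are the local unitary conjugations $\sigma\mapsto\ad_{U\otimes V}\sigma$ and the partial transposition $\ptranspose$.

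First I would record the elementary invariance lemma. A local unitary conjugation $\ad_{U\otimes V}$ is a linear bijection of $\malg 3\otimes\malg 3$ with inverse $\ad_{U\conj\otimes V\conj}$; it preserves block positivity since $(x\otimes y,\ad_{U\otimes V}\sigma\, x\otimes y)=(U\conj x\otimes V\conj y,\sigma\,U\conj x\otimes V\conj y)$, it preserves the trace, and it preserves the $\alpha$-norm because by Lemma~\ref{thm:alpha-norm-lem} one has $\alpha(\ad_{U\otimes V}\sigma)=\max_{s,p}\abs{\Tr\sigma\,(U\conj s U)\otimes(V\conj p V)}$, where $s\mapsto U\conj s U$ and $p\mapsto V\conj p V$ permute $\symmetries(\complexes^3)$ and the rank one projectors among themselves. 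Hence $\ad_{U\otimes V}$ maps $\setd$ bijectively onto itself. Partial transposition $\ptranspose$ is an involution which, as recalled in Section~\ref{sec:choi_map}, also maps $\setd$ onto itself. Finally, if $T$ is any affine bijection of $\setd$ and $\rho$ is exposed by a supporting functional $f$ (so $f(\rho)>f(\sigma)$ for all $\sigma\in\setd\setminus\set{\rho}$), then $T\rho$ is exposed by $f\circ T^{-1}$; thus every such $T$ carries exposed points to exposed points.

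With the lemma in hand the corollary is immediate. For the first claim, let $s\in\setd$ be a symmetry. By the theorem preceding this corollary, $s$ is locally unitary equivalent to $w$, i.e.\ $s=\ad_{U\otimes V}w$ for some unitaries $U,V$; since $w$ is exposed and $\ad_{U\otimes V}$ is an affine automorphism of $\setd$, the matrix $s$ is exposed. For the second claim, recall from the proof of Lemma~\ref{thm:s0} that $\ptranspose(w)=3P_y$ with $y=\tfrac1{\sqrt3}(e_1\otimes e_1+e_2\otimes e_2+e_3\otimes e_3)$; as $\ptranspose$ is an affine automorphism of $\setd$ and $w$ is exposed, $3P_y$ is exposed. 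An arbitrary maximally entangled $x$ has the same Schmidt coefficients $1/\sqrt3$ as $y$, hence $x=U\otimes V\,y$ for suitable unitaries $U,V$, so $3P_x=\ad_{U\otimes V}(3P_y)$ is again exposed.

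The only genuine work is the invariance lemma, and within it the single delicate point is the $\alpha$-norm invariance under local unitaries; everything else (block positivity, trace, the behaviour of $\ptranspose$ cited from \cite{Majewski:2010fk}, and the transport of the supporting functional) is purely formal. I expect the $\alpha$-norm step to be the main obstacle, but it reduces cleanly to Lemma~\ref{thm:alpha-norm-lem} together with the observation that unitary conjugation permutes symmetries and rank one projectors, so no substantial difficulty remains.
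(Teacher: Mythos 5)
Your proposal is correct and is exactly the argument the paper intends: the corollary is stated as an immediate combination of the local-unitary-equivalence theorem and the exposedness of $w$, with partial transposition (which preserves $\setd$, as cited from \cite{Majewski:2010fk}) carrying the symmetry case over to the $3P_x$ case. You merely make explicit the routine invariance facts (local unitary conjugations and $\ptranspose$ are affine automorphisms of $\setd$, and such automorphisms transport exposing functionals), which the paper leaves unsaid.
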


\begin{remark}
  This corollary immediately follows from results in \cite{marciniak2011rank}
  and repeats the result already given in \cite{Majewski:2010fk} (which was
  obtained via convex analysis). Also the criterion given in
  \cite{chruscinski2012exposed} shows that transposition map is an exposed
  map. Moreover the proof of mentioned criterion allows us to construct other
  functionals 'supporting' exposedness of $w$, so such functionals are far
  from being unique. Despite those two overlaps we decided to presented the
  longer proof to make it more consistent with
  Section~\ref{sec:extr-posit-maps} and emphasizes some similarities between
  $n=2$ and $n=3$ cases.
\end{remark}

\subsection{Partial symmetries} 
\label{sec:part-sym}

It is easy to see that for $n=2$ there can be no partial symmetries belonging
to $\setd$. For $n=3$ the situation is different. The unitality condition $\Tr
\rho = 3$ and decomposition $\rho = p - q$ imply that $e=p+q$ must be of rank
5 or 7. Moreover, it is known that for $n=4$ maps corresponding to partial
symmetries can be exposed and indecomposable, see \cite{MR813392} and
\cite{sarbicki2012class}. This advocates the importance of examination of
partial symmetries in $n=3$ case. 

The easiest example of block positive symmetry can be obtained by perturbation
of swapping operator in $n=2$ embedded in $n=3$.

\begin{example}
  Let
  \begin{equation*}
    w = \sum_{i,j=1}^2 E_{ij}\otimes E_{ji},\qquad\text{and } 
    x = \frac1{\sqrt2} (e_1 +e_2)\otimes e_3
  \end{equation*}
  then
  \begin{equation*}
    s_0 = w + P_x
  \end{equation*}
  is an e-symmetry. The rank of $s_0^2$ is equal to 5. This map is coCP, as
  partial transpose of $s_0$ is a positive matrix. 
\end{example}

One observe that that the image of map corresponding to $s_0$ is $4$
dimensional subspace of $\malg 3$. Thus in fact it is a map of the form $\malg
3\to\malg2\hookrightarrow\malg3$. It is known that any map $\malg3\to\malg2$ is
decomposable. It is natural to ask if there are other e-symmetries such that
corresponding maps are not of this form. Affirmative answer is given by the
following examples.

\begin{example}
  Let $w$ be as in previous example and $x=e_3\otimes e_3$. Let us put
  \begin{equation*}
    s = w + P_x 
  \end{equation*}
  Then $s^2$ is a partial symmetry with $s^2$ of rank 5. It is block positive
  because partial transpose of $s$ is positive. Thus the above map is coCP.
  The image of this map is a 5 dimensional subspace of $\malg3$. 
\end{example}

Despite extensive study we did not find any example of partial symmetry $s$ in $\setd$ for $n=3$ with rank of $s^2$ equal to 7 nor we didn't found any partial symmetry corresponding to a non-decomposable map. This led us to following conjecture.

\begin{conjecture}
  For $n=3$, if $s\in\setd$ is a partial symmetry then (i) rank of $s^2$ is equal to 5, (ii) $s$ corresponds to decomposable positive map.
\end{conjecture}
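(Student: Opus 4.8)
The statement is a conjecture, so what follows is a proposed line of attack rather than a finished argument. Throughout I write $s=p-q=e-2q$, where $e=p+q$ is the support projector of $s$. Unitality ($\Tr\rho=3$) is equivalent to $\Tr_1 s=\bbone$, since $\Tr_1\rho_\phi=\phi(\bbone)$; this rearranges to $\Tr_1 p=\bbone+\Tr_1 q$. Counting ranks, $\operatorname{rank}p-\operatorname{rank}q=\Tr s=3$ while $\operatorname{rank}e=\operatorname{rank}p+\operatorname{rank}q$, so the genuine partial symmetries ($q\neq0$, $e\neq\bbone$) split into the rank-$5$ case ($\operatorname{rank}q=1$) and the rank-$7$ case ($\operatorname{rank}q=2$). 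The first thing I would record is that the block-positivity bound used throughout Section~\ref{sec:properties-of-symm}, namely $(x\otimes y,q\,x\otimes y)\le\tfrac12$ for unit product vectors, holds verbatim for partial symmetries: it uses only $s\gebp 0$ together with $e\le\bbone$. Hence Lemma~\ref{thm:rank2fullent} applies unchanged, and moreover, since $q\ge P_\xi$ for every eigenvector $\xi$ of $q$, the supremum over product vectors of $(x\otimes y,P_\xi\,x\otimes y)$ — which equals the square of the largest Schmidt coefficient of $\xi$ — is $\le\tfrac12$. Thus every eigenvector of $q$ has Schmidt rank $\ge2$ with all Schmidt coefficients $\le 1/\sqrt2$.

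For part (i) I would rule out the rank-$7$ case, where $q=P_{\xi_1}+P_{\xi_2}$ with $\xi_1,\xi_2$ orthonormal. By the dimension count recalled before Corollary~\ref{thm:schmidt-rank2}, at most one of $\xi_1,\xi_2$ can be fully entangled, leaving two sub-cases. If both are Schmidt rank $2$ (hence with coefficients $1/\sqrt2$ by Lemma~\ref{thm:rank2fullent}), I would rerun the building-block analysis of Lemmas~\ref{thm:not-twisted}--\ref{thm:not-both} and Corollary~\ref{thm:form-of-s}, now with two rather than three negative directions, to drive $\xi_1,\xi_2$ into a canonical form sharing a common third slot $e_3$, and then exhibit a rank-$1$ projector $P$ or a product test vector contradicting either block positivity of $s$ or the constraint $\Tr_1 p=\bbone+\Tr_1 q\le 3\bbone$ (equivalently $\Tr_1 q\le 2\bbone$ with $\Tr\Tr_1 q=2$). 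The remaining sub-case, with one fully entangled negative vector, I would treat via the refined bound $\lambda_{\max}^2\le\tfrac12$ on its Schmidt coefficients, which together with $\sum_k\lambda_k^2=1$ forces its reduced state toward the maximally mixed one, again aiming at a contradiction with $\Tr_1 p\le 3\bbone$ once the rank-$2$ partner is accounted for.

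For part (ii), granting (i), we are in the rank-$5$ case: $q=P_\xi$ with $\xi$ a single Schmidt-rank-$2$ vector of coefficients $1/\sqrt2$, and $\operatorname{rank}p=4$. Both worked examples preceding the conjecture are completely copositive, their partial transposes being positive, so I would aim to prove this in general. Using local unitary equivalence and Lemma~\ref{thm:taupad} to track $\ptranspose$, I would bring $\xi$ to the form $\tfrac1{\sqrt2}(e_1\otimes e_2-e_2\otimes e_1)$, analyze $p=e-q$, and show $\ptranspose(s)\ge0$, whence $\tau\circ\phi$ is completely positive and $\phi$ is decomposable. The sub-case in which the image of $\phi$ lies in a $\malg2$-corner of $\malg3$ is automatically decomposable, since all maps $\malg3\to\malg2$ are; the real content is the complementary sub-case where $\xi$ and $p$ genuinely involve all three dimensions.

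The main obstacle is part (i). In the full-symmetry analysis of Section~\ref{sec:properties-of-symm} the rigid identity $\Tr_1 q=\bbone$ (valid because $e=\bbone$ there) was the engine behind the $\alpha$-normalization arguments that pinned each eigenvector down; for a partial symmetry it is replaced by the much weaker $\Tr_1 q=\tfrac12(\Tr_1 e-\bbone)$, in which $\Tr_1 e$ is known only to be positive with trace $7$ and spectrum in $[1,3]$. Consequently the building-block lemmas do not transfer cleanly, and the two negative eigenvectors together with the two-dimensional kernel leave enough free parameters that excluding rank $7$ seems to require either a sharper global rank/dimension obstruction or a genuinely exhaustive case split — which is presumably why the statement is left open. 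A secondary difficulty is that, even with rank $5$ secured, proving decomposability in every case amounts to excluding a Choi-type non-decomposable partial symmetry, and $n=3$ is precisely the dimension where such maps first appear, so that step cannot be purely formal either.
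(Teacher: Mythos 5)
You are attempting a statement that the paper itself does not prove: it is stated as a conjecture, supported only by the two worked examples of rank-$5$ partial symmetries (both coCP) and the authors' remark that an extensive search produced neither a partial symmetry $s\in\setd$ with $\operatorname{rank}(s^2)=7$ nor one representing a non-decomposable map. So there is no proof in the paper to compare against, and your proposal, as you yourself state, is a programme rather than a proof. The parts you do establish are correct and coincide with the little the paper asserts: the trace condition $\Tr\rho=3$ together with $\operatorname{rank}p-\operatorname{rank}q=\Tr s$ and $\operatorname{rank}(s^2)=\operatorname{rank}p+\operatorname{rank}q$ gives the $5$/$7$ dichotomy for genuine partial symmetries (this is precisely the paper's one-sentence remark opening Section~\ref{sec:part-sym}); your observation that $(x\otimes y,q\,x\otimes y)\le\frac12(x\otimes y,e\,x\otimes y)\le\frac12$ for $e\le\bbone$, so that Lemma~\ref{thm:rank2fullent} and the Schmidt-coefficient bound survive for partial symmetries, is sound; and Lemma~\ref{thm:alpha-norm} does still give $\Tr(\bbone\otimes P_z)\,s=1$, now with the slack term $\Tr(\bbone\otimes P_z)\,e$ replacing the rigid value $3$.

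The genuine gap is that the two steps which would constitute a proof are exactly the ones you defer. For (i), ``rerun the building-block analysis \dots{} aiming at a contradiction'' is not an argument: the engine of Lemmas~\ref{thm:not-twisted}, \ref{thm:not-e3-e3} and \ref{thm:not-both} is the identity $\Tr(\bbone\otimes P_z)\,q=1$ for \emph{every} unit vector $z$, which holds only because $e=\bbone$ there; you correctly diagnose that in the rank-$7$ case it degrades to $\Tr(\bbone\otimes P_z)\,q=\frac12\big(\Tr(\bbone\otimes P_z)\,e-1\big)$ with $\Tr_1 e$ essentially unconstrained, but diagnosing why the paper's method fails to transfer is not a substitute for a method that works. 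For (ii), your plan overshoots the target: showing $\ptranspose(s)\ge0$ for every rank-$5$ partial symmetry in $\setd$ would prove that the corresponding map is coCP, which is strictly stronger than the conjectured decomposability (a sum of a CP and a coCP map); the two examples are consistent with coCP but do not force it, and nothing in your sketch indicates how positivity of $\ptranspose(s)$ would be verified once $q=P_\xi$ is normalized to the antisymmetric vector, since the rank-$4$ projector $p=e-q$ still carries many free parameters on which $\ptranspose$ acts without any positivity guarantee. In short, your preliminary reductions are correct and your assessment of the obstacles matches why the authors left the statement open, but neither (i) nor (ii) is proved.
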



\section{Final remarks} 
\label{sec:fr}

Up to now we have studied block positive symmetries as well as Choi matrices of the form $n P_x$, where $x$ is maximally entangled vector. To complete the picture let us focus for a while on Choi matrices of the form $p\otimes \bbone$.

\begin{proposition}
  \label{thm:p1-ext}

  Choi matrices of the form $p\otimes \bbone$, where $p$ is rank 1 projector,
  are extreme points of $\setd$ for any $n$.
\end{proposition}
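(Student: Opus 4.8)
The plan is to show that $p\otimes\bbone$ cannot be written as a proper convex combination of two distinct elements of $\setd$. Write $p=\ket{v}\bra{v}$ for a unit vector $v=\sum_k v_k e_k\in\complexes^n$; then in the block decomposition $p\otimes\bbone=\sum_{kl}E_{kl}\otimes(v_k\cconj{v_l}\bbone)$ the blocks are $(p\otimes\bbone)_{kl}=v_k\cconj{v_l}\bbone$, and the represented map is $a\mapsto(\cconj v,a\cconj v)\bbone$, which is manifestly positive, unital and normalized, so $p\otimes\bbone\in\setd$. Suppose now $p\otimes\bbone=\lambda\rho_1+(1-\lambda)\rho_2$ with $\rho_1,\rho_2\in\setd$ and $0<\lambda<1$; the aim is to force $\rho_1=\rho_2=p\otimes\bbone$. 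Writing $\rho_i=\sum_{kl}E_{kl}\otimes(\rho_i)_{kl}$, it suffices to prove $(\rho_i)_{kl}=v_k\cconj{v_l}\bbone$ for each $i$.

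The key observation is a zero-set argument. For any $x\perp v$ and any $y$ one has $(x\otimes y,(p\otimes\bbone)x\otimes y)=\abs{(v,x)}^2\norm{y}^2=0$. Since each $\rho_i$ is block positive, both terms in $0=\lambda(x\otimes y,\rho_1 x\otimes y)+(1-\lambda)(x\otimes y,\rho_2 x\otimes y)$ are nonnegative, hence each vanishes: $(x\otimes y,\rho_i x\otimes y)=0$ whenever $x\perp v$. Now fix $y$ and regard $Q_y^{(i)}(x)\defeq(x\otimes y,\rho_i x\otimes y)$ as a Hermitian form in $x$ (Hermiticity of $\rho_i$ makes it so); it is positive semidefinite and vanishes on the hyperplane $v^\perp$. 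A positive semidefinite form vanishing on a hyperplane must be a nonnegative multiple of $\abs{(v,x)}^2$, so there is $c_i(y)\ge0$ with $Q_y^{(i)}(x)=c_i(y)\abs{(v,x)}^2$. Comparing the two Hermitian forms in $x$ coefficient by coefficient yields $(y,(\rho_i)_{kl}y)=c_i(y)\,v_k\cconj{v_l}$ for all $k,l$.

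It remains to identify $c_i(y)$ and to remove the dependence on $y$. Setting $k=l$ and summing over $k$ gives $\sum_k(y,(\rho_i)_{kk}y)=c_i(y)\sum_k\abs{v_k}^2=c_i(y)$, while unitality of the map represented by $\rho_i$ gives $\sum_k(\rho_i)_{kk}=\bbone$, so the left side equals $\norm{y}^2$; hence $c_i(y)=\norm{y}^2$ for every $y$ and both $i$. Therefore $(y,(\rho_i)_{kl}y)=\norm{y}^2 v_k\cconj{v_l}=(y,(v_k\cconj{v_l}\bbone)y)$ for all $y$, and polarization over $\complexes$ forces $(\rho_i)_{kl}=v_k\cconj{v_l}\bbone$. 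Reassembling, $\rho_i=\ket{v}\bra{v}\otimes\bbone=p\otimes\bbone$ for $i=1,2$, which proves extremality.

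The main obstacle — really the only nontrivial point — is the passage from block positivity to the pointwise factorization $Q_y^{(i)}(x)=c_i(y)\abs{(v,x)}^2$: one must argue that a positive semidefinite Hermitian form vanishing on the codimension-one subspace $v^\perp$ has $v^\perp$ inside its radical (via Cauchy--Schwarz for semidefinite forms) and hence descends to the one-dimensional quotient. Everything after that is linear algebra: coefficient comparison of Hermitian forms, the unitality normalization $\sum_k(\rho_i)_{kk}=\bbone$, and complex polarization. I would also emphasize that the argument never fixes the dimension, so it delivers the claim for all $n$ at once.
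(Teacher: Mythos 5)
Your proof is correct, and its skeleton is the same as the paper's: decompose $p\otimes\bbone=\lambda\rho_1+(1-\lambda)\rho_2$, test on product vectors, and use block positivity to force each $\rho_i$ to vanish on $v^\perp\otimes\complexes^n$, hence to equal $p\otimes\bbone$. The differences are in execution, and they cut in your favor. The paper tests on $f\otimes g$ (where the value is $1$), and to conclude anything it must first establish $(f\otimes g,\sigma_i f\otimes g)\le 1$, which it gets from the $\alpha$-normalization identity $\Tr(\bbone\otimes P_g)\sigma_i=1$ (Lemma~\ref{thm:alpha-norm}); you instead test directly on $x\otimes y$ with $x\perp v$, where the left-hand side is $0$ and block positivity alone forces both terms to vanish, so no norm lemma is needed at that stage. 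More importantly, the paper's closing step (``so both $\sigma_i$ must equal $p\otimes\bbone$'') is asserted without justification, and it is precisely the content you supply: the Cauchy--Schwarz/radical argument showing that a positive semidefinite form vanishing on a hyperplane is a nonnegative multiple of $\abs{(v,x)}^2$, the identification $c_i(y)=\norm{y}^2$ via unitality (equivalently obtainable from Lemma~\ref{thm:alpha-norm}), and polarization in $y$ to recover the blocks $(\rho_i)_{kl}=v_k\cconj{c}\,$-- more precisely $v_k\cconj{v_l}\bbone$. In short, your argument is a self-contained version of what the paper sketches; what the paper's route buys is brevity, by leaning on its norm lemma and leaving the endgame implicit.
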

\begin{proof}
  Suppose that $p\otimes \bbone = \lambda \sigma_1 + (1 - \lambda) \sigma_2$
  and $p=\ket{f}\bra{f}$. Then for any vector $g$ we have 
  \begin{equation}
    \label{eq:conv-comb}
    (f\otimes g, p\otimes\bbone f\otimes g) = 1 = \lambda (f\otimes g, \sigma_1 f\otimes g)
    + (1 - \lambda)(f\otimes g, \sigma_2 f\otimes g).
  \end{equation}
  Due to $\alpha$-normalization of $\sigma_i$ we have $\Tr (\bbone\otimes P_g)
  \sigma_i = 1$. So
  \begin{equation*}
    \Tr P_f \otimes P_g \sigma_i + \Tr (\bbone - P_f)\otimes P_g \sigma_i = 1
  \end{equation*}
  Due to block positivity of $\sigma_i$ both terms must be greater than 0, so
  \begin{equation*}
    (f\otimes g, \sigma_i f\otimes g) \le 1,
  \end{equation*}
  but due to \eqref{eq:conv-comb} we need to have equality. Consequently for any $f'$
  orthogonal to $f$ and any $g$ we have that $(f'\otimes g, \sigma_i f'\otimes
  g)=0,$ so both $\sigma_i$ must equal $p\otimes\bbone$, so $p\otimes\bbone$
  is extremal.
\end{proof}

Consequently we conclude that the set
\begin{equation*}
  \tilde\setd = \set{\text{symmetries}, n P_x, p\otimes \bbone}
\end{equation*}
naturally arise as a subset of extremal Choi matrices. In fact we have shown
that symmetries, thus also Choi matrices of the form $n P_x$ are exposed
points of $\setd$ for $n=2,3$. We saw that the simple set $\tilde\setd$ is
enough to describe all regular extreme points of $\setd$ for $n=2$. We also
indicated how much deficient $\tilde\setd$ is for $n=3$ due to appearance of
\begin{enumerate}
\item partial symmetries (although their independence of $\tilde\setd$ for $n=3$ does not seem to be trivial and need further investigation),
\item non-decomposable maps, with Choi map as a standard example,
\item various concepts of extremality even when restricted to diagonal subalgebra.
\end{enumerate}


\section{Acknowledgments}

A partial support of the grant number N N202 208238 of Polish Ministry of
Science and Higher Education as well as of the grant of the Poland-South
Africa Cooperation Agreement is gratefully acknowledged. The contribution of
TT was supported within the International PhD Project "Physics of future
quantum-based information technologies", grant MPD/2009-3/4 from Foundation
for Polish Science. Authors are grateful to Marcin Marciniak and Paweł
Barbarski for valuable discussions.

\end{document}